\newtheorem{lemma}{{Lemma}}
\newtheorem{definition}{{Definition}}
\newtheorem{proposition}{{Proposition}}
\newtheorem{remark}{{Remark}}
\newtheorem{corollary}{{Corollary}}
\newenvironment{proof}{{\noindent\it Proof:}}{\hfill $\square$\par}
\definecolor{purple}{RGB}{128,0,128}
\newcommand{\revis}{\textcolor{black}}
\newcommand{\rev}{\textcolor{black}}
\newcommand{\ch}{\textcolor{black}}
\newcommand{\re}{\textcolor{black}}
\let\hat\widehat
\let\tilde\widetilde
\begin{document}

\title{{Channel State Information-Free Location-Privacy Enhancement: Fake Path Injection}}

\author{Jianxiu~Li,~\IEEEmembership{Graduate Student Member,~IEEE,}
        %Maxime~Ferreira~Da~Costa,~\IEEEmembership{Member,~IEEE,}
        and~Urbashi~Mitra,~\IEEEmembership{Fellow,~IEEE}%\vspace{-20pt}
\thanks{J. Li and U. Mitra are with the Department of Electrical and Computer Engineering, University of Southern California, Los Angeles, CA 90089, USA (e-mails: {jianxiul, ubli}@usc.edu).}
\thanks{{This paper was presented in part at the 2023 IEEE International Conference on Acoustics, Speech and Signal Processing (ICASSP 2023)~\cite{LiICASSP23}}. This work has been funded largely by the USC + Amazon Center on Secure and Trusted Machine Learning as well as in part by one or more of the following: NSF CCF-1817200, DOE DE-SC0021417, Swedish Research Council 2018-04359, NSF CCF-2008927, NSF CCF-2200221, NSF RINGS-2148313, NSF CIF-2311653, ONR 503400-78050, and ONR N00014-15-1-2550.}
}

\markboth{}
{LI AND MITRA: CHANNEL STATE INFORMATION-FREE LOCATION-PRIVACY ENHANCEMENT: FAKE PATH INJECTION}

\maketitle

\begin{abstract}
In this paper, a channel state information (CSI)-free, \re{\textit{fake path}} injection (FPI) scheme is proposed for location-privacy preservation. %Specifically, \textit{structured artificial noise} is designed to introduce virtual fake paths into the channels of the illegitimate devices. 
By leveraging the \textit{geometrical feasibility} of the fake paths, under mild conditions, it can be proved that the illegitimate device cannot distinguish between a fake and true path, thus degrading the illegitimate devices' ability to localize. Two closed-form, lower bounds on the illegitimate devices' estimation error are derived via the analysis of the Fisher information of the location-relevant channel parameters, thus characterizing the enhanced location privacy. A transmit \revis{precoder} is proposed, which efficiently injects the virtual fake paths. The intended device receives the two parameters of the \revis{precoder} design over a secure channel in order to enable localization. The impact of leaking the \revis{precoder} structure and the associated localization leakage are analyzed. Theoretical analyses are verified via simulation. Numerical results show that a $20$dB degradation of the illegitimate devices' localization accuracy can be achieved \rev{and validate the efficacy of the proposed \re{FPI} versus using \textit{unstructured} Gaussian noise or a CSI-dependent beamforming strategy.}
\end{abstract}

\begin{IEEEkeywords} 
Location privacy, localization, \re{fake path injection}, channel state information-free, \revis{precoding}.
\end{IEEEkeywords}

\IEEEpeerreviewmaketitle

%\vspace{-5pt}
\section{Introduction}
\label{sec:intro}
\IEEEPARstart{W}{ith} the wide deployment of the Internet-of-Things, location of user equipment (UE) is becoming an important commodity. The proliferation of wearables and the associated location-based services necessitate location-privacy preservation.  Unfortunately, most existing works on localization, such as \cite{Shahmansoori,Zhou,Li}, focus on how to leverage the features of the wireless signals to improve estimation accuracy without considering location privacy. These wireless signals that have the confidential location information embedded in them are easily exposed to the risk of eavesdropping, {\it i.e.}, illegitimate devices can determine the locations of the UE when the UEs request the location-based services from legitimate devices. Even worse, more private information, such as age, gender, and personal preference, can be snooped and inferred from the leaked location information as well \cite{Ayyalasomayajula}. Hence, it is critical to develop new strategies to limit location-privacy leakage (LPL).

%The location-privacy of user equipment {can be} significantly jeopardized by illegitimate devices that maliciously eavesdrop {on} the wireless channel. 

To combat eavesdropping, location-privacy preservation {has predominantly been} studied at the application and network layers \cite{Yu, Tomasin, Schmitt}. To {the best of our knowledge}, enabling physical-layer signals for protecting the {\bf location privacy} of UEs {has not been} fully investigated. Due to the nature of the propagation of electromagnetic waves, as long as the illegitimate devices can listen to the wireless channel and leverage the received signals for localization, the location privacy of the UE is inevitably jeopardized. Traditionally, to preserve the location privacy at the physical layer, the statistics of the channels, or the actual channels, have been exploited to prevent the illegitimate devices from easily extracting the location-relevant information from the wireless signals \cite{Goel,Tomasin2,Checa,Ayyalasomayajula}.  Example strategies include artificial noise injection \cite{Goel,Tomasin2} and beamforming design \cite{Checa,Ayyalasomayajula}. %. 
More precisely, to make the malicious localization harder, via artificial noise, the received signal-to-noise ratio (SNR) can be decreased for illegitimate devices; to maintain the legitimate localization accuracy, actual noise realizations have to be shared with the legitimate devices or the channel state information (CSI) is known to the UE so as to inject the artificial noise into the {\it null space} of the legitimate channel \cite{Goel,Tomasin2}. On the other hand, if Alice has the prior knowledge of the CSI for the illegitimate channel, the transmit beamformer can be specifically designed to hide the key location-relevant information \cite{Checa,Ayyalasomayajula}. In particular, by exploiting the CSI to design a transmit beamformer, \cite{Ayyalasomayajula} misleads the illegitimate device into incorrectly treating one non-line-of-sight (NLOS) path as the line-of-sight (LOS) path such that UE’s position cannot be accurately inferred, where the obfuscation is achieved by adding a delay for the transmission over the LOS path. However, all these existing physical-layer security designs \cite{Goel,Tomasin2,Checa,Ayyalasomayajula} highly rely on accurate or even perfect CSI and thus are costly.

In the sequel, motivated by the analysis of the structure of the channel model {in} \cite{Beygi, Elnakeeb, Li,Li2}, we propose a \re{\textit{fake path} injection (FPI)}-aided {location-privacy enhancement} scheme to reduce location-privacy leakage to illegitimate devices. A provably, statistically hard estimation problem is created, while providing localization guarantees to legitimate devices. In contrast to \cite{Goel, Checa,  Tomasin2,Ayyalasomayajula}, our design does not require CSI, which avoids extra channel estimation and reduces the overhead for resource-limited UEs.

The main contributions of this paper are:
\begin{itemize}
\item[1)] A general framework for \re{FPI}-aided location-privacy enhancement is proposed without CSI, where the intrinsic structure of the channel is explicitly exploited to design the \re{fake paths}.
\item[2)] The identifiability of the designed \re{fake paths} is investigated. The analysis indicates that the illegitimate devices cannot provably distinguish and remove the \re{fake paths}.
\item[3)]  In the presence of the proposed \re{fake paths}, two closed-form, lower bounds on the estimation error are derived for the illegitimate devices, suggesting appropriate values of key design parameters to enhance location privacy.
\item[4)] A \revis{precoding} strategy is designed to efficiently inject the \re{fake paths}, where LPL is mitigated, while the securely {\it shared information} (a secret key~\cite{Via,Schaefer,BashSecretKey,ZhangQ,Somalatha}) is characterized to maintain authorized devices' localization accuracy. Only two parameters need to be shared with the legitimate device. The potential leakage of the \revis{precoder} structure is also studied for the robustness of the proposed scheme.
%\item[5)] Theoretical analyses are validated via simulation and numerical comparisons {of} the localization accuracy of legitimate devices. The proposed method can {contribute to} $20$dB localization accuracy degradation for illegitimate devices even when the LOS path exists. If the structure of the designed \revis{precoder} is unfortunately leaked, there is still around a $4$dB accuracy degradation for the illegitimate device. {This degradation can be increased by creating more complex \revis{precoders}.}
\item[5)] Theoretical analyses are validated via simulation. \ch{As compared with the localization accuracy of legitimate devices,} the proposed method can {contribute to} $20$dB localization accuracy degradation for illegitimate devices even when the LOS path exists. If the structure of the designed \revis{precoder} is unfortunately leaked, there is still around a $4$dB accuracy degradation for the illegitimate device. {This degradation can be increased by creating more complex \revis{precoders}.}
\item[6)] For low SNRs, the proposed scheme is numerically shown to be more effective for location-privacy preservation, than the injection of \textit{unstructured} Gaussian noise, which requires the sharing of entire noise realizations. \rev{The accuracy degradation achieved by the {\bf CSI-free} FPI is also comparable to that of a {\bf CSI-dependent} beamforming design \cite{Ayyalasomayajula}.}%\vspace{-1pt}
\end{itemize}

The present work completes our previous work~\cite{LiICASSP23} with an analysis of  identifiability of the designed \re{fake paths}, which ensures that the \re{fake paths} are challenging for the illegitimate devices to remove. In addition, considering the injected \re{fake paths}, we derive two closed-form, lower bounds on the estimation error based on the Fisher information of the location-relevant channel parameters. %, theoretically validating that the estimation accuracy can be efficiently degraded with the SAN. 
{Our work is different from the analysis of the stability of the super-resolution problem \cite{Ankur,LiTIT,candes2014towards,LIACHA} with the goal of guaranteeing high estimation accuracy with a certain statistical resolution limit at high SNR. In contrast, our lower bounds are to validate that estimation accuracy can be efficiently degraded  for an eavesdropper, if the injected fake paths are close to the true paths in terms of the location-relevant parameters. We note that,  in the presence of noise, the stability of the Fisher information matrix (FIM) of the line spectral estimation problem has been studied  (see \textit{e.g.}  \cite{MaximeISIT}) for a new algorithm-free resolution limit; our analysis is tailored to location-privacy preserving design with multi-dimensional signals.} \rev{After the conference version of this work \cite{LiICASSP23} was published, FPI was further investigated for a SIMO system in \cite{FPISIMO} and a FPI-based deceptive jamming design was examined in \cite{yildirim2024deceptive}. }

The rest of this paper is organized as follows. Section~\ref{sec:signal} introduces the signal model adopted for localization. Section~\ref{sec:framework} presents the proposed CSI-free location-privacy enhancement framework, with the \re{fake paths} designed according to the intrinsic channel structure. In Section~\ref{sec:iden}, the identifiability of the injected \re{fake paths} is investigated to show that such \re{fake paths} cannot be removed. In the presence of the \re{fake paths}, two closed-form, lower bounds on the estimation error are derived and analyzed in Section~\ref{sec:crlb}, proving the efficacy of the proposed scheme. Section \ref{sec:method} proposes a transmit \revis{precoder} to practically inject the designed \re{fake paths}. The \revis{precoder} design does not rely on \ch{the} CSI. Numerical results are provided in Section~\ref{sec:sim} to validate the theoretical analyses and highlight the performance degradation caused by the \re{FPI} to the eavesdropper. Conclusions are drawn in Section~\ref{sec:con}. Appendices \ref{sec:proofiden}, \ref{sec:proofconverge}, \ref{sec:prooferrorbound}, \rev{\ref{sec:proofserrorbound}}, and \ref{sec:proofsingularFIMbeamformer} provide the proofs for the key technical results.

%\subsection{Notations}  
We use the following notation. Scalars are denoted by lower-case letters $x$ and column vectors by bold letters $\bm{x}$. The $i$-th element of $\bm{x}$ is denoted by $\bm{x}[i]$. Matrices are denoted by bold capital letters $\bm X$ and $\boldsymbol{X}[i, j]$ is the ($i$, $j$)-th element of $\bm{X}$. The operators $\lfloor{x}\rfloor$, $|x|$, $\|\bm  x\|_{2}$, $\mathfrak{R}\{x\}$, $\mathfrak{I}\{x\}$, and $\operatorname{diag}(\mathcal{A})$ represent the largest integer that is less than or equal to $x$, the magnitude of $x$, the $\ell_2$ norm of $\bm x$, the real part of $x$, the imaginary part of $x$, and a diagonal matrix whose diagonal elements are given by $\mathcal{A}$, respectively. $\boldsymbol{I}_{l}$ stands for a $l\times l$ identity matrix and $\mathbb{P}(\cdot)$ is reserved for the probability of an event. %The operators $\mathbb{E}\{\cdot\}$ and $\operatorname{DE(\cdot)}$ denote the expectation and the differential entropy of a random variable.
The operators $\mathbb{E}\{\cdot\}$ denotes the expectation of a random variable. The operators $\operatorname{Rank}(\cdot)$, $\operatorname{Tr}(\cdot)$, $\operatorname{det}(\cdot)$, $(\cdot)^\mathrm{T}$, and $(\cdot)^{\mathrm{H}}$ are defined as the rank of a matrix, the trace of a matrix, the determinant of a matrix, the transpose of a matrix or vector, and the conjugate transpose of a vector or matrix, respectively. \rev{Key symbols are listed in Table \ref{tab:notations}}.

\begin{table*}
    \centering
    \caption{\rev{Key Symbols in the System and Design}}
    \rev{\begin{tabular}{|c||c|c||c|}
        \hline Notation & Meaning & Notation & Meaning \\
        \hline$\bm p$ & location of Alice & $\bm q$ & location of Bob \\
        \hline$\bm z$ & location of Eve & $\bm v_k$ & location of the $k$-th scatterer \\
        \hline$K$ & the number of NLOS paths & $N$ & the number of sub-carriers \\
        \hline$N_t$ & the number of transmit antennas & $G$ & the number of OFDM pilot signals \\
        \hline$d$ & distance between antennas & $\sigma^2$ & variance of complex Gaussian noise \\
        \hline$c$ & speed of light & $T_s$ & sampling period \\
        \hline$B$ & bandwidth & $\lambda_c$ & wavelength \\
        \hline$\varphi_c$ & central carrier frequency  & $x^{(g,n)}$ & the $g$-th symbol transmitted over the $n$-th sub-carrier \\
        \hline$\bm f^{(g,n)}$ & {\makecell[c]{beamforming vector for the $g$-th  transmission \\ over the $n$-th sub-carrier  }}  & $\bm s^{(g,n)}$ & the $g$-th pilot signal over the $n$-th sub-carrier \\
        \hline$y^{(g,n)}$ & the $g$-th signal received over the $n$-the sub-carrier  & $ w^{(g,n)}$ & {\makecell[c]{complex Gaussian noise for  \\ the $g$-th  transmission over the $n$-th sub-carrier  }}\\
        \hline$y_\text{Bob}^{(g,n)}$ & received signal for Bob  & $ w_\text{Bob}^{(g,n)}$ & complex Gaussian noise for Bob\\
        \hline$y_\text{Eve}^{(g,n)}$ & received signal for Eve  & $ w_\text{Eve}^{(g,n)}$ & complex Gaussian noise for Eve\\
        \hline$\bm h^{(n)}$ & the $n$-th sub-carrier public channel vector  & $\bm{\tilde{ h}}^{(n)}$ & the $n$-th sub-carrier fake channel vector  \\
        \hline$\bm h_{\text{Eve}}^{(n)}$ & the $n$-th sub-carrier public channel vector for Bob  & $\bm{{ h}}_{\text{Eve}}^{(n)}$ & the $n$-th sub-carrier public channel vector for Eve  \\
        \hline$\gamma_k$ & complex channel coefficient of the $k$-th path  & $\tau_k$ & time-of-arrival of the $k$-th path \\
        \hline$\theta_{\text{Tx},k}$ & angle-of-departure of the $k$-th path  & $\bm \alpha(\theta_\text{Tx})$ & steering vector with respect to $\theta_{\text{Tx}}$  \\
        \hline$\tilde{\gamma}_{\tilde{k}}$ & artificial channel coefficient of the $\tilde{k}$-th path  & $\tilde{\tau}_{\tilde{k}}$ & artificial time-of-arrival of the $\tilde{k}$-th path \\
        \hline$\tilde{\theta}_{\text{Tx},\tilde{k}}$ & artificial angle-of-departure of the $\tilde{k}$-th path  & $\tilde{K}+1$ & the number of fake paths  \\
        \hline$\xi^{(g,n)}$ &  {\makecell[c]{structured artificial noise for  \\ the $g$-th  transmission over the $n$-th sub-carrier  }} & $\bar{\delta}_\tau$, $\bar{\delta}_{\theta_{\text{TX}}}$ & parameters used for precoding design\\
        \hline $\bar{\bm\delta}$ & a vector consisting of $\bar{\delta}_\tau$ and $\bar{\delta}_{\theta_{\text{TX}}}$ &$\bm\Phi^{(n)}$&  {\makecell[c]{precoder for the transmissions over the $n$-th sub-carrier  }} \\
        \hline $\tilde{\bm s}^{(g,n)}$ & effective pilot signal constructed based on $\bm s^{(g,n)}$&$\bm{\tilde{ h}}_\text{Eve}^{(n)}$ & the $n$-th sub-carrier fake channel vector for Eve \\
        \hline
    \end{tabular}}
    
    \label{tab:notations}
\end{table*}

\vspace{-10pt}
\section{System Model}\label{sec:signal}
We consider a legitimate device (Bob) serving {a}  UE (Alice), as shown in Figure 1. The locations of Alice and Bob are denoted by $\bm p=[p_x,p_y]^{\mathrm{T}}\in \mathbb{R}^{2}$ and $\bm q=[q_x,q_y]^{\mathrm{T}}\in \mathbb{R}^{2}$, respectively. To acquire location-based services, Alice transmits pilot signals to Bob, while Bob estimates Alice's position based on the received signals and his location $\bm q$. We assume that the pilot signals are known {to} Bob. An illegitimate device (Eve) exists at location $\bm z=[z_x,z_y]^{\mathrm{T}}\in \mathbb{R}^{2}$. {Eve} also knows the {same} pilot signals and her location $\bm z$. By eavesdropping {on} the channel to infer $\bm p$, Eve jeopardizes Alice's location privacy.
% consider: , which is assumed to be close to Bob

\begin{figure}[t]
    \centering
    \includegraphics[width=0.49\textwidth]{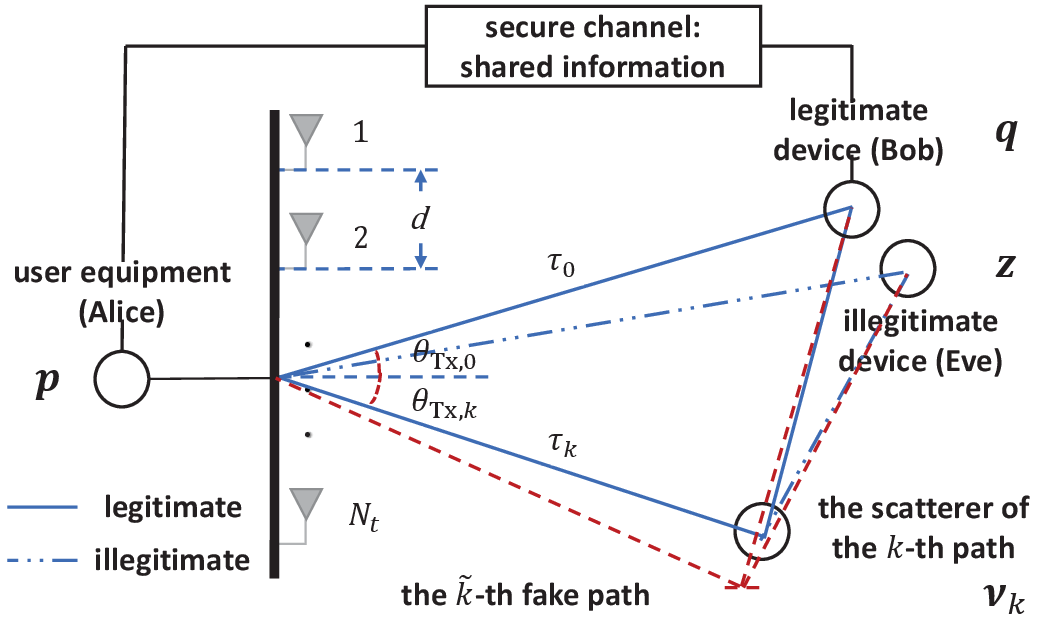}\vspace{-10pt}
    \caption{{ System model.}}\vspace{-15pt}
    \label{systemmodel}
\end{figure}

Without of loss of generality, we adopt the millimeter-wave (mmWave) multiple-input-single-output (MISO) orthogonal frequency-division multiplexing (OFDM) channel model of \cite{Fascista} for transmissions, where Alice is equipped with $N_t$ antennas, while both Bob and Eve have a single antenna\footnote{{The location-privacy enhancement framework proposed in Section \ref{sec:framework} can be extended for the single-input-multiple-output (SIMO) systems and  multiple-input-multiple-output (MIMO) systems.}}. Assume that $K+1$ paths, {\it i.e.,} one LOS path and $K$ NLOS paths, exist in the MISO OFDM channel, and the scatterer of the $k$-th NLOS path is located at an unknown position $\bm v_k=[v_{k,x},v_{k,y}]^{\mathrm{T}}\in \mathbb{R}^{2}$, for $k=1,2,\cdots,K$. We transmit $G$ OFDM pilot signals via $N$ sub-carriers with central carrier frequency $\varphi_c$ and bandwidth $B$. It is assumed that a narrowband channel is employed, {\it i.e.,} $B\ll \varphi_c$. Denoting by $x^{(g,n)}$ and $\bm f^{(g,n)}\in\mathbb{C}^{N_t\times 1}$ the $g$-th symbol transmitted over the $n$-th sub-carrier and the corresponding beamforming vector, respectively, we can express the $g$-th pilot signal over the $n$-th sub-carrier as
$\boldsymbol{s}^{(g,n)}\triangleq \bm f^{(g,n)}x^{(g,n)}\in\mathbb{C}^{N_t\times 1}$ and write the received signal ${y}^{(g,n)}$ as %\vspace{-3pt}
\begin{equation}
{y}^{(g,n)}=\boldsymbol{h}^{(n)}  \boldsymbol{s}^{(g,n)}+{w}^{(g,n)},\label{rsignal}%\vspace{-3pt}
\end{equation}
for $n=0,1,\cdots,N-1$ and $g=1,2,\cdots,G$, where ${w}^{(g,n)}\sim \mathcal{CN}({0},\sigma^2)$ is an independent, zero-mean, complex Gaussian noise with variance $\sigma^2$, and $\bm h^{(n)}\in\mathbb{C}^{1\times N_t}$ represents the $n$-th sub-carrier public channel vector. We assume that the pilot signals transmitted over the $n$-th sub-carrier are independent and identically distributed and $\mathbb{E}\{\boldsymbol{s}^{(g,n)}(\boldsymbol{s}^{(g,n)})^{\mathrm{H}}\}= \frac{1}{N_t}\bm I_{N_t}$ holds for any $g$ and $n$. Denote by $c$, $d$, $T_s$ and $\bm{a}_L(f) \in \mathbb{C}^L$ the speed of light, the distance between antennas, the sampling period $T_s\triangleq\frac{1}{B}$, and the unit-norm Fourier vector
\(
    \bm{a}_L(\vartheta) \triangleq  \frac{1}{\sqrt{L}}\left[1, e^{-j 2\pi \vartheta}, \dots, e^{-j 2\pi (L-1)\vartheta}\right]^{\mathrm{T}}
\), respectively. The public channel vector $\bm h^{(n)}$ is defined as %\vspace{-3pt}
\begin{equation}
\boldsymbol{h}^{(n)}\triangleq\sqrt{N_t}\sum_{k=0}^{K}\gamma_k e^{\frac{-j 2\pi n\tau_k}{N T_{s}}}\boldsymbol{ \alpha}\left(\theta_{\mathrm{Tx},k}\right)^{\mathrm{H}},
\label{channelmatrix_subcarrier} %\vspace{-3pt}
\end{equation}
where $k=0$ corresponds to the LOS path, $\gamma_k$ represents the complex channel coefficient of the $k$-th path, while the steering vector $\boldsymbol{\alpha}(\theta_{\mathrm{Tx}})$ is defined as  $\boldsymbol{\alpha}(\theta_{\mathrm{Tx}}) \triangleq \bm{a}_{N_t}\left(\frac{d \sin(\theta_{\mathrm{Tx}})}{\lambda_c} \right)$ with $\lambda_c\triangleq\frac{c}{\varphi_c}$ being the wavelength. According to the geometry, the location-relevant channel parameters of each path,  {\it i.e.,} time-of-arrival (TOA) $\tau_{k}$ and angle-of-departure (AOD) $\theta_{\mathrm{Tx}, k}$, are given by %\vspace{-10pt}
\begin{subequations}\label{eq:geometry}
\begin{align}
\tau_{k} &=\frac{\left\|\boldsymbol{v}_0-\boldsymbol{v}_{k}\right\|_{2} +\left\|\boldsymbol{p}-\boldsymbol{v}_{k}\right\|_{2}} {c},\\
\theta_{\mathrm{Tx}, k} &=\arctan \left(\frac{v_{k,y}-p_{y}} {v_{k,x}-p_{x}}\right),  %\vspace{-3pt}
\end{align}
\end{subequations} 
for $k=0,1,\cdots,K$, where $\bm v_0\triangleq \bm q$ (or $\bm v_0\triangleq \bm z$) holds for Bob (or Eve). In addition, we assume $\frac{\tau_k}{NT_s } \in(0,1]$ and ${\frac{d \sin \left(\theta_{\mathrm{Tx},k}\right)}{\lambda_{c}} }\in(-\frac{1}{2},\frac{1}{2}]$ as {in} \cite{Li}.

Given the pilot signals, the location of Alice can be estimated using all the received signals. Hereafter, we denote by ${y}^{(g,n)}_{\text{Bob}}$ and ${y}^{(g,n)}_{\text{Eve}}$ the signals received by Bob and Eve, respectively, which are defined according to Equation \eqref{rsignal}, while the public channels for Bob and Eve, denoted as $\boldsymbol{h}^{(n)}_{\text{Bob}}$ and $\boldsymbol{h}^{(n)}_{\text{Eve}}$, are modelled according to Equation \eqref{channelmatrix_subcarrier}, {as a function of} their locations. Note that {the} CSI is assumed to be unavailable {to Alice}. {We aim} to reduce {the} LPL to Eve with the designed \re{FPI}, providing localization guarantees to Bob with the shared information transmitted over a secure channel.

\vspace{-5pt}
\section{\re{Fake Path Injection}-Aided Location-Privacy Enhancement }\label{sec:framework} 
In this section, we present a general framework for CSI-free location-privacy enhancement with \re{the FPI}. Since it is assumed that the CSI is unknown, we inject the \re{fake paths} tailored to the intrinsic structure of the channel, \re{which can be considered as \textit{structured artificial noise}}, to effectively prevent Eve from accurately inferring Alice's position, and characterize the securely shared information to maintain Bob's localization accuracy.

According to the analysis of the atomic norm minimization based localization \cite{Li,TangTIT15}, high localization accuracy relies on super-resolution channel estimation; to ensure the quality of the estimate, the TOAs and AODs of the $K + 1$ paths need to be sufficiently separated, respectively. Inspired by \cite{Beygi, Elnakeeb, Li,Li2}, we propose the \re{FPI} to degrade the structure of the channel for location-privacy preservation, where the minimal separation for TOAs and AODs, {\it i.e.,} $\Delta_{\min}\left(\left\{\frac{\tau_k}{NT_s}\right\}\right)$ and $\Delta_{\min}\left(\left\{\frac{d \sin \left(\theta_{\mathrm{Tx},k}\right)}{\lambda_c}\right\}\right)$, is reduced, respectively\footnote{{For the atomic norm minimization based localization method in \cite{Li} with the mmWave MIMO OFDM signaling, the conditions $\Delta_{\min}\left(\left\{\frac{\tau_k}{NT_s}\right\}\right)\geq\frac{1}{\left\lfloor\frac{N-1}{8}\right\rfloor}$ and $\Delta_{\min}\left(\left\{\frac{d \sin \left(\theta_{\mathrm{Tx},k}\right)}{\lambda_c}\right\}\right)\geq\frac{1}{\left\lfloor\frac{N_t-1}{4}\right\rfloor}$ are desired.}}, with \vspace{-3pt}
\begin{equation}\label{eq:defminsep}
    \Delta_{\min}(\{\kappa_k\})\triangleq\min_{k\neq k^{\prime}}\min(|\kappa_k-\kappa_{k^{\prime}}|,1-|\kappa_k-\kappa_{k^{\prime}}|). \vspace{-3pt}
\end{equation}

To be more precise, in this framework, \re{denoting by $\tilde{K}$ an integer assumed to be greater than or equal to $K$, we can design a virtual fake channel with $\tilde{K}+1$ fake paths according to the channel structure as 
\begin{equation}
    \tilde{\bm h}^{(n)}\triangleq\sqrt{N_t}\sum_{\tilde{k}=0}^{\tilde{K}}\tilde{\gamma}_{\tilde{k}} e^{\frac{-j 2\pi n\tilde{\tau}_{\tilde{k}}}{N T_{s}}}\boldsymbol{ \alpha}\left(\tilde{\theta}_{\mathrm{Tx},\tilde{k}}\right)^{\mathrm{H}}\in\mathbb{C}^{1\times N_t},
    \label{fakechannel} %\vspace{-3pt}
\end{equation}
where $\tilde{\gamma}_{\tilde{k}}$, $\tilde{\tau}_{\tilde{k}}$, and $\tilde{\theta}_{\mathrm{Tx},\tilde{k}}$ can be interpreted as the artificial channel coefficient, TOA, and AOD of the $\tilde{k}$-th {\textbf{fake}} path, respectively. With the injection of these fake paths to the original mmWave MISO OFDM channel, which is also shown in Figure \ref{systemmodel}, the received signal {used} by Eve is %\vspace{-3pt}
\begin{equation}
\begin{aligned}
{y}^{(g,n)}_{\text{Eve}}&=\left({\bm h}^{(n)}_{\text{Eve}}+\tilde{\bm h}^{(n)}\right)\boldsymbol{s}^{(g,n)}+{w}_{\text{Eve}}^{(g,n)}\\
&={\bm h}^{(n)}_{\text{Eve}}  \boldsymbol{s}^{(g,n)}+ \xi^{(g,n)}+{w}_{\text{Eve}}^{(g,n)},
\end{aligned}\label{rsignalnEve} %\vspace{-3pt}
\end{equation}
where ${w}_{\text{Eve}}^{(g,n)}\sim \mathcal{CN}({0},\sigma^2)$. As seen in Equation \eqref{rsignalnEve}, the proposed FPI equivalently adds the structured artificial noise, denoted as $\xi^{(g,n)}$, to the $g$-th received signal transmitted over the $n$-th sub-carrier, with $n=0,1,\cdots,N-1$ and $g=1,2,\cdots,G$, \textit{i.e.,}}
\begin{equation}\label{eq:san}
{\xi}^{(g,n)}\triangleq\tilde{\bm h}^{(n)}\boldsymbol{s}^{(g,n)}. \vspace{-3pt}
\end{equation}
Suppose that these artificial channel parameters are designed to be individually close to the true channel parameters, the introduced fake paths heavily overlap with the true paths and the minimal separation for TOAs and AODs is thus reduced according to the definition of $\Delta_{\min}(\cdot)$ in Equation \eqref{eq:defminsep}, degrading the channel structure.

On the other hand, Bob is also affected by the \re{FPI}. To alleviate the distortion caused by the \re{FPI} for Bob, we assume Alice transmits the {shared information} $\{\{\tilde{\gamma}_{\tilde{k}}\}, \{\tilde{\tau}_{\tilde{k}}\}, \{\tilde{\theta}_{\mathrm{Tx},\tilde{k}}\}\}$\footnote{{The amount of shared information is determined by the design for the \re{FPI}; one specific design will be provided in Section \ref{sec:method}. }} to Bob over a secure channel that is inaccessible by Eve. Then, Bob can exploit the shared information to \rev{compensate for} the \re{fake paths} and the signal employed by Bob for localization is still \rev{in the form of}\footnote{Error in the shared information will reduce Bob’s localization accuracy, which is affected by quantization, channel statistics, communication protocol, {\it etc}. To show the efficacy of our design, perfect shared information is assumed; the study of such error is beyond the scope of this paper.} %\vspace{-3pt}
\begin{equation}
{y}^{(g,n)}_{\text{Bob}}={\bm h}^{(n)}_{\text{Bob}}  {\boldsymbol{s}}^{(g,n)}+{w}_{\text{Bob}}^{(g,n)},\label{rsignalnBob} %\vspace{-3pt}
\end{equation}
where ${w}_{\text{Bob}}^{(g,n)}\sim \mathcal{CN}({0},\sigma^2)$. %The signal $\tilde{\boldsymbol{s}}^{(g,n)}\in\mathbb{C}^{N_t\times 1}$ is determined by ${\boldsymbol{s}}^{(g,n)}$ and the shared information, which is known to Bob to maintain localization accuracy. 
In contrast to Bob who has the access to the shared information, the \re{FPI} distorts the structure of Eve's channel and thus degrades her eavesdropping ability. We note that the proposed location-privacy enhancement framework does not rely on the CSI and any specific estimation method. In Section \ref{sec:method}, to decrease the minimal separation without \ch{the} CSI, an efficient strategy for the \re{FPI} will be shown, where the virtually added fake paths are further elaborated {upon}, while the specific design of $\tilde{\gamma}_{\tilde{k}}$, $\tilde{\tau}_{\tilde{k}}$, $\tilde{\theta}_{\mathrm{Tx},\tilde{k}}$ and {the accompanying shared information} are provided.

\vspace{-5pt}
\section{Identifiability of \re{Fake Paths}}\label{sec:iden}
To enhance the location privacy, \re{FPI} is proposed in Section \ref{sec:framework} to distort the structure of Eve's channel. However, if Eve can distinguish and remove the designed \re{fake paths}, the efficacy of the proposed method would be severely degraded. Hence, the identifiability of the proposed \re{fake paths} is investigated in this section and conditions are provided to ensure that Eve cannot distinguish the \re{fake paths}.

%As seen in Section \ref{sec:framework}, injecting the proposed SAN equivalently adds fake paths to the original channel. %that can be highly correlated with the true paths using proper parameter design. 
%Hence, 
%To show that the proposed \re{fake paths} is not identifiable, it is equivalent to prove that Eve cannot distinguish the injected fake paths from the true paths. We introduce the following notion of the feasibility of paths.
\re{To show that Eve cannot distinguish the injected fake paths from the true paths, we introduce the following notion of the feasibility of paths.}

\begin{definition}\label{def:feasible}
Given a pair of TOA and AOD parameters, {\it i.e.,} $\{\tau,\theta_{\text{Tx}}\}$, a path characterized by $\{\tau,\theta_{\text{Tx}}\}$ is {\it geometrically feasible} if there exists a scatterer at location $\bm v$ that can be mapped to the parameters $\{\tau,\theta_{\text{Tx}}\}$ according to Equation \eqref{eq:geometry}.
\end{definition}

From Definition \ref{def:feasible}, to make the introduced fake paths indistinguishable, we need to show that they are geometrically feasible, where the required conditions are provided in the following proposition.

\begin{proposition}\label{prop:iden}
%With the injection of the structured artificial noise via fake paths proposed in Equation \eqref{eq:san}, Eve cannot distinguish between the equivalently added fake paths and the true paths if $c\tilde{\tau}_{\tilde{k}}\geq\|\bm z- \bm p\|_2$ holds for ${\tilde{k}}=0,1,\cdots,{\tilde{K}}$. 
With the \rev{FPI proposed in Equation \eqref{rsignalnEve}}, Eve cannot distinguish between the added fake paths and the true paths if $c\tilde{\tau}_{\tilde{k}}\geq\|\bm z- \bm p\|_2$ holds for ${\tilde{k}}=0,1,\cdots,{\tilde{K}}$. 
\end{proposition}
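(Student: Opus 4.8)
The plan is to show that every fake path, described by the pair $\{\tilde{\tau}_{\tilde{k}},\tilde{\theta}_{\mathrm{Tx},\tilde{k}}\}$, is geometrically feasible in the sense of Definition \ref{def:feasible} with respect to Eve's geometry, i.e.\ with $\bm v_0 \triangleq \bm z$ and transmitter location $\bm p$. Once this is established, each fake path is consistent with the existence of some scatterer location $\bm v$ that Eve could legitimately infer, so Eve has no geometric test that flags it as spurious; it is then indistinguishable from a true NLOS path and cannot be removed. So the proposition reduces to: for every $\tilde{k}$, exhibit a point $\bm v \in \mathbb{R}^2$ such that
\begin{subequations}
\begin{align}
c\tilde{\tau}_{\tilde{k}} &= \|\bm z - \bm v\|_2 + \|\bm p - \bm v\|_2,\\
\tilde{\theta}_{\mathrm{Tx},\tilde{k}} &= \arctan\!\left(\frac{v_y - p_y}{v_x - p_x}\right).
\end{align}
\end{subequations}

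First I would reduce this to a one-dimensional existence question. The AOD constraint says $\bm v$ must lie on the ray emanating from $\bm p$ in the direction $\bm u(\tilde\theta) \triangleq [\cos\tilde{\theta}_{\mathrm{Tx},\tilde{k}}, \sin\tilde{\theta}_{\mathrm{Tx},\tilde{k}}]^{\mathrm T}$; parametrize $\bm v = \bm p + r\,\bm u(\tilde\theta)$ with $r \ge 0$. Then $\|\bm p - \bm v\|_2 = r$, and I must solve $g(r) \triangleq r + \|\bm z - \bm p - r\,\bm u(\tilde\theta)\|_2 = c\tilde{\tau}_{\tilde{k}}$ for some $r \ge 0$. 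The function $g$ is continuous on $[0,\infty)$, with $g(0) = \|\bm z - \bm p\|_2$, and $g(r) \to \infty$ as $r \to \infty$ (since $r$ dominates the triangle-inequality bound $\|\bm z - \bm p\|_2 + r$ from below, $g(r) \ge r - \|\bm z-\bm p\|_2 \to \infty$... more simply $g(r)\ge 2r - \|\bm z - \bm p\|_2$). By the intermediate value theorem, $g$ attains every value in $[\,\|\bm z - \bm p\|_2,\infty)$. Hence a solution $r \ge 0$ exists precisely when $c\tilde{\tau}_{\tilde{k}} \ge \|\bm z - \bm p\|_2$, which is exactly the hypothesis of the proposition. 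This gives the required scatterer location $\bm v$, proving geometric feasibility of the $\tilde{k}$-th fake path.

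Two auxiliary points need care. One is that the excerpt assumed the normalized delay and angle satisfy $\frac{\tilde\tau}{N T_s}\in(0,1]$ and $\frac{d\sin\tilde\theta}{\lambda_c}\in(-\tfrac12,\tfrac12]$; these are design constraints on the fake parameters that I would note are compatible with the construction (they simply restrict the admissible window, not the feasibility argument itself), and they also ensure the fake path is not aliased with respect to the true paths so that it genuinely reduces $\Delta_{\min}(\cdot)$. The other is the degenerate case $c\tilde{\tau}_{\tilde{k}} = \|\bm z - \bm p\|_2$ exactly, where the only solution is $r=0$, i.e.\ the "scatterer" sits at $\bm p$ and the path degenerates to the LOS path — this is still feasible, and is precisely why the LOS fake path ($\tilde{k}=0$) is allowed to meet the bound with equality. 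The main obstacle is really just packaging this cleanly: making explicit that feasibility is to be checked against Eve's own assumed geometry (her location $\bm z$ plays the role of $\bm v_0$), and confirming that indistinguishability follows from feasibility — that is, that Eve's only leverage for discarding a path is a geometric inconsistency, so geometric feasibility of all $\tilde{K}+1$ fake paths leaves her unable to separate them from the $K+1$ true ones. I expect the analytic core (the IVT argument) to be short; the write-up effort goes into the geometric setup and the boundary case.
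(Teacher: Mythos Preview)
Your argument is correct and shares the paper's setup exactly: parametrize the putative scatterer on the ray $\bm v = \bm p + r\,[\cos\tilde\theta_{\mathrm{Tx},\tilde{k}},\sin\tilde\theta_{\mathrm{Tx},\tilde{k}}]^{\mathrm T}$ and solve for $r\ge 0$ so that the two-leg path length equals $c\tilde\tau_{\tilde{k}}$. The difference is in how existence is established. You invoke the intermediate value theorem on $g(r)=r+\|\bm z-\bm p-r\bm u\|_2$, using $g(0)=\|\bm z-\bm p\|_2$ and $g(r)\to\infty$; the paper instead writes down the closed-form solution
\[
b_{\tilde{k}}=\frac{(c\tilde\tau_{\tilde{k}})^2-\|\bm z-\bm p\|_2^2}{2\bigl(c\tilde\tau_{\tilde{k}}-(z_x-p_x)\cos\tilde\theta_{\mathrm{Tx},\tilde{k}}-(z_y-p_y)\sin\tilde\theta_{\mathrm{Tx},\tilde{k}}\bigr)}
\]
and then verifies algebraically (via Cauchy--Schwarz-type bounds on $\dot z_x\cos\tilde\theta+\dot z_y\sin\tilde\theta$) that $0\le b_{\tilde{k}}\le c\tilde\tau_{\tilde{k}}$. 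Your route is shorter and sidesteps the edge case where the explicit formula degenerates to $0/0$. The paper's route, however, yields the actual virtual-scatterer location $\tilde{\bm v}_{\tilde{k}}$, which it references explicitly in the discussion following the proposition; if you adopt the IVT argument, you lose that constructive byproduct unless you add it back in separately.
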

\begin{proof}
See Appendix \ref{sec:proofiden}.
\end{proof}
According to Proposition \ref{prop:iden}, with appropriate choice of the design parameters, the \re{fake paths} cannot be identified and thus cannot be removed. With the \re{FPI}, Eve will incorrectly believe that there are $K+\tilde{K}+1$ NLOS paths associated with $K+\tilde{K}+1$ scatterers at positions ${\bm v}_k$ and  $\tilde{\bm v}_{\tilde{k}}$ with $k=1,2,\cdots,K$ and $\tilde{k}=0,1,\cdots,\tilde{K}$, where $\tilde{\bm v}_k$ is given in Equation \eqref{eq:tildevk} and can be evaluated by substituting Equation \eqref{eq:bk} into Equation \eqref{eq:tildevk}. The analysis of the identifiability of the proposed \re{fake paths} suggests the rationality of the proposed method. In Section \ref{sec:crlb}, the efficacy of the designed \re{FPI} will be investigated. 

\vspace{-3pt}
\section{Fisher Information of Channel Parameters with \re{Fake Path Injection}}\label{sec:crlb}
Since localization accuracy highly relies on the quality of the estimates of channel parameters, the Fisher information of the {location-relevant} channel parameters is analyzed in this section to show that \re{the FPI} results in a reduction of the minimal separation and thus effectively increases the estimation error.  {As an example, we see that \cite{Li} determines the minimal separation which is a sufficient condition for uniqueness and optimality of the proposed localization algorithm.}
To this end, we first present the exact expression for the FIM\footnote{Though similar derivations can be found in \cite{Fascista}, we show the FIM herein for consistency, which will be used for the analysis of our design.}. Then, two lower bounds on the estimation error are derived and analyzed based on {an {\it asymptotic}} FIM, suggesting appropriate values of the key design parameters. 

For simplicity, in this section, $\tilde{K}$ is assumed to be equal to $K$ and the $k$-th fake path is designed to be close to the $k$-th true path in terms of differences of the channel coefficients, TOAs and AODs. In addition, we assume the conditions in Proposition \ref{prop:iden} are satisfied for the analysis in this section. To quantify how close the $k$-th true path is to the $k$-th fake path, we denote by $\delta_{\gamma_k}$, $\delta_{\tau_k}$ and $\delta_{\theta_{\mathrm{Tx},{k}}}$ the differences for the channel coefficients, TOAs and AODs, respectively, and define them as \vspace{-8pt}
\begin{subequations}\label{eq:diff}
\begin{align}
\delta_{\gamma_k}&\triangleq\tilde{\gamma}_k-\gamma_k,\\
\delta_{\tau_k}&\triangleq\tilde{\tau}_k-\tau_k,\\
\delta_{\theta_{\mathrm{Tx},{k}}}&\triangleq\arcsin\left(\sin(\tilde{\theta}_{\mathrm{Tx},{k}})-\sin(\theta_{\mathrm{Tx},{k}})\right). \vspace{-5pt}
\end{align}
\end{subequations}
According to Equation \eqref{eq:defminsep}, if the values of $\delta_{\tau_k}$ and $\delta_{\theta_{\mathrm{Tx},{k}}}$ are sufficiently small, the minimal separation for TOAs and AODs are determined by $\delta_{\tau_k}$ and $\delta_{\theta_{\mathrm{Tx},{k}}}$, respectively. Hence, we will analyze the effects of $\delta_{\gamma_k}$, $\delta_{\tau_k}$ and $\delta_{\theta_{\mathrm{Tx},{k}}}$. We note that there are lower bounds on $|\delta_{\gamma_k}|$, $|\delta_{\tau_k}|$ and $|\delta_{\theta_{\mathrm{Tx},{k}}}|$ in practice, denoted as $\delta_{\gamma_k,\min}$, $\delta_{\tau_k,\min}$ and $\delta_{\theta_{\mathrm{Tx},{k}},\min}$, respectively, {\it i.e.,} $0\leq \delta_{\gamma_k,\min}\leq|\delta_{\gamma_k}|$, $0<\delta_{\tau_k,\min}\leq|\delta_{\tau_k}|$ and $0<\delta_{\theta_{\mathrm{Tx},{k}},\min}\leq|\delta_{\theta_{\mathrm{Tx},{k}}}|$ so that each pair of fake and true paths is still considered to be produced by two distinct scatterers.

\vspace{-5pt}
\subsection{Exact Expression for FIM}

Considering the injected fake paths, we stack the true and artificial channel parameters as  $\boldsymbol{\bar\gamma} \triangleq [\gamma_0,\gamma_1,\cdots,\gamma_K,\tilde{\gamma}_0,\tilde{\gamma}_1,\cdots,\tilde{\gamma}_K]^{\mathrm{T}}\in\mathbb{C}^{2(K+1)}$, $\boldsymbol{\bar\tau} \triangleq [\tau_0,\tau_1,\cdots,\tau_K,\tilde{\tau}_0,\tilde{\tau}_1,\cdots,\tilde{\tau}_K]^{\mathrm{T}}\in\mathbb{R}^{2(K+1)}$, and $\boldsymbol{\bar{\theta}}_{\mathrm{Tx}} \triangleq [{\theta}_{\mathrm{Tx},0},{\theta}_{\mathrm{Tx},1},\cdots,{\theta}_{\mathrm{Tx},K},\tilde{\theta}_{\mathrm{Tx},0},\tilde{\theta}_{\mathrm{Tx},1},\cdots,\tilde{\theta}_{\mathrm{Tx},K}]^{\mathrm{T}}\in\mathbb{R}^{2(K+1)}$. Denote the vector of all the channel parameters as %\vspace{-4pt}
\begin{equation}
    \bm \eta \triangleq \left[\boldsymbol{\bar\tau}^{\mathrm{T}},\boldsymbol{\bar{\theta}}_{\mathrm{Tx}}^{\mathrm{T}},\mathfrak{R}\{\boldsymbol{\bar\gamma}^{\mathrm{T}}\},\mathfrak{I}\{\boldsymbol{\bar\gamma}^{\mathrm{T}}\}\right]^{\mathrm{T}}\in\mathbb{R}^{8(K+1)}. %\vspace{-4pt}
\end{equation}
Accordingly, the FIM $\bm J^{(\bm \eta)}\in\mathbb{R}^{8(K+1)\times8(K+1)}$ is given by \cite{Scharf}  %\vspace{-4pt}
\begin{equation}\label{eq:FIM}
    \bm J^{(\bm \eta)} = \begin{bmatrix}
            \bm J^{(\bm \eta)}_{\boldsymbol{\bar\tau},\boldsymbol{\bar\tau}} & \bm J^{(\bm \eta)}_{\boldsymbol{\bar\tau},\boldsymbol{\bar{\theta}}_{\mathrm{Tx}}} & \bm J^{(\bm \eta)}_{\boldsymbol{\bar\tau},\mathfrak{R}\{\boldsymbol{\bar\gamma}\}} & \bm J^{(\bm \eta)}_{\boldsymbol{\bar\tau},\mathfrak{I}\{\boldsymbol{\bar\gamma}\}}\\
            \bm J^{(\bm \eta)}_{\boldsymbol{\bar{\theta}}_{\mathrm{Tx}},\boldsymbol{\bar\tau}} & \bm J^{(\bm \eta)}_{\boldsymbol{\bar{\theta}}_{\mathrm{Tx}},\boldsymbol{\bar{\theta}}_{\mathrm{Tx}}} & \bm J^{(\bm \eta)}_{\boldsymbol{\bar{\theta}}_{\mathrm{Tx}},\mathfrak{R}\{\boldsymbol{\bar\gamma}\}} & \bm J^{(\bm \eta)}_{\boldsymbol{\bar{\theta}}_{\mathrm{Tx}},\mathfrak{I}\{\boldsymbol{\bar\gamma}\}}\\
            \bm J^{(\bm \eta)}_{\mathfrak{R}\{\boldsymbol{\bar\gamma}\},\boldsymbol{\bar\tau}} & \bm J^{(\bm \eta)}_{\mathfrak{R}\{\boldsymbol{\bar\gamma}\},\boldsymbol{\bar{\theta}}_{\mathrm{Tx}}} & \bm J^{(\bm \eta)}_{\mathfrak{R}\{\boldsymbol{\bar\gamma}\},\mathfrak{R}\{\boldsymbol{\bar\gamma}\}} & \bm J^{(\bm \eta)}_{\mathfrak{R}\{\boldsymbol{\bar\gamma}\},\mathfrak{I}\{\boldsymbol{\bar\gamma}\}}\\
            \bm J^{(\bm \eta)}_{\mathfrak{I}\{\boldsymbol{\bar\gamma}\},\boldsymbol{\bar\tau}} & \bm J^{(\bm \eta)}_{\mathfrak{I}\{\boldsymbol{\bar\gamma}\},\boldsymbol{\bar{\theta}}_{\mathrm{Tx}}} & \bm J^{(\bm \eta)}_{\mathfrak{I}\{\boldsymbol{\bar\gamma}\},\mathfrak{R}\{\boldsymbol{\bar\gamma}\}} & \bm J^{(\bm \eta)}_{\mathfrak{I}\{\boldsymbol{\bar\gamma}\},\mathfrak{I}\{\boldsymbol{\bar\gamma}\}}\\
            \end{bmatrix}, %\vspace{-4pt}
\end{equation}
where %\vspace{-4pt}
\begin{equation}
    \bm J^{(\bm \eta)}[r,u] = \frac{2}{\sigma^2}\sum_{n=0}^{N-1}\sum_{g=1}^{G}\mathfrak{R}\left\{\left(\frac{\partial \rev{u^{(g,n)}}}{\partial \bm\eta[r]}\right)^{\mathrm{H}}\frac{\partial \rev{u^{(g,n)}}}{\partial \bm\eta[u]}\right\},  %\vspace{-4pt}
\end{equation}
with $r,u=0,1,\cdots,8K+7$.
Herein, we define %\vspace{-4pt}
\begin{equation}
    u^{(g,n)} \triangleq \sqrt{N_t} \sum_{k=0}^{2K+1}\bm{\bar\gamma}[k]e^{-j\frac{2\pi n\bm{\bar\tau}[k]}{NT_s}}\bm\alpha(\bm{\bar\theta}_{\text{Tx}}[k])^{\mathrm{H}} \bm s^{(g,n)}, %\vspace{-4pt}
\end{equation}
while compute Equation \eqref{eq:FIM} using  \vspace{-4pt}
\begin{subequations}\label{eq:partial}
\begin{align}
    \frac{\partial  u^{(g,n)}}{\partial \bm{\bar\tau}[k]} &= -j\frac{2\pi \sqrt{N_t} n}{NT_s}\bm{\bar\gamma}[k] e^{-j\frac{2\pi n\bm{\bar\tau}[k]}{NT_s}}\bm\alpha(\bm{\bar\theta}_{\text{Tx}}[k])^{\mathrm{H}}\bm s^{(g,n)},\label{eq:partialtau}\\
    \frac{\partial  u^{(g,n)}}{\partial \bm{\bar\theta}_{\text{Tx}}[k]} &= j\frac{2\pi \sqrt{N_t} d}{\lambda_c}\bm{\bar\gamma}[k] e^{-j\frac{2\pi n\bm{\bar\tau}[k]}{NT_s}}\cos(\bm{\bar\theta}_{\text{Tx}}[k])\nonumber\\
    &\times{ \bm\alpha(\bm{\bar\theta}_{\text{Tx}}[k])^{\mathrm{H}}}\operatorname{diag}([0,1,\cdots,N_t-1])\bm s^{(g,n)},\label{eq:partialtheta}\\
    \frac{\partial u^{(g,n)}}{\partial \mathfrak{R}\{\bm{\bar\gamma}[k]\}} &= \sqrt{N_t} e^{-j\frac{2\pi n\bm{\bar\tau}[k]}{NT_s}}\bm\alpha(\bm{\bar\theta}_{\text{Tx}}[k])^{\mathrm{H}}\bm s^{(g,n)},\label{eq:partialhr}\\
    \frac{\partial u^{(g,n)}}{\partial \mathfrak{I}\{\bm{\bar\gamma}[k]\}} &= j\sqrt{N_t}e^{-j\frac{2\pi n\bm{\bar\tau}[k]}{NT_s}}\bm\alpha(\bm{\bar\theta}_{\text{Tx}}[k])^{\mathrm{H}}\bm s^{(g,n)}, \label{eq:partialhi} \vspace{-3pt}
\end{align} 
\end{subequations}
for $k=0,1,\cdots,2K+1$. Denote by $\hat{\bm \eta}$ an unbiased estimator of $\bm\eta$. Based on the FIM in Equation \eqref{eq:FIM}, the mean squared error (MSE) of $\hat{\bm \eta}$ can be evaluated according to \cite{Scharf} \vspace{-3pt}
%\begin{equation}\label{eq:exactCRLB}
%    \mathbb{E}\left\{\left(\hat{\bm \eta}[r]-{\bm \eta}[r]\right)^2\right\}\geq  (\bm J^{(\bm \eta)})^{-1}[r,r],
%\end{equation}
\begin{equation}\label{eq:exactCRLB}
    \mathbb{E}\left\{\left(\hat{\bm \eta}-{\bm \eta}\right)\left(\hat{\bm \eta}-{\bm \eta}\right)^{\mathrm{T}}\right\}\succeq{\left(\bm J^{(\bm\eta)}\right)^{-1}}, \vspace{-3pt}
\end{equation}
%with $r=0,1,\cdots,8K+7$, 
which is also well known as the Cram\'{e}r-Rao lower bound (CRLB). %According to geometry, the FIM Since the channel parameter can be according to geometry 
Denote by $\bm \phi \triangleq \left[\boldsymbol{p}^{\mathrm{T}},\boldsymbol{v}_1^{\mathrm{T}},\boldsymbol{v}_2^{\mathrm{T}},\cdots,\boldsymbol{v}_K^{\mathrm{T}},\tilde{\boldsymbol{v}}_0^{\mathrm{T}},\tilde{\boldsymbol{v}}_1^{\mathrm{T}},\cdots,\tilde{\boldsymbol{v}}_K^{\mathrm{T}},\mathfrak{R}\{\boldsymbol{\bar\gamma}^{\mathrm{T}}\},\mathfrak{I}\{\boldsymbol{\bar\gamma}^{\mathrm{T}}\}\right]^{\mathrm{T}}\in\mathbb{R}^{8(K+1)}$ a parameter vector including {the} positions of Alice and scatterers. 
The CRLB for localization can be obtained as well via the analysis of the associated FIM $\bm J^{(\bm \phi)}$ given by \vspace{-3pt}
\begin{equation}\label{eq:FIMloc}
    \bm J^{(\bm \phi)} = \bm\Pi\bm J^{(\bm \eta)}\bm\Pi^\mathrm{T}, \vspace{-3pt}
\end{equation}
where $\bm\Pi\triangleq\frac{\partial \bm\eta^{\mathrm{T}}}{\partial\bm\phi}\in\mathbb{R}^{8(K+1)\times8(K+1)}$.

\vspace{-5pt}
\subsection{Lower Bounds on Estimation Error}\label{subsec:lberror}
%To show the effect of the minimal separation for the location-relevant channel parameters on the estimation error, the FIM in Equation \eqref{eq:FIM} is approximated in this subsection. 
%According to Equation \eqref{eq:partial}, it can be verified that 
%\begin{subequations}\label{eq:limitpartial}
%\begin{align}
%\lim_{\delta_{\gamma_k},\delta_{\tau_k},\delta_{\theta_{\mathrm{Tx},{k}}}\rightarrow0}\frac{\partial \bm u^{(g,n)}}{\partial \bm{\tilde\tau}[k]}&=\frac{\partial \bm u^{(g,n)}}{\partial \bm{\tau}[k]},\\
%\lim_{\delta_{\gamma_k},\delta_{\tau_k},\delta_{\theta_{\mathrm{Tx},{k}}}\rightarrow0}\frac{\partial \bm u^{(g,n)}}{\partial \bm{\tilde\theta}_{\text{Tx}}[k]}&=\frac{\partial \bm u^{(g,n)}}{\partial \bm{\theta}_{\text{Tx}}[k]},\\
%\lim_{\delta_{\gamma_k},\delta_{\tau_k},\delta_{\theta_{\mathrm{Tx},{k}}}\rightarrow0}\frac{\partial \bm u^{(g,n)}}{\partial \mathfrak{R}\{\bm{\tilde\gamma}[k]\}}&=\frac{\partial \bm u^{(g,n)}}{\partial \mathfrak{R}\{\bm{\gamma}[k]\}},\\
%\lim_{\delta_{\gamma_k},\delta_{\tau_k},\delta_{\theta_{\mathrm{Tx},{k}}}\rightarrow0}\frac{\partial \bm u^{(g,n)}}{\partial \mathfrak{I}\{\bm{\tilde\gamma}[k]\}}&=\frac{\partial \bm u^{(g,n)}}{\partial \mathfrak{I}\{\bm{\gamma}[k]\}},
%\end{align} 
%\end{subequations}
%which indicates that ${\bm J}^{(\bm\eta)}$ and ${\bm J}^{(\bm\phi)}$ tend to be singular matrices as $\delta_{\gamma_k}, \delta_{\tau_k}, \delta_{\theta_{\mathrm{Tx},{k}}}\rightarrow 0$ while the estimation accuracy is degraded accordingly based on Equation \eqref{eq:exactCRLB}. However, 

Due to the structure of the FIM, it is complicated to theoretically analyze the  exact FIM to determine how the design parameters $\delta_{\gamma_k}, \delta_{\tau_k}$, and $\delta_{\theta_{\mathrm{Tx},{k}}}$ affect the estimation error. %though we have equation \eqref{eq:limitpartial}. 
To show the efficacy of our framework, we derive two lower bounds on the estimation error using an approximated FIM in this subsection, which are closed-form expressions associated with $\delta_{\gamma_k}, \delta_{\tau_k}$, and $\delta_{\theta_{\mathrm{Tx},{k}}}$, under certain mild conditions on these design parameters.

Since there are no assumptions on the path loss model and knowledge of the channel coefficients do not improve the localization accuracy \cite{Li}, the channel coefficients are considered as nuisance parameters and we restrict the analysis to the Fisher information of the location-relevant channel parameters for each pair of the true and fake paths, {\it i.e.,} $\bm\zeta_k\triangleq\left[\tau_k, {\theta}_{\mathrm{Tx},k}, \tilde{\tau}_k, {\tilde{\theta}}_{\mathrm{Tx},k}\right]^{\mathrm{T}}\in\mathbb{R}^{4}$ with $k=0,1,\cdots,K$. Denote by $\bm J^{(\bm\zeta_k)}\in\mathbb{R}^{4\times4}$ the exact expression for the FIM with respect to $\bm\zeta_k$. According to Equation \eqref{eq:FIM}, $\bm J^{(\bm\zeta_k)}$ can be analogously derived, but its complicated structure still hampers the theoretical analysis of the estimation error. 

To associate the design parameters $\delta_{\gamma_k}, \delta_{\tau_k}$, and $\delta_{\theta_{\mathrm{Tx},{k}}}$ with the estimation error in a closed-form expression, an asymptotic FIM $\breve{\bm J}^{(\bm\zeta_k)}\in\mathbb{R}^{4\times4}$ is studied, which is defined as  \vspace{-3pt}
\begin{equation}\label{eq:FIMloccp}
    \breve{\bm J}^{(\bm \zeta_k)} \rev{\triangleq} \frac{8\pi^2}{(\sigma NT_s)^2} \mathfrak{R}\left\{{\bm \Gamma^{(k)}}\right\}, \vspace{-3pt}
\end{equation}
where the matrix $\bm \Gamma^{(k)}\in \mathbb{R}^{4\times4}$ is provided in Equation \eqref{eq:L}, with constants $\Lambda$, $O_i$ and functions $M_i^{(k)}$, $i=1,2,\cdots,6$, defined as, $\Lambda\triangleq\frac{\lambda_c}{NT_sd}$, $O_1\triangleq \frac{N(N-1)(2N-1)}{6}$, $O_2\triangleq \frac{N(N-1)}{2}$, $O_3\triangleq \frac{N_t-1}{2}$, $O_4\triangleq \frac{(N_t-1)(2N_t-1)}{6}$, $O_5=N$, $O_6=1$, $M_1^{(k)} \triangleq \sum_{n=0}^{N-1}n^2e^{-j\frac{2\pi n \delta_{\tau_{k}}}{NT_s}}$, $M_2^{(k)} \triangleq \sum_{n=0}^{N-1}ne^{-j\frac{2\pi n \delta_{\tau_{k}}}{NT_s}}$, $M_3^{(k)} \triangleq \frac{1}{N_t}\sum_{n_t=0}^{N_t-1}n_te^{j\frac{2\pi n_t d \sin\left(\delta_{\theta_{\text{Tx},k}}\right)}{\lambda_c}}$, $M_4^{(k)} \triangleq \frac{1}{N_t}\sum_{n_t=0}^{N_t-1}n_t^2e^{j\frac{2\pi n_t d \sin\left(\delta_{\theta_{\text{Tx},k}}\right)}{\lambda_c}}$, $M_5^{(k)} \triangleq \sum_{n=0}^{N-1}e^{-j\frac{2\pi n \delta_{\tau_{k}}}{NT_s}}$, and $M_6^{(k)} \triangleq \frac{1}{N_t}\sum_{n_t=0}^{N_t-1}e^{j\frac{2\pi n_t d \sin\left(\delta_{\theta_{\text{Tx},k}}\right)}{\lambda_c}}$. 
\begin{figure*}[b]
\vspace{-5pt}
\hrulefill
\vspace{-3pt}
    \begin{equation}\label{eq:L}
        \begin{aligned}
            &\bm \Gamma^{(k)}\triangleq \\
            & \begin{bmatrix}
            {O_1O_6|\gamma_k|^2} & -\frac{O_2O_3|\gamma_k|^2\cos(\theta_{\text{Tx},k})}{\Lambda} & {M_1^{(k)}M_6^{(k)}\gamma^{\mathrm{H}}_k\tilde{\gamma}_k}  & -\frac{M_2^{(k)}M_3^{(k)}\gamma^{\mathrm{H}}_k\tilde{\gamma}_k\cos(\tilde{\theta}_{\text{Tx},k})}{\Lambda}\\
            -\frac{O_2O_3|\gamma_k|^2\cos({\theta}_{\text{Tx},k})}{\Lambda} & \frac{O_4O_5|\gamma_k|^2\cos^2(\theta_{\text{Tx},k})}{\Lambda^2}  & -\frac{M_2^{(k)}M_3^{(k)}\gamma^{\mathrm{H}}_k\tilde{\gamma}_k\cos({\theta}_{\text{Tx},k})}{\Lambda} & \frac{M_4^{(k)}M_5^{(k)}\gamma^{\mathrm{H}}_k\tilde{\gamma}_k\cos(\theta_{\text{Tx},k})\cos(\tilde{\theta}_{\text{Tx},k})}{\Lambda^2}\\
            \left({M_1^{(k)}M_6^{(k)}\gamma^{\mathrm{H}}_k\tilde{\gamma}_k}\right)^*&-\frac{\left(M_2^{(k)}M_3^{(k)}\gamma^{\mathrm{H}}_k\tilde{\gamma}_k\right)^*\cos({\theta}_{\text{Tx},k})}{\Lambda}&{ O_1O_6|\tilde\gamma_k|^2}&-\frac{O_2O_3|\tilde\gamma_k|^2\cos(\tilde{\theta}_{\text{Tx},k})}{\Lambda}\\
            -\frac{\left(M_2^{(k)}M_3^{(k)}\gamma^{\mathrm{H}}_k\tilde{\gamma}_k\right)^*\cos(\tilde{\theta}_{\text{Tx},k})}{\Lambda}& \frac{\left(M_4^{(k)}M_5^{(k)}\gamma^{\mathrm{H}}_k\tilde{\gamma}_k\right)^*\cos(\theta_{\text{Tx},k})\cos(\tilde{\theta}_{\text{Tx},k})}{\Lambda^2} & -\frac{O_2O_3|\tilde\gamma_k|^2\cos(\tilde{\theta}_{\text{Tx},k})}{\Lambda} & \frac{O_4O_5|\tilde\gamma_k|^2\cos^2(\theta_{\text{Tx},k})}{\Lambda^2} \\
            \end{bmatrix}
            \end{aligned}
    \end{equation}
\end{figure*}
We note that, as compared with $\frac{1}{G}{\bm J}^{(\bm \zeta_k)}$, the approximation error with $\breve{\bm J}^{(\bm \zeta_k)}$ is negligible when a large number of symbols are transmitted according to the following lemma. 
\begin{lemma}\label{lemma:converge}
    As $G\rightarrow\infty$, $\frac{1}{G}{\bm J}^{(\bm \zeta_k)}$ converges almost surely (a.s.) to $\breve{\bm J}^{(\bm \zeta_k)}$, {\it i.e.,} \vspace{-3pt}
\begin{equation}\label{eq:convJzeta}
\mathbb{P}\left(\lim_{G\rightarrow\infty}\frac{1}{G}{\bm J}^{(\bm \zeta_k)}=\breve{\bm J}^{(\bm \zeta_k)}\right)=1.\vspace{-3pt}
\end{equation}
\end{lemma}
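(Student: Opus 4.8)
The plan is to recognize that the exact FIM $\bm J^{(\bm\zeta_k)}$ is, up to the deterministic channel quantities, a sum over the $G$ transmissions of i.i.d.\ random matrices built from the pilot signals, so that \eqref{eq:convJzeta} becomes an instance of the strong law of large numbers (SLLN) together with an evaluation of the resulting limiting expectation. First I would isolate the randomness: by \eqref{eq:partial}, each partial derivative $\partial u^{(g,n)}/\partial\bm\eta[r]$ is linear in the pilot vector $\bm s^{(g,n)}$, and the $\bm s^{(g,n)}$ are i.i.d.\ across the transmissions $g$ with $\mathbb{E}\{\bm s^{(g,n)}(\bm s^{(g,n)})^{\mathrm{H}}\}=\tfrac{1}{N_t}\bm I_{N_t}$. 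Writing the $(r,u)$ entry of the $4\times4$ restriction of \eqref{eq:FIM} to the coordinates of $\bm\zeta_k$ as $\bm J^{(\bm\zeta_k)}[r,u]=\sum_{g=1}^{G}Z_g^{(r,u)}$, where
\[
Z_g^{(r,u)}\triangleq\frac{2}{\sigma^2}\sum_{n=0}^{N-1}\mathfrak{R}\Big\{\big(\tfrac{\partial u^{(g,n)}}{\partial\bm\zeta_k[r]}\big)^{\mathrm{H}}\tfrac{\partial u^{(g,n)}}{\partial\bm\zeta_k[u]}\Big\},
\]
the summands $\{Z_g^{(r,u)}\}_{g\geq1}$ are i.i.d., since $Z_g^{(r,u)}$ depends on the pilots only through $(\bm s^{(g,0)},\dots,\bm s^{(g,N-1)})$.

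Next I would check integrability and invoke the SLLN. Each $Z_g^{(r,u)}$ is a real quadratic form in the $\bm s^{(g,n)}$ with bounded coefficients, so by Cauchy--Schwarz $|Z_g^{(r,u)}|\leq C\sum_{n=0}^{N-1}\|\bm s^{(g,n)}\|_2^2$ for a constant $C$ depending only on $N$, $N_t$, $T_s$, $\lambda_c$, $d$, $\sigma$ and the (bounded) channel parameters; hence $\mathbb{E}|Z_g^{(r,u)}|\leq C N<\infty$ because $\mathbb{E}\|\bm s^{(g,n)}\|_2^2=\operatorname{Tr}(\tfrac{1}{N_t}\bm I_{N_t})=1$. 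Kolmogorov's SLLN then yields $\tfrac{1}{G}\sum_{g=1}^{G}Z_g^{(r,u)}\to\mathbb{E}\{Z_1^{(r,u)}\}$ almost surely.

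It then remains to evaluate $\mathbb{E}\{Z_1^{(r,u)}\}$ and identify it with $\breve{\bm J}^{(\bm\zeta_k)}[r,u]$ as defined in \eqref{eq:FIMloccp}--\eqref{eq:L}. Because the derivatives \eqref{eq:partialtau}--\eqref{eq:partialtheta} are linear in $\bm s^{(g,n)}$, each expectation $\mathbb{E}\{(\cdot)^{\mathrm{H}}(\cdot)\}$ collapses, via $\mathbb{E}\{\bm s^{(g,n)}(\bm s^{(g,n)})^{\mathrm{H}}\}=\tfrac{1}{N_t}\bm I_{N_t}$, to $\tfrac{1}{N_t}$ times a trace of products of the deterministic factors: the steering vectors $\bm\alpha(\theta_{\mathrm{Tx},k})$, $\bm\alpha(\tilde\theta_{\mathrm{Tx},k})$, the matrix $\operatorname{diag}([0,\dots,N_t-1])$, the phases $e^{-j2\pi n\bm{\bar\tau}[\cdot]/(NT_s)}$, the prefactors $\tfrac{2\pi n}{NT_s}$ and $\tfrac{2\pi d}{\lambda_c}\cos(\cdot)$, and the channel coefficients. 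For the two $2\times2$ diagonal blocks of $\bm\zeta_k$ (true--true and fake--fake) the sub-carrier and antenna phases conjugate-cancel, so the sub-carrier sums $\sum_n n^2$, $\sum_n n$, $\sum_n 1$ and the antenna sums $\tfrac1{N_t}\sum_{n_t}n_t$, $\tfrac1{N_t}\sum_{n_t}n_t^2$, $\tfrac1{N_t}\sum_{n_t}1$ reduce to the constants $O_1,\dots,O_6$; for the true--fake cross block the residual phases are exactly $e^{-j2\pi n\delta_{\tau_k}/(NT_s)}$ and $e^{j2\pi n_t d\sin(\delta_{\theta_{\mathrm{Tx},k}})/\lambda_c}$ by \eqref{eq:diff}, which assemble into $M_1^{(k)},\dots,M_6^{(k)}$. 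Collecting the $\tfrac{2\pi n}{NT_s}$ factors and normalizing the angular terms by $\Lambda=\tfrac{\lambda_c}{NT_sd}$ converts $\tfrac{2}{\sigma^2}\sum_n\tfrac1{N_t}\mathbb{E}\{\cdots\}$ into $\tfrac{8\pi^2}{(\sigma NT_s)^2}\mathfrak{R}\{\bm\Gamma^{(k)}\}$, i.e.\ exactly $\breve{\bm J}^{(\bm\zeta_k)}[r,u]$; the conjugations induced by the $(\partial_r)^{\mathrm{H}}\partial_u$ ordering are immaterial since only $\mathfrak{R}\{\cdot\}$ enters. Finally, since $\bm\zeta_k$ has only four coordinates, the sixteen entrywise convergence events each hold with probability one, so their (finite) intersection does too, which is \eqref{eq:convJzeta}.

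I expect the main obstacle to be this last step---the careful bookkeeping that matches $\mathbb{E}\{Z_1^{(r,u)}\}$ term by term to the entries of $\bm\Gamma^{(k)}$ in \eqref{eq:L}: tracking which phase factors survive in the true--fake cross block (producing the $M_i^{(k)}$) versus conjugate-cancel in the true--true and fake--fake blocks (producing the $O_i$), handling the $\operatorname{diag}([0,\dots,N_t-1])$ weighting that generates $M_3^{(k)},M_4^{(k)},O_3,O_4$ and the $\cos(\cdot)$ factors, and verifying that the $\tfrac{2\pi n}{NT_s}$, $\tfrac{2\pi d}{\lambda_c}\cos(\cdot)$ and $\Lambda$ factors reassemble into the stated $\tfrac{8\pi^2}{(\sigma NT_s)^2}$ prefactor. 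The probabilistic content---the i.i.d.\ decomposition, integrability, and SLLN---is routine by comparison.
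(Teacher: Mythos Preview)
Your proposal is correct and follows essentially the same route as the paper: reduce to entrywise a.s.\ convergence, recognize each entry as an average over $g$ of i.i.d.\ quadratic forms in the pilots, apply the (strong) law of large numbers, and identify the limit via $\mathbb{E}\{\bm s^{(g,n)}(\bm s^{(g,n)})^{\mathrm{H}}\}=\tfrac{1}{N_t}\bm I_{N_t}$, after which the sub-carrier and antenna sums collapse to the $O_i$ (diagonal blocks) or $M_i^{(k)}$ (cross blocks). The paper's proof is terser---it exhibits only the $(\tau_k,\tau_k)$ entry and asserts the rest are analogous---while you additionally verify integrability and note the finite-intersection step, but these are refinements of the same argument rather than a different approach.
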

\begin{proof}
    See Appendix \ref{sec:proofconverge}.
\end{proof}
%For a realistic setting of the number of symbols transmitted over each sub-carrier, we will numerically show that the approximation error is fairly small in Section \ref{sec:sim}. 
We will show that, using such an asymptotic FIM as an approximation of $\frac{1}{G}\bm J^{(\bm\zeta_k)}$, the theoretical analysis of the degraded estimation accuracy %caused by the design of $\delta_{\gamma_k}, \delta_{\tau_k}$, and $\delta_{\theta_{\mathrm{Tx},{k}}}$ 
is tractable.
\begin{proposition}\label{prop:singularbreveJzeta}
    $\operatorname{Rank}\{\breve{\bm J}^{(\bm\zeta_k)}\} \rightarrow \Omega^{(\bm\zeta_k)}$ as $\delta_{\gamma_k}, \delta_{\tau_k}, \delta_{\theta_{\mathrm{Tx},{k}}}\rightarrow 0$, where $\Omega^{(\bm\zeta_k)}$ is an integer with $\Omega^{(\bm\zeta_k)}<4$.
\end{proposition}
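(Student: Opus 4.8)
The plan is to pass to the limit $\delta_{\gamma_k},\delta_{\tau_k},\delta_{\theta_{\mathrm{Tx},k}}\to 0$ inside the closed-form matrix $\bm\Gamma^{(k)}$ of \eqref{eq:L}, exhibit the resulting matrix as rank-deficient, and take $\Omega^{(\bm\zeta_k)}$ to be its rank.

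First I would record the elementary scalar limits feeding \eqref{eq:L}. Each of $M_1^{(k)},\dots,M_6^{(k)}$ is a \emph{finite} sum whose summands converge pointwise: $\delta_{\tau_k}\to 0$ forces $e^{-j2\pi n\delta_{\tau_k}/(NT_s)}\to 1$ and $\delta_{\theta_{\mathrm{Tx},k}}\to 0$ forces $e^{j2\pi n_t d\sin(\delta_{\theta_{\mathrm{Tx},k}})/\lambda_c}\to 1$, so $M_i^{(k)}\to O_i$ for $i=1,\dots,6$ by matching each sum against the closed forms for $\sum n^2,\ \sum n,\ \tfrac1{N_t}\sum n_t,\ \tfrac1{N_t}\sum n_t^2$, etc. From \eqref{eq:diff}, $\delta_{\gamma_k}\to 0$ gives $\tilde\gamma_k\to\gamma_k$ (hence $\gamma_k^{\mathrm{H}}\tilde\gamma_k\to|\gamma_k|^2$ and $|\tilde\gamma_k|^2\to|\gamma_k|^2$) and $\delta_{\theta_{\mathrm{Tx},k}}\to 0$ gives $\tilde\theta_{\mathrm{Tx},k}\to\theta_{\mathrm{Tx},k}$.

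Second, I would substitute these limits into \eqref{eq:L} and inspect the entries: the $2\times2$ block indexed by the ``true'' pair $(\tau_k,\theta_{\mathrm{Tx},k})$, the block indexed by the ``fake'' pair $(\tilde\tau_k,\tilde\theta_{\mathrm{Tx},k})$, and both cross blocks all collapse to the \emph{same} real $2\times2$ matrix $\bm A$, where $\bm A[1,1]=O_1O_6|\gamma_k|^2$, $\bm A[1,2]=\bm A[2,1]=-O_2O_3|\gamma_k|^2\cos(\theta_{\mathrm{Tx},k})/\Lambda$, and $\bm A[2,2]=O_4O_5|\gamma_k|^2\cos^2(\theta_{\mathrm{Tx},k})/\Lambda^2$; that is, the limiting $\bm\Gamma^{(k)}$ equals $\begin{bmatrix}\bm A & \bm A\\ \bm A & \bm A\end{bmatrix}=\begin{bmatrix}1&1\\1&1\end{bmatrix}\otimes\bm A$, and since $\mathfrak{R}\{\cdot\}$ acts trivially on this real matrix, \eqref{eq:FIMloccp} gives $\breve{\bm J}^{(\bm\zeta_k)}\to\frac{8\pi^2}{(\sigma NT_s)^2}\big(\begin{bmatrix}1&1\\1&1\end{bmatrix}\otimes\bm A\big)$. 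Its rank is $\operatorname{Rank}\!\big(\begin{bmatrix}1&1\\1&1\end{bmatrix}\big)\cdot\operatorname{Rank}(\bm A)=\operatorname{Rank}(\bm A)\le 2$, so setting $\Omega^{(\bm\zeta_k)}\triangleq\operatorname{Rank}(\bm A)$ yields $\Omega^{(\bm\zeta_k)}<4$. (If one wishes to pin down $\Omega^{(\bm\zeta_k)}=2$ in the generic case, it suffices to note $\det\bm A=\frac{|\gamma_k|^4\cos^2(\theta_{\mathrm{Tx},k})}{\Lambda^2}\big(O_1O_4O_5O_6-O_2^2O_3^2\big)\ne 0$ whenever $\cos(\theta_{\mathrm{Tx},k})\ne 0$ and $N,N_t\ge 2$.)

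The step requiring care — and really the only subtlety — is the meaning of ``$\operatorname{Rank}\{\breve{\bm J}^{(\bm\zeta_k)}\}\to\Omega^{(\bm\zeta_k)}$'': rank is only lower semicontinuous, so for generic small but nonzero $(\delta_{\tau_k},\delta_{\theta_{\mathrm{Tx},k}},\delta_{\gamma_k})$ the matrix stays full rank $4$, with two eigenvalues of size $O(\|(\delta_{\tau_k},\delta_{\theta_{\mathrm{Tx},k}},\delta_{\gamma_k})\|)$ that vanish only in the limit. The statement is therefore to be read, and the proof phrased, as ``$\breve{\bm J}^{(\bm\zeta_k)}$ converges to a matrix of rank $\Omega^{(\bm\zeta_k)}<4$,'' equivalently two of its eigenvalues tend to $0$ — which is exactly what drives the corresponding CRLB block, hence the eavesdropper's localization error, to diverge as the fake path merges with the true path.
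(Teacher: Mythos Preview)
Your proof is correct and follows essentially the same approach as the paper: both arguments take the limits $M_i^{(k)}\to O_i$, $\tilde\gamma_k\to\gamma_k$, $\tilde\theta_{\mathrm{Tx},k}\to\theta_{\mathrm{Tx},k}$ entrywise in \eqref{eq:L} and observe that the third and fourth rows of the limiting $\mathfrak{R}\{\bm\Gamma^{(k)}\}$ coincide with the first and second. Your Kronecker packaging $\big[\begin{smallmatrix}1&1\\1&1\end{smallmatrix}\big]\otimes\bm A$ and your explicit remark that rank is only lower semicontinuous (so the statement should be read as convergence of $\breve{\bm J}^{(\bm\zeta_k)}$ to a rank-$\Omega^{(\bm\zeta_k)}$ matrix) are clean additions, but the underlying argument is the same as the paper's.
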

\begin{proof}
    Since $\lim_{\delta_{\gamma_k}\rightarrow0}\mathfrak{R}\left\{\gamma^{\mathrm{H}}_k\tilde{\gamma}_k\right\}=\lim_{\delta_{\gamma_k}\rightarrow0}\mathfrak{R}\left\{\tilde{\gamma}^{\mathrm{H}}_k{\gamma}_k\right\}=\lim_{\delta_{\gamma_k}\rightarrow0}|\tilde{\gamma}_k|^2=|\gamma_k|^2$ holds for any $k$ while $M_i^{(k)}$ is a function with respect to $\delta_{\tau_k}$ or $\delta_{\theta_{\mathrm{Tx},{k}}}$ with \vspace{-3pt}
\begin{subequations}
     \begin{align}
     \lim_{\delta_{\tau_k},\delta_{\theta_{\mathrm{Tx},{k}}}\rightarrow0}\mathfrak{R}\left\{M_i^{(k)}M_t^{(k)}\right\}=O_iO_t,\label{eq:limitM}\\
     \left|\mathfrak{R}\left\{M_i^{(k)}M_t^{(k)}\right\}\right|\leq O_iO_t,\label{eq:boundM} \vspace{-3pt}
     \end{align}
\end{subequations}
for $i,t=1,2,\cdots,6$, it can be verified that the third and fourth rows of $\bm\Gamma^{(k)}$ converge to its first and second rows, respectively, as $\delta_{\gamma_k}, \delta_{\tau_k}, \delta_{\theta_{\mathrm{Tx},{k}}}\rightarrow 0$, which concludes the proof.
\end{proof}
Proposition \ref{prop:singularbreveJzeta} indicates that $\breve{\bm J}^{(\bm\zeta_k)}$ tends to a {\it singular matrix} as $\delta_{\gamma_k}, \delta_{\tau_k}, \delta_{\theta_{\mathrm{Tx},{k}}}\rightarrow 0$, which can be exploited to characterize the asymptotic property for ${\bm J}^{(\bm\zeta_k)}$ based on Lemma \ref{lemma:converge} as well. %and the estimation error will be examined accordingly based on Equation \eqref{eq:exactCRLB}. 
By leveraging the approximated FIM $\breve{\bm J}^{(\bm\zeta_k)}$, we denote by $\hat{\bm \zeta}_{k}$ an unbiased estimator of $\rev{{\bm \zeta}_{k}}$ and bound the estimation error with a closed-form expression as follows.

\begin{proposition}\label{prop:errorbound}
For $k=0,1,\cdots,K$, if $\delta_{\gamma_k}=\delta_{\gamma_k,\min}=0$ holds,
%\begin{itemize}
    %\item[A1)] All the channel coefficients are known\footnote{Though the channel coefficients are not related to the location according to the signal mode, their uncertainties are detrimental to the estimation accuracy. Considering the };
    %\item[A1)] A large number of symbols and transmit antennas are employed;
    %\item[A2)] The design parameters $\delta_{\theta_{\mathrm{Tx},{k}}}$ are relatively small such that Equation \eqref{eq:approxwithSAN} holds,
    %\item[A1)] $G\rightarrow\infty$;
    %\item[A1)] $\delta_{\gamma_k}=\delta_{\gamma_k,\min}=0$;
    %\item[A2)] $\delta_{\tau_k,\min}\leq|\delta_{\tau_k}|\leq \delta_{\tau_k,\max}$;
    %\item[A3)] $\delta_{\theta_{\mathrm{Tx},{k}},\min}\leq|\delta_{\theta_{\mathrm{Tx};{k}}}|\leq\delta_{\theta_{\mathrm{Tx},k},\max}$
    %\item[A4)]
    %$\max_{n_t=0,1,2,\cdots,N_t-1\atop n=0,1,2,\cdots,N-1}\left|n_t\sin\left(\delta_{\theta_{\text{Tx},k}}\right)+n\Lambda\delta_{\tau_k}\right|\leq\frac{\lambda_c}{4d}$;
    %\item[A5)] $\sin\left(\delta_{\theta_{\mathrm{Tx},{k}}}\right)\geq(N-1)\Lambda \delta_{\tau_k}$;
%\end{itemize}
for any real $\psi>0$, there always exists a positive integer $\mathcal{G}$ such that when $G\geq\mathcal{G}$. Then, the following lower bound on the MSE of $\rev{\hat{\bm \zeta}_{k}}$ holds with probability of $1$, \vspace{-3pt}
\begin{equation}\label{eq:errorbound}
    \begin{aligned}
    &\mathbb{E}\left\{\left(\hat{\bm \zeta}_{k}-{\bm \zeta}_{k}\right)^{\mathrm{T}}\left(\hat{\bm \zeta}_{k}-{\bm \zeta}_{k}\right)\right\}>\frac{1}{G}\left(\Xi^{(k)}-\psi\right), \vspace{-3pt}
    %&\geq  \operatorname{Tr}\left({\left(\bm J^{(\bm\zeta_k)}\right)^{-1}}\right)
    %&>\frac{1}{G}\left(\operatorname{Tr}\left({\left(\breve{\bm J}^{(\bm\zeta_k)}\right)^{-1}}\right)-\gamma\right)\geq\frac{1}{G}\left(\Xi^{(k)}-\gamma\right),
    %\triangleq4\sqrt[4]{\frac{1}{\operatorname{det}\left(\breve{\bm J}^{(\bm\zeta_k)}\right)}},
    \end{aligned}
\end{equation}
where $\Xi^{(k)}$ is provided in Equation \eqref{eq:lowerboundXi}.
\end{proposition}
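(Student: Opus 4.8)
The plan is to exploit the two previous results already established: Lemma \ref{lemma:converge}, which gives a.s.\ convergence of $\frac{1}{G}\bm J^{(\bm\zeta_k)}$ to the asymptotic FIM $\breve{\bm J}^{(\bm\zeta_k)}$, and Proposition \ref{prop:singularbreveJzeta}, which asserts $\breve{\bm J}^{(\bm\zeta_k)}$ becomes singular (rank $\Omega^{(\bm\zeta_k)}<4$) in the limit $\delta_{\gamma_k},\delta_{\tau_k},\delta_{\theta_{\mathrm{Tx},k}}\to0$. With $\delta_{\gamma_k}=\delta_{\gamma_k,\min}=0$ imposed, only $\delta_{\tau_k}$ and $\delta_{\theta_{\mathrm{Tx},k}}$ remain as free (small) parameters, so $\breve{\bm J}^{(\bm\zeta_k)}$ is a fixed matrix that is \emph{close} to singular. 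The CRLB for the sub-vector $\bm\zeta_k$ states $\mathbb{E}\{(\hat{\bm\zeta}_k-\bm\zeta_k)(\hat{\bm\zeta}_k-\bm\zeta_k)^{\mathrm{T}}\}\succeq (\bm J^{(\bm\zeta_k)})^{-1}$, hence taking traces, $\mathbb{E}\{(\hat{\bm\zeta}_k-\bm\zeta_k)^{\mathrm{T}}(\hat{\bm\zeta}_k-\bm\zeta_k)\}\geq \operatorname{Tr}\{(\bm J^{(\bm\zeta_k)})^{-1}\}$. The strategy is to lower-bound this trace in terms of $\operatorname{Tr}\{(\breve{\bm J}^{(\bm\zeta_k)})^{-1}\}$, which, because $\breve{\bm J}^{(\bm\zeta_k)}$ is nearly singular, is large — and to define $\Xi^{(k)}$ precisely as (the $G$-normalized version of) $\operatorname{Tr}\{(\breve{\bm J}^{(\bm\zeta_k)})^{-1}\}$, i.e.\ $\Xi^{(k)}=\operatorname{Tr}\{(\breve{\bm J}^{(\bm\zeta_k)})^{-1}\}$ after accounting for the $\frac{1}{G}$ scaling.

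The key steps, in order, are: (i) Apply the CRLB \eqref{eq:exactCRLB} restricted to $\bm\zeta_k$ and reduce to the scalar inequality $\mathbb{E}\{\|\hat{\bm\zeta}_k-\bm\zeta_k\|_2^2\}\geq\operatorname{Tr}\{(\bm J^{(\bm\zeta_k)})^{-1}\}=\frac{1}{G}\operatorname{Tr}\{(\frac{1}{G}\bm J^{(\bm\zeta_k)})^{-1}\}$. (ii) Invoke Lemma \ref{lemma:converge}: on a probability-one event, $\frac{1}{G}\bm J^{(\bm\zeta_k)}\to\breve{\bm J}^{(\bm\zeta_k)}$; since $\breve{\bm J}^{(\bm\zeta_k)}$ is (by hypothesis, being close to but not exactly singular for $\delta_{\tau_k},\delta_{\theta_{\mathrm{Tx},k}}\neq0$) invertible, and matrix inversion followed by trace is continuous at invertible matrices, $\operatorname{Tr}\{(\frac{1}{G}\bm J^{(\bm\zeta_k)})^{-1}\}\to\operatorname{Tr}\{(\breve{\bm J}^{(\bm\zeta_k)})^{-1}\}=:\Xi^{(k)}$ almost surely. (iii) Translate a.s.\ convergence into the quantified statement: for any $\psi>0$, the convergence implies there exists $\mathcal{G}$ (random, but finite with probability one — and one can make it deterministic by noting the convergence of the deterministic sequence $\frac{1}{G}\bm J^{(\bm\zeta_k)}$, since $\breve{\bm J}^{(\bm\zeta_k)}$ is itself the deterministic limit and the fluctuations vanish) such that for all $G\geq\mathcal{G}$, $\operatorname{Tr}\{(\frac{1}{G}\bm J^{(\bm\zeta_k)})^{-1}\}>\Xi^{(k)}-\psi$, whence $\mathbb{E}\{\|\hat{\bm\zeta}_k-\bm\zeta_k\|_2^2\}>\frac{1}{G}(\Xi^{(k)}-\psi)$, which is \eqref{eq:errorbound}. (iv) Finally, make $\Xi^{(k)}$ explicit: since $\breve{\bm J}^{(\bm\zeta_k)}=\frac{8\pi^2}{(\sigma NT_s)^2}\mathfrak{R}\{\bm\Gamma^{(k)}\}$, write $\Xi^{(k)}=\frac{(\sigma NT_s)^2}{8\pi^2}\operatorname{Tr}\{(\mathfrak{R}\{\bm\Gamma^{(k)}\})^{-1}\}$, and use the cofactor/adjugate formula $\operatorname{Tr}\{\bm A^{-1}\}=\operatorname{Tr}\{\operatorname{adj}(\bm A)\}/\det(\bm A)$ with $\bm A=\mathfrak{R}\{\bm\Gamma^{(k)}\}$ to display Equation \eqref{eq:lowerboundXi} in closed form in terms of the $O_i$, $M_i^{(k)}$, $|\gamma_k|$, and the cosines.

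The main obstacle is step (iii): properly justifying the transition from the almost-sure (asymptotic, unquantified) convergence of Lemma \ref{lemma:converge} to the existence of a threshold $\mathcal{G}$ after which the strict inequality $(>)$ holds \emph{deterministically} on a probability-one event — in particular ensuring $\frac{1}{G}\bm J^{(\bm\zeta_k)}$ stays invertible for all large $G$ (which follows since it converges to the invertible $\breve{\bm J}^{(\bm\zeta_k)}$ and $\det(\cdot)$ is continuous, so the determinant is bounded away from $0$ eventually) and handling the continuity of $\bm A\mapsto\operatorname{Tr}\{\bm A^{-1}\}$ uniformly near $\breve{\bm J}^{(\bm\zeta_k)}$. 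A secondary subtlety is that $\breve{\bm J}^{(\bm\zeta_k)}$, while invertible for nonzero $\delta_{\tau_k},\delta_{\theta_{\mathrm{Tx},k}}$, is ill-conditioned, so $\Xi^{(k)}$ blows up as these differences shrink; this is exactly the desired conclusion (privacy is enhanced), but care is needed to state \eqref{eq:lowerboundXi} so that the dominant $1/\det$ term — which scales like $1/(\delta_{\tau_k}^2\delta_{\theta_{\mathrm{Tx},k}}^2)$ up to constants — is transparent. The algebra for the adjugate of the $4\times4$ matrix $\mathfrak{R}\{\bm\Gamma^{(k)}\}$ is routine but lengthy, and can be relegated to Appendix \ref{sec:prooferrorbound}.
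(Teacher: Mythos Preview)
Your steps (i)--(iii) are essentially identical to the paper's argument: apply the CRLB for the sub-vector $\bm\zeta_k$, take traces, use Lemma~\ref{lemma:converge} to replace $\frac{1}{G}\bm J^{(\bm\zeta_k)}$ by $\breve{\bm J}^{(\bm\zeta_k)}$, and convert a.s.\ convergence of $G\operatorname{Tr}\{(\bm J^{(\bm\zeta_k)})^{-1}\}\to\operatorname{Tr}\{(\breve{\bm J}^{(\bm\zeta_k)})^{-1}\}$ into the $\psi$-quantified inequality. So far so good.

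Step (iv), however, misidentifies $\Xi^{(k)}$. You propose to \emph{define} $\Xi^{(k)}=\operatorname{Tr}\{(\breve{\bm J}^{(\bm\zeta_k)})^{-1}\}$ and then compute it via the adjugate identity $\operatorname{Tr}\{\bm A^{-1}\}=\operatorname{Tr}\{\operatorname{adj}(\bm A)\}/\det(\bm A)$. That would give a ratio of a sum of $3\times3$ principal minors over the $4\times4$ determinant --- a valid and in fact tighter bound, but \emph{not} the expression in Equation~\eqref{eq:lowerboundXi}. Look at \eqref{eq:lowerboundXi}: it is a product of two \emph{fourth roots}. That form cannot come out of an adjugate computation; it is the signature of the AM--GM inequality. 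The paper's actual move is
\[
\operatorname{Tr}\{(\breve{\bm J}^{(\bm\zeta_k)})^{-1}\}=\sum_{i=1}^4\frac{1}{\rho_i^{(k)}}\ \ge\ 4\Bigl(\prod_{i=1}^4\rho_i^{(k)}\Bigr)^{-1/4}=4\,\det\bigl(\breve{\bm J}^{(\bm\zeta_k)}\bigr)^{-1/4}\ =:\ \Xi^{(k)},
\]
so only $\det(\breve{\bm J}^{(\bm\zeta_k)})$ needs to be evaluated. Under the assumption $\delta_{\gamma_k}=0$ (so $\gamma_k^{\mathrm H}\tilde\gamma_k=|\gamma_k|^2$ is real), the $4\times4$ determinant of $\mathfrak{R}\{\bm\Gamma^{(k)}\}$ factors into the two $2\times2$-type pieces appearing under the fourth roots in \eqref{eq:lowerboundXi}, and pulling out the prefactor $\frac{8\pi^2}{(\sigma NT_s)^2}$ yields the leading constant. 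Your adjugate route is not wrong as an alternative bound, but it will not reproduce the specific $\Xi^{(k)}$ the proposition references, and the resulting numerator (sum of four $3\times3$ minors) does not simplify as cleanly as the determinant alone.
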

\begin{comment}
\begin{proposition}\label{prop:errorbound}
For $k=0,1,\cdots,K$, supposed that
\begin{itemize}
    %\item[A1)] All the channel coefficients are known\footnote{Though the channel coefficients are not related to the location according to the signal mode, their uncertainties are detrimental to the estimation accuracy. Considering the };
    %\item[A1)] A large number of symbols and transmit antennas are employed;
    %\item[A2)] The design parameters $\delta_{\theta_{\mathrm{Tx},{k}}}$ are relatively small such that Equation \eqref{eq:approxwithSAN} holds,
    %\item[A1)] $G\rightarrow\infty$;
    \item[A1)] $\delta_{\gamma_k}=\delta_{\gamma_k,\min}=0$;
    \item[A2)] $\delta_{\tau_k,\min}\leq|\delta_{\tau_k}|\leq \delta_{\tau_k,\max}$;
    \item[A3)] $\delta_{\theta_{\mathrm{Tx},{k}},\min}\leq|\delta_{\theta_{\mathrm{Tx};{k}}}|\leq\delta_{\theta_{\mathrm{Tx},k},\max}$
    \item[A4)]
    $\max_{n_t=0,1,2,\cdots,N_t-1\atop n=0,1,2,\cdots,N-1}\left|n_t\sin\left(\delta_{\theta_{\text{Tx},k}}\right)+n\Lambda\delta_{\tau_k}\right|\leq\frac{\lambda_c}{4d}$;
    \item[A5)] $\sin\left(\delta_{\theta_{\mathrm{Tx},{k}}}\right)\geq(N-1)\Lambda \delta_{\tau_k}$;
\end{itemize}
for any real $\gamma>0$, there always exists a positive integer $\mathcal{G}$ such that when $G\geq\mathcal{G}$. Then, the following lower bound on the MSE of $\hat{\bm \zeta}^{(k)}$ holds with probability of $1$,
\begin{equation}\label{eq:errorbound}
    \begin{aligned}
    &\mathbb{E}\left\{\left(\hat{\bm \zeta}_{k}-{\bm \zeta}_{k}\right)^{\mathrm{T}}\left(\hat{\bm \zeta}_{k}-{\bm \zeta}_{k}\right)\right\}\\
    &\geq  \operatorname{Tr}\left({\left(\bm J^{(\bm\zeta_k)}\right)^{-1}}\right)>\operatorname{Tr}\left({\left(\breve{\bm J}^{(\bm\zeta_k)}\right)^{-1}}\right)-\gamma\geq\Psi^{(k)}-\gamma,
    %\triangleq4\sqrt[4]{\frac{1}{\operatorname{det}\left(\breve{\bm J}^{(\bm\zeta_k)}\right)}},
    \end{aligned}
\end{equation}
where $\Psi^{(k)}$ is provided in Equation \eqref{eq:lowerboundpsi}.
\end{proposition}
\end{comment}
\begin{figure*}[b]
\vspace{-5pt}
\hrulefill 
\vspace{-3pt}
\begin{equation}\label{eq:lowerboundXi}
    \begin{aligned}
        &\Xi^{(k)}\\ &=\frac{\lambda_c\sigma^2}{2\pi^2d|\gamma_k|^2|\cos(\theta_{\text{Tx},k})\cos(\tilde{\theta}_{\text{Tx},k})|}\sqrt[4]{\underbrace{\frac{(NT_s)^4}{\left(\left(O_1O_6+\mathfrak{R}\left\{M_1M_6\right\}\right)\left(O_4O_5+\mathfrak{R}\left\{M_4M_5\right\}\right)-\left(O_2O_3+\mathfrak{R}\left\{M_2M_3\right\}\right)^2\right)}}_{\text{ }\Xi^{k}_1}}
        \\
        &\times \sqrt[4]{\underbrace{\frac{1}{\left(O_1O_6-\mathfrak{R}\left\{M_1M_6\right\}\right)\left(O_4O_5-\mathfrak{R}\left\{M_4M_5\right\}\right)-\left(O_2O_3-\mathfrak{R}\left\{M_2M_3\right\}\right)^2}}_{\text{ }\Xi^{k}_2}}.
    \end{aligned}
    \end{equation}
    \vspace{-5pt}
\hrulefill 
\begin{equation}\label{eq:slowerboundXi}
    \begin{aligned}
        \Psi^{(k)}=&\underbrace{\frac{\lambda_c^{\frac{3}{2}}\sigma^2}{\sqrt{2}\pi^\frac{5}{2}d^{\frac{3}{2}}|\gamma_k|^2|\cos(\theta_{\text{Tx},k})\cos(\tilde{\theta}_{\text{Tx},k})|}}_{\Psi^{(k)}_1}\underbrace{\sqrt[4]{\frac{T_s^4}{O_1O_4O_5O_6(6N_t^4-11N_t^3+21N_t^2-6N_t)}}}_{\Psi^{(k)}_2}\underbrace{\sqrt{\frac{ 1}{\sin\left(\delta_{\theta_{\text{Tx},k}}\right)}}}_{\Psi^{(k)}_3}
    \end{aligned}
    \end{equation}

\end{figure*}
\begin{proof}
See Appendix \ref{sec:prooferrorbound}.
\end{proof}

According to Equations \eqref{eq:limitM} and \eqref{eq:lowerboundXi}, $\Xi^{(k)}\rightarrow\infty$ as $\delta_{\tau_k}, \delta_{\theta_{\mathrm{Tx},{k}}}\rightarrow 0$; constrained by the boundary, {\it i.e.,} $\delta_{\tau_k,\min}$ and $ \delta_{\theta_{\mathrm{Tx},{k},\min}}$, smaller values for $\left|\delta_{\tau_k}\right|$ and $ \left|\delta_{\theta_{\mathrm{Tx},{k}}}\right|$ are preferred to degrade Eve's estimation accuracy. 
To make the effect of $\delta_{\tau_k}$ and $ \delta_{\theta_{\mathrm{Tx},{k}}}$ on the estimation error more clear, we can further bound the lower bound derived in Equation \eqref{eq:errorbound}. To this end, we note that, according to Equation \eqref{eq:limitM}, there exist two positive real numbers, denoted as $\delta_{\tau_k,\max}$ and $\delta_{\theta_{\mathrm{Tx},k},\max}$, such that for a given constant $\epsilon\triangleq\frac{2\left(O_1O_4O_5O_6-(O_2O_3)^2\right)}{O_1O_6+2O_2O_3+O_4O_5}>0$, the following inequalities hold,  \vspace{-3pt}
\begin{subequations}\label{eq:boundonMiMj}
    \begin{align}
        \left|\mathfrak{R}\left\{M_1^{(k)}M_6^{(k)}\right\}-O_1O_6\right|<\epsilon\\
        \left|\mathfrak{R}\left\{M_2^{(k)}M_3^{(k)}\right\}-O_2O_3\right|<\epsilon\\
        \left|\mathfrak{R}\left\{M_4^{(k)}M_5^{(k)}\right\}-O_4O_5\right|<\epsilon \vspace{-3pt}
    \end{align}
\end{subequations}
if $|\delta_{\theta_{\mathrm{Tx},k}}|\leq\delta_{\theta_{\mathrm{Tx},k},\max}$ and $|\delta_{\tau_k}|\leq\delta_{\tau_k,\max}$, which will be used to bound $\Xi^{(k)}$ with other additional assumptions.

\begin{corollary}\label{coro:errorbound}
    For $k=0,1,\cdots,K$, supposed that
    \begin{itemize}
    \item[A1)] $\delta_{\gamma_k}=\delta_{\gamma_k,\min}=0$,
    \item[A2)] $0<\delta_{\tau_k,\min}\leq\delta_{\tau_k}\leq \delta_{\tau_k,\max}$,
    \item[A3)] $0<\delta_{\theta_{\mathrm{Tx},{k}},\min}\leq\delta_{\theta_{\mathrm{Tx},{k}}}\leq\delta_{\theta_{\mathrm{Tx},k},\max}$,
    \item[A4)]
    $\max_{n_t=0,1,2,\cdots,N_t-1\atop n=0,1,2,\cdots,N-1}n_t\sin\left(\delta_{\theta_{\text{Tx},k}}\right)+n\Lambda\delta_{\tau_k}\leq\frac{\lambda_c}{4d}$,
    \item[A5)] $(N-1)\Lambda\delta_{\tau_{k},\max}\geq\sin\left(\delta_{\theta_{\mathrm{Tx},{k}}}\right)\geq(N-1)\Lambda \delta_{\tau_k}$,
\end{itemize}
the MSE of $\rev{\hat{\bm \zeta}_{k}}$ can be bounded as Equation \eqref{eq:errorbound}, where $\Xi^{(k)}$ is replaced with $\Psi^{(k)}$, given in Equation \eqref{eq:slowerboundXi}.
    
    %the assumptions A1-A5 in Proposition \ref{prop:errorbound} as well as $(N-1)\Lambda\delta_{\tau_{k},\max}\geq\sin\left(\theta_{\text{Tx},k}\right)$ hold, the MSE of $\hat{\bm \zeta}^{(k)}$ can be bounded as Equation \eqref{eq:errorbound} with a simplified $\Psi^{(k)}$ shown in Equation \eqref{eq:slowerboundpsi}.
\end{corollary}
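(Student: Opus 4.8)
The plan is to deduce Corollary~\ref{coro:errorbound} directly from Proposition~\ref{prop:errorbound}. That proposition already certifies, with probability one and for every $G\ge\mathcal{G}$, the bound $\mathbb{E}\{(\hat{\bm\zeta}_{k}-{\bm\zeta}_{k})^{\mathrm{T}}(\hat{\bm\zeta}_{k}-{\bm\zeta}_{k})\}>\frac{1}{G}(\Xi^{(k)}-\psi)$ with $\Xi^{(k)}$ given in Equation~\eqref{eq:lowerboundXi}, so it suffices to prove the purely deterministic inequality $\Xi^{(k)}\ge\Psi^{(k)}$ under A1--A5; this gives $\Xi^{(k)}-\psi\ge\Psi^{(k)}-\psi$ and hence Equation~\eqref{eq:errorbound} with $\Psi^{(k)}$ in place of $\Xi^{(k)}$. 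Writing $\Xi^{(k)}=\Theta_k\,(\Xi^{k}_1)^{1/4}(\Xi^{k}_2)^{1/4}$ with $\Theta_k\triangleq\frac{\lambda_c\sigma^2}{2\pi^2 d|\gamma_k|^2|\cos(\theta_{\mathrm{Tx},k})\cos(\tilde\theta_{\mathrm{Tx},k})|}$, and noting that the prefactor $\Psi^{(k)}_1$ of $\Psi^{(k)}$ equals $\Theta_k$ times a fixed factor depending only on $\lambda_c/d$, the task reduces to lower bounding the two radicands $\Xi^{k}_1$ and $\Xi^{k}_2$ separately so that, after collecting constants, the bound equals $\Theta_k\cdot(\text{const})\cdot\Psi^{(k)}_2\cdot\Psi^{(k)}_3$. (Equivalently, one could re-run the derivation of Appendix~\ref{sec:prooferrorbound} with the weaker inequalities below in place from the outset.)

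For $\Xi^{k}_1$ the idea is to use only the crude inequality $|\mathfrak{R}\{M^{(k)}_iM^{(k)}_t\}|\le O_iO_t$ from Equation~\eqref{eq:boundM}: discarding the nonnegative subtracted square in the denominator of $\Xi^{k}_1$ and bounding each of the two remaining factors by twice the corresponding $O$-product gives $\Xi^{k}_1\ge\frac{(NT_s)^4}{4O_1O_4O_5O_6}$, which after cancelling $O_1O_5\propto N^4$ against $(NT_s)^4$ is bounded away from zero uniformly and essentially controlled by $N_t$ and $T_s$. For $\Xi^{k}_2$ the plan is to recall from the proof of Proposition~\ref{prop:errorbound} that the denominator of $\Xi^{k}_2$ is the $2\times2$ determinant $ab-c^2$ with $a\triangleq O_1O_6-\mathfrak{R}\{M^{(k)}_1M^{(k)}_6\}$, $b\triangleq O_4O_5-\mathfrak{R}\{M^{(k)}_4M^{(k)}_5\}$, $c\triangleq O_2O_3-\mathfrak{R}\{M^{(k)}_2M^{(k)}_3\}$, and that each of $a,b,c$ admits the double-sum representation $\frac{1}{N_t}\sum_{n=0}^{N-1}\sum_{n_t=0}^{N_t-1}w_{n,n_t}\bigl(1-\cos(\chi_{n_t}-\phi_n)\bigr)$ with $\phi_n\triangleq\frac{2\pi d}{\lambda_c}n\Lambda\delta_{\tau_k}$, $\chi_{n_t}\triangleq\frac{2\pi d}{\lambda_c}n_t\sin(\delta_{\theta_{\mathrm{Tx},k}})$ and weights $w_{n,n_t}=n^2$, $n_t^2$, $nn_t$, respectively.

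Because A2 and A3 force $\delta_{\tau_k},\delta_{\theta_{\mathrm{Tx},k}}>0$ (so $\phi_n,\chi_{n_t}\ge0$) and A4 keeps $|\chi_{n_t}-\phi_n|\le\frac{\pi}{2}$, I can use $1-\cos x\le\frac{x^2}{2}$ together with the right inequality of A5, $(N-1)\Lambda\delta_{\tau_k}\le\sin(\delta_{\theta_{\mathrm{Tx},k}})$, which makes $\phi_n\le\chi_1$ and hence bounds every phase by $\mathcal{O}(\sin(\delta_{\theta_{\mathrm{Tx},k}}))$; evaluating the resulting weighted double sums then gives $ab-c^2\le P_k\,\sin^{4}(\delta_{\theta_{\mathrm{Tx},k}})$ with an explicit constant $P_k$ built from $O_1,\dots,O_6$ and from the degree-four polynomial $6N_t^4-11N_t^3+21N_t^2-6N_t=N_t(3N_t-1)(2N_t^2-3N_t+6)$ that appears inside $\Psi^{(k)}_2$. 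Multiplying the two sub-bounds, $\Xi^{(k)}\ge\Theta_k\bigl(\tfrac{(NT_s)^4}{4O_1O_4O_5O_6}\bigr)^{1/4}\bigl(P_k\sin^{4}(\delta_{\theta_{\mathrm{Tx},k}})\bigr)^{-1/4}$, so $\Xi^{(k)}$ scales like $\sin^{-1}(\delta_{\theta_{\mathrm{Tx},k}})$ while $\Psi^{(k)}$ scales only like $\sin^{-1/2}(\delta_{\theta_{\mathrm{Tx},k}})$; the left inequality of A5, $\sin(\delta_{\theta_{\mathrm{Tx},k}})\le(N-1)\Lambda\delta_{\tau_k,\max}$, together with the (already assumed) smallness of $\delta_{\tau_k,\max}$, supplies exactly the upper bound on $\sin(\delta_{\theta_{\mathrm{Tx},k}})$ needed to trade one factor $\sin^{-1}$ for the required $\sin^{-1/2}$ without the residual constant dropping below the value prescribed by $\Psi^{(k)}_2$. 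Collecting constants then yields $\Xi^{(k)}\ge\Psi^{(k)}$, and Proposition~\ref{prop:errorbound} completes the proof.

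The step I expect to be hardest is the bound on $\Xi^{k}_2$: getting $P_k$ small enough that the constants match $\Psi^{(k)}_2$ exactly (rather than up to an extra multiplicative factor) needs three things carried out with care --- (i) invoking A4 so the perturbation phases $\frac{2\pi d}{\lambda_c}(n_t\sin(\delta_{\theta_{\mathrm{Tx},k}})-n\Lambda\delta_{\tau_k})$ stay within a quarter period, making the terms $1-\cos(\cdot)$ sign-definite and quadratically controlled; (ii) invoking A5 both to absorb all $\delta_{\tau_k}$-dependence into $\sin(\delta_{\theta_{\mathrm{Tx},k}})$ --- since $\Psi^{(k)}$ contains no $\delta_{\tau_k}$ --- and to bound $\sin(\delta_{\theta_{\mathrm{Tx},k}})$ from above, which is what makes the looser $\sin^{-1/2}$-type bound genuinely dominated by $\Xi^{(k)}$; and (iii) evaluating the weighted sums $\sum_{n}n^{p}\sum_{n_t}n_t^{q}$ in closed form and simplifying the resulting quartics in $N_t$, the purely mechanical step from which the polynomial in $\Psi^{(k)}_2$ originates.
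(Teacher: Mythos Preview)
Your overall architecture---reduce to the deterministic inequality $\Xi^{(k)}\ge\Psi^{(k)}$ and bound the two radicands $\Xi^{k}_1,\Xi^{k}_2$ separately---matches the paper exactly, and your treatment of $\Xi^{k}_1$ is the same as theirs (drop the squared term, use $|\mathfrak{R}\{M_iM_t\}|\le O_iO_t$, arrive at $(NT_s)^4/(4O_1O_4O_5O_6)$). One minor omission: before dropping the square you need to know the denominator of $\Xi^{k}_1$ is positive; the paper spends A2--A3 and the specific choice of $\epsilon$ to secure this, and you skip it.

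The substantive divergence is in $\Xi^{k}_2$. You bound $1-\cos x\le x^2/2$, which makes each of $a,b$ quadratic in $\sin(\delta_{\theta_{\mathrm{Tx},k}})$ and hence $ab-c^2\le P_k\sin^4(\delta_{\theta_{\mathrm{Tx},k}})$, giving $(\Xi^{k}_2)^{1/4}\gtrsim\sin^{-1}$. The paper instead passes through $|O_iO_t-M_iM_t|$ and uses the \emph{linear} estimate $|1-e^{jx}|\le C|x|$ (this is where A4 is invoked, to keep the phase within a quarter period so the linear bound is valid), then removes absolute values via A5 and evaluates the sums in closed form; this produces bounds on $a$ and $b$ that are \emph{linear} in $\sin(\delta_{\theta_{\mathrm{Tx},k}})$ and $\delta_{\tau_k}$, so $(\Xi^{k}_2)^{1/4}$ already scales as $\sin^{-1/2}$, matching $\Psi^{(k)}_3$ without any trade-down. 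The polynomial $6N_t^4-11N_t^3+21N_t^2-6N_t$ then emerges after substituting the boundary value $\delta_{\tau_k}=\sin(\delta_{\theta_{\mathrm{Tx},k}})/((N-1)\Lambda)$ into the intermediate bound (the paper's $\aleph^{(k)}$).

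Your attempt to repair the scaling mismatch---``trade one factor $\sin^{-1}$ for $\sin^{-1/2}$'' using the left inequality of A5---is where the argument breaks. That step requires $\sin(\delta_{\theta_{\mathrm{Tx},k}})\le(C_{\text{your bound}}/\Psi^{(k)}_1\Psi^{(k)}_2)^2$, and you appeal to the ``(already assumed) smallness of $\delta_{\tau_k,\max}$'' to close it. But $\delta_{\tau_k,\max}$ is not a free small parameter: it is pinned down just before the corollary by the $\epsilon$-condition in Equation~\eqref{eq:boundonMiMj}, with $\epsilon$ a fixed function of $N,N_t$. There is no mechanism in A1--A5 guaranteeing $(N-1)\Lambda\delta_{\tau_k,\max}$ is small enough for your constant $P_k$ (which comes from the quadratic bound and will not coincide with the paper's polynomial) to sit below $\Psi^{(k)}_2$. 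So the constants do not match, and the proof as written does not deliver $\Xi^{(k)}\ge\Psi^{(k)}$. The fix is exactly what the paper does: use the linear bound on $|1-e^{jx}|$ instead of the quadratic one on $1-\cos x$, so that the correct $\sin^{-1/2}$ scaling falls out directly.
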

\begin{proof}
    See Appendix \ref{sec:proofserrorbound}.
\end{proof}

As observed in Equation \eqref{eq:slowerboundXi}, $\Psi^{(k)}$ can be decomposed into three terms, {\it i.e.,} $\Psi^{(k)}=\Psi^{(k)}_1\Psi^{(k)}_2\Psi^{(k)}_3$. Define $\operatorname{SNR}_{k}\triangleq\frac{|\gamma_k|^2}{\sigma^2}$ as the received SNR for the $k$-th true path. Several key properties of $\Psi^{(k)}$ are listed as follows.
\begin{itemize}
    \item[P1)] According to $\Psi^{(k)}_1$, the lower bound on the MSE of $\rev{{\bm \zeta}_{k}}$ is not only inversely proportional to $\operatorname{SNR}_k$ but also related to the AODs;
    \item[P2)] Coinciding with the analysis for the atomic norm minimization based method \cite{Li}, we have \vspace{-3pt}
    \begin{equation}
        \Psi^{(k)}_2=\frac{1}{\mathcal{O}(BNN^{\frac{3}{2}}_t)}, \vspace{-3pt}
    \end{equation}
    which suggests employing narrower bandwidth $B$, smaller number of transmit antennas and sub-carriers, {\it i.e.,} $N_t$, and $N$, to improve location-privacy enhancement;
    \item[P3)] The value of $\Psi^{(k)}_3$ increases as $\sqrt{\sin\left(\delta_{\theta_{\text{Tx},k}}\right)}$ decreases; supposed that the value of $\delta_{\theta_{\text{Tx},k}}$ is small enough such that $\delta_{\theta_{\text{Tx},k},\min}\leq\delta_{\theta_{\text{Tx},k}}\leq\frac{\pi}{2}$, the value of $\Psi^{(k)}_3$ monotonically decreases with respect to $\delta_{\theta_{\text{Tx},k}}$ and the largest value of $\Psi^{(k)}_3$ is achieved when $\delta_{\theta_{\text{Tx},k}}=\delta_{\theta_{\text{Tx},k},\min}$. In addition, since we set $ \delta_{\tau_k}=\frac{\sin\left(\delta_{\theta_{\mathrm{Tx},{k}}}\right)}{(N-1)\Lambda}$ according to the proof of Corollary \ref{coro:errorbound}, the value of $\delta_{\tau_k}$ is reduced as well when $\delta_{\theta_{\text{Tx},k}}$ decreases, showing that Eve’s eavesdropping ability can be efficiently degraded if the injected fake paths are close to the true paths.
\end{itemize}

%Recall that the minimal separation for the channel parameters is desired to be reduced with SAN to make it harder to decrease Eve's estimation, preserving Alice's location-privacy. As observed in Equation \eqref{eq:traceJzeta}, the effect of the reduced minimal separation is clearer when we study the lower bound in Proposition \ref{prop:errorbound} with $\gamma_k=\tilde{\gamma}_k$, where $\Psi^{(k)}$ can 

\begin{remark}[Discussions on assumptions]
    %The assumptions for Proposition \ref{prop:errorbound} and Corollary \ref{coro:errorbound} are mild due to the following reasons. 
    Assumption A1 can be realized with a transmit \revis{precoder} that will be proposed in Section \ref{sec:method}. With respect to the upper bounds in assumptions A2 and A3, they are needed to show Equation \eqref{eq:Xi1pos}; it can be numerically shown that $\delta_{\tau_k,\max}$ and $\delta_{\theta_{\mathrm{Tx},k},\max}$ can be quite large in practice such that $\delta_{\tau_k,\min}\leq\delta_{\tau_k,\max}$ and $\delta_{\theta_{\mathrm{Tx},k},\min}\leq\delta_{\theta_{\mathrm{Tx},k},\max}$ can be easily satisfied. %the assumption A5 and the additional assumption for Corollary \ref{coro:errorbound} can be easily satisfied. Considering the typical setting for the MISO OFDM system, we can verify the assumption 4 if the value of $\left|\delta_{\tau_k}\right|$ and $ \left|\delta_{\theta_{\mathrm{Tx},{k}}}\right|$ are small enough.
    For small values of $\delta_{\tau_k}$ and $\delta_{\theta_{\mathrm{Tx},k}}$, the assumptions A4 and A5 can be met.
\end{remark}

\begin{remark}[Orthogonal true paths] Inspired by the analysis of the NLOS paths for the single-carrier mmWave MIMO channels in \cite{Abu-Shaban,Mendrzik}, due to the low-scattering sparse nature of the mm-Wave channels, the true paths are not close to each other and we have \vspace{-3pt}
\begin{equation}\label{eq:approx}
    \frac{2}{\sigma^2}\sum_{n=0}^{N-1}\sum_{g=1}^{G}\mathfrak{R}\left\{\left(\frac{\partial u^{(g,n)}}{\partial \xi_k}\right)^{\mathrm{H}}\frac{\partial  u^{(g,n)}}{\partial \xi_{k^{\prime}}}\right\}\approx0, \quad k\neq k^{\prime}, \vspace{-3pt}
\end{equation}
for a large number of symbols and transmit antennas, where $\xi_k\in\{\tau_k,\theta_{\mathrm{Tx},k},\mathfrak{R}\{{\gamma_k}\},\mathfrak{I}\{{\gamma_k}\}\}$ and $\xi_{k^{\prime}}$ is defined similarly, with $k,k^{\prime}=0,1,\cdots,K-1$. Thus, the true paths of mmWave MISO OFDM channels are approximately orthogonal to each other, {\it i.e.,} by grouping the true channel parameters path-by-path, the associated FIM is almost a block diagonal matrix. Furthermore, the $k$-th path is designed to be close to the $k$-th true path so the estimation accuracy for $\rev{{\bm \zeta}_{k}}$ does not rely much on the uncertainties for the channel parameters of the other paths; if the channel coefficients are also assumed to be known, $\operatorname{Tr}\left({\left(\bm J^{(\bm\zeta_k)}\right)^{-1}}\right)$ is nearly the CRLB for the MSE of $\rev{\hat{\bm \zeta}_{k}}$.
\end{remark}

\begin{comment}
Inspired by the analysis of the NLOS paths for the single-carrier mmWave MIMO channels in \cite{?,?}, we can simplify the FIM for the estimation of the mmWave MIMO OFDM channels, using the following approximation with the following lemma, if the SAN is not injected.%$\frac{2}{\sigma^2}\sum_{n=0}^{N-1}\sum_{g=1}^{G}\mathfrak{R}\left\{\left(\frac{\partial\bm u^{(g,n)}}{\partial \xi_r}\right)^{\mathrm{H}}\frac{\partial \bm u^{(g,n)}}{\partial \xi_k}\right\}$ with $k\neq r$
\begin{proposition}\label{prop:approx}
If $\theta_{\text{Tx},k}\neq\theta_{\text{Tx},r}$ holds for any $k\neq r$ with $k,r\in\{0,1,\cdots,K\}$, the cross-correlation between any two distinct paths converges almost surely (a.s.) to $0$, {\it i.e.,}
\begin{equation}\label{eq:approxas}
    \begin{aligned}
        &\mathcal{P}\left(\lim_{G,N_t\rightarrow\infty}\frac{2}{\sigma^2}\sum_{n=0}^{N-1}\sum_{g=1}^{G}\mathfrak{R}\left\{\left(\frac{\partial\bm u^{(g,n)}}{\partial \xi_r}\right)^{\mathrm{H}}\frac{\partial \bm u^{(g,n)}}{\partial \xi_k}\right\}=0\right)\\
        &\quad\quad\quad\quad\quad\quad\quad\quad\quad\quad\quad\quad\quad\quad\quad\quad\quad\quad\quad\quad\quad\quad=1,
    \end{aligned}
\end{equation}
with $k\neq r$, where $\xi_k\in\{\tau_k,\theta_{\mathrm{Tx},k},\mathfrak{R}\{{\gamma_k}\},\mathfrak{I}\{{\gamma_k}\}\}$ and $\xi_r$ is defined similarly.
\end{proposition}
\begin{proof}
    See Appendix \ref{sec:deapprox}.
\end{proof} 

Note that Proposition \ref{prop:approx} indicates
\end{comment}

\begin{remark}[Degraded localization accuracy]
Since Eve cannot distinguish between the true paths and fake paths, as proved in Section \ref{sec:iden}, all the location-relevant channel parameters are estimated for localization. Hence, though it is unclear how the injected \re{fake paths} affect the estimation accuracy of the individual channel parameters from Proposition \ref{prop:errorbound} and Corollary \ref{coro:errorbound}, the derived lower bound in Equation \eqref{eq:errorbound} still indicates that Eve's localization accuracy can be effectively decreased with the proper design of \re{the FPI} according to Equation \eqref{eq:FIMloc}. \rev{We note that the CRLB analysis is to show the efficacy of the proposed FPI strategy in a relatively general sense; we are not designing localization schemes. The optimal parameter design depends on the specific estimator.}
\end{remark}

\vspace{-5pt}
\section{\revis{Precoder} Design for the \re{Fake Path Injection}}\label{sec:method}
From Section \ch{\ref{sec:crlb}}, it is clear that the  injection of fake paths will increase location privacy.  However,  it is impractical to create extra physical scatterers to generate the fake paths.  In the sequel, following the principle of the proposed framework, we design a transmit \revis{precoding} strategy that ensures the creation of fake paths that are close to the true ones, to efficiently reduce the LPL to Eve without the need for \ch{the} CSI.
\vspace{-12pt}
\subsection{Alice's \revis{Precoder}}\label{sec:precoder}
Let $\bar{\delta}_\tau$ and $\bar{\delta}_{\theta_{\text{TX}}}$ represent two parameters used for \revis{precoding} design, \rev{which are not functions of the CSI.} 
\rev{To enhance the location privacy, Alice still {employs} the mmWave MISO OFDM signaling, but designs a precoding matrix ${\bm \Phi}^{(n)}$ as\vspace{-3pt}
\begin{equation}
\rev{\begin{aligned}\label{eq:precoder}%\label{eq:fakebeamformer}
   \bm\Phi^{(n)}\triangleq\bm I_{N_t} + \sqrt{N_t}e^{-j\frac{2\pi n \bar{\delta}_{\tau}}{NT_s}}\operatorname{diag}\left(\bm \alpha\left(\bar{\delta}_{\theta_\text{Tx}}\right)^{\mathrm{H}} \right), \vspace{-3pt}
\end{aligned}}
\end{equation}
for the pilot signals transmitted over the $n$-th sub-carrier}\footnote{\rev{Thus, additional power could be used for the proposed FPI strategy; however, the total transmit power can be maintained via  normalization.}}. By analyzing the received signals in the following subsections, we will observe that adopting the transmit \revis{precoder} in
Equation \eqref{eq:precoder} can equivalently inject $\rev{K+1}$ virtual fake paths to Eve's mmWave MISO OFDM channel\footnote{The proposed \revis{precoding} strategy can be directly extended to add $\nu \rev{(K+1)}$ virtual paths with $\nu\in\mathcal{N}^+$.}. 
%, where $\bm I\in\mathbb{R}^{N_t\times N_t}$ is the identity matrix and the operator $\operatorname{diag}(\bm a)$ represents a diagonal matrix whose diagonal elements are given by vector $\bm a$.
\begin{comment}
Let $\bar{\delta}_\tau$ and $\bar{\delta}_{\theta_{\text{TX}}}$ represent two parameters used for \revis{precoding} design. 
To enhance the location-privacy, Alice still {employs} the mmWave MISO OFDM signaling according to Section \ref{sec:signal}, but designs {her transmitter} \revis{precoder} $\tilde{\bm f}^{(g,n)}$ as \vspace{-3pt}
\begin{equation}
\begin{aligned}\label{eq:fakebeamformer}
    \tilde{\bm f}^{(g,n)} \triangleq \left(\bm I_{N_t} + \sqrt{N_t}e^{-j\frac{2\pi n \bar{\delta}_{\tau}}{NT_s}}\operatorname{diag}\left(\bm \alpha\left(\bar{\delta}_{\theta_\text{Tx}}\right)^{\mathrm{H}} \right)\right){\bm f}^{(g,n)}, \vspace{-3pt}
\end{aligned}
\end{equation}
for the $g$-th pilot signal transmitted over the $n$-th sub-carrier\footnote{\rev{This indicates that additional power is need for the proposed FPI strategy; However, total transmit power can be maintained with power normalization}}. By analyzing the received signals in the following subsections, we will observe that adopting the transmit \revis{precoder} in
Equation \eqref{eq:fakebeamformer} equivalently injects $K$ virtual fake paths to the mmWave MISO OFDM channels\footnote{The proposed \revis{precoding} strategy can be directly extended to virtually add $\nu K$ paths with $\nu\in\mathcal{N}^+$.}. 
%, where $\bm I\in\mathbb{R}^{N_t\times N_t}$ is the identity matrix and the operator $\operatorname{diag}(\bm a)$ represents a diagonal matrix whose diagonal elements are given by vector $\bm a$.
\end{comment}
\vspace{-12pt}
\subsection{Bob's Localization}\label{Bobloc}
Through the public channel ${\bm h}^{(n)}_{\text{Bob}}$, Bob receives \vspace{-3pt}
\begin{equation}
\rev{
{y}^{(g,n)}_{\text{Bob}}={\bm h}^{(n)}_{\text{Bob}}  \bm\Phi^{(n)}{\boldsymbol{f}}^{(g,n)}x^{(g,n)}+{w}_{\text{Bob}}^{(g,n)}},\label{rsignalbBob} \vspace{-3pt}
\end{equation}
for $n=0,1,\cdots,N-1$ and $g=1,2,\cdots,G$. \rev{Assume Bob has the knowledge of the structure of Alice's \revis{precoder} {in Equation \eqref{eq:precoder}, {\it i.e.,} the construction of ${\bm \Phi}^{(g,n)}$ given $\bar{\delta}_\tau$, and $\bar{\delta}_{\theta_{\text{TX}}}$.}} By leveraging the secure channel, Bob also knows {$\bar{\bm\delta}\triangleq[\bar{\delta}_\tau,\bar{\delta}_{\theta_{\text{TX}}}]^\mathrm{T}\in\mathbb{R}^2$ that is \ch{the} shared information}. \rev{Bob can straightforwardly compensate for the fake paths by constructing the effective pilot signals $\tilde{\bm s}^{(g,n)}$ based on the known pilot signal $\bm s^{(g,n)}$ and the precoding matrix $\bm\Phi$ as  follows
\begin{equation}
    \tilde{\bm s}^{(g,n)}\triangleq \bm\Phi{\bm s}^{(g,n)}. 
\end{equation} 
Then, the equivalent representation of Equation \eqref{rsignalbBob} is Equation \eqref{rsignalnBob}, {\it i.e.,} ${y}^{(g,n)}_{\text{Bob}}={\bm h}^{(n)}_{\text{Bob}}  \tilde{\boldsymbol{s}}^{(g,n)}+{w}_{\text{Bob}}^{(g,n)}$.  Thus, given  $\tilde{\bm s}^{(g,n)}$, Bob can estimate his true channel, ${\bm h}^{(n)}_{\text{Bob}}$.} %Bob can remove the fake paths. 
{Note that the amount of shared information $\bar{\bm\delta}$ does not increase with respect to the number of received signal samples, {\it i.e.,} $NG$. %, which is determined by the quantization and coding strategies. 
As assumed in Section \ref{sec:framework}, Bob receives $\bar{\bm\delta}$ noiselessly through the secure channel.}

\vspace{-10pt}
\subsection{Eve's Localization}
%Assume that the shared information $\bar{\bm\delta}$ is refreshed at a certain rate such that Eve cannot decipher it. 
\rev{Assume that the precoding matrix $\bm\Phi^{(n)}$ proposed in Equation \eqref{eq:precoder} is unknown to Eve.}
\rev{Then, the following received signal has to be {used} if Eve attempts to estimate Alice's position, \vspace{-3pt}
\begin{equation}
    {y}^{(g,n)}_{\text{Eve}}={\bm h}^{(n)}_{\text{Eve}}  \bm\Phi^{(n)}{\bm{f}}^{(g,n)}x^{(g,n)}. \vspace{-3pt}
\end{equation}
Since Eve has no access to the shared information but knows the pilot signal ${\bm{s}}^{(g,n)}={\bm{f}}^{(g,n)}x^{(g,n)}$, the channel that Eve aims to estimate is given by
\vspace{-3pt}
\begin{equation}
    \begin{aligned}
    &{\bm h}_{\text{Eve}}^{(n)}\bm\Phi^{(n)}\\
    &={\bm h}_{\text{Eve}}^{(n)}\left(\bm I_{N_t} + \sqrt{N_t}e^{-j\frac{2\pi n \bar{\delta}_{\tau}}{NT_s}}\operatorname{diag}\left(\bm \alpha\left(\bar{\delta}_{\theta_\text{Tx}}\right)^{\mathrm{H}} \right)\right)\\
    &={\bm h}_{\text{Eve}}^{(n)}+{\bm h}_{\text{Eve}}^{(n)}\sqrt{N_t}e^{-j\frac{2\pi n \bar{\delta}_{\tau}}{NT_s}}\operatorname{diag}\left(\bm \alpha\left(\bar{\delta}_{\theta_\text{Tx}}\right)^{\mathrm{H}} \right)\\
    &={\bm h}_{\text{Eve}}^{(n)}+\left(\sqrt{N_t}\sum_{k=0}^{K}\gamma_k e^{\frac{-j 2\pi n\tau_k}{N T_{s}}}\boldsymbol{ \alpha}\left(\theta_{\mathrm{Tx},k}\right)^{\mathrm{H}}\right)\\
    &\quad\quad\quad\quad\quad\quad\quad\quad\quad\times\sqrt{N_t}e^{-j\frac{2\pi n \bar{\delta}_{\tau}}{NT_s}}\operatorname{diag}\left(\bm \alpha\left(\bar{\delta}_{\theta_\text{Tx}}\right)^{\mathrm{H}} \right)\\
    &={\bm h}_{\text{Eve}}^{(n)}\\
    &\quad\quad+{N_t}\sum_{k=0}^{K}\gamma_k e^{\frac{-j 2\pi n(\tau_k+\bar{\delta}_{\tau})}{N T_{s}}}\boldsymbol{ \alpha}\left(\theta_{\mathrm{Tx},k}\right)^{\mathrm{H}}\operatorname{diag}\left(\bm \alpha\left(\bar{\delta}_{\theta_\text{Tx}}\right)^{\mathrm{H}} \right)\\
    &={\bm h}_{\text{Eve}}^{(n)}+\sqrt{N_t}\sum_{k=0}^{K}\gamma_k e^{\frac{-j 2\pi n\tilde{\tau}_k}{N T_{s}}}\boldsymbol{ \alpha}\left(\tilde{\theta}_{\mathrm{Tx},k}\right)^{\mathrm{H}}\\
    &={\bm h}_{\text{Eve}}^{(n)}+\tilde{\bm h}_{\text{Eve}}^{(n)},\label{eq:fakechannel}
\end{aligned}
\end{equation}
where $\tilde{\bm h}_{\text{Eve}}^{(n)}\triangleq\sqrt{N_t}\sum_{k=0}^{K}\gamma_k e^{\frac{-j 2\pi n\tilde{\tau}_k}{N T_{s}}}\boldsymbol{ \alpha}\left(\tilde{\theta}_{\mathrm{Tx},k}\right)^{\mathrm{H}}$ corresponds to the fake paths injected to Eve’s channel via the proposed precoding strategy in Section \ref{sec:precoder}.}
%\begin{equation}
%\begin{aligned}
%{y}^{(g,n)}_{\text{Eve}}&={\bm h}^{(n)}_{\text{Eve}}  \tilde{\boldsymbol{f}}^{(g,n)}x^{(g,n)}+{w}_{\text{Eve}}^{(g,n)}\\
%&=\left({\bm h}^{(n)}_{\text{Eve}}+\tilde{\bm h}^{(n)}\right)  {\boldsymbol{s}}^{(g,n)}+{w}_{\text{Eve}}^{(g,n)}\\
%&={\bm h}^{(n)}_{\text{Eve}}{\boldsymbol{s}}^{(g,n)}+{ \xi}^{(n)} +{w}_{\text{Eve}}^{(g,n)},\label{rsignalbEve} \vspace{-3pt}
%\end{aligned}
%\end{equation}
%where ${ \xi}^{(n)}$ and $\tilde{\bm h}^{(n)}$ are defined in Equations \eqref{eq:san} and \eqref{fakechannel}, respectively, with 
Herein, we have $\tilde{K}=K$ and the artificial channel parameters are designed as \vspace{-5pt}
\begin{subequations}\label{eq:cpbeamformer}
    \begin{align}
        \tilde{\gamma}_{{k}}&={\gamma}_{{k}},\\
        \tilde\tau_k&=\tau_k+\bar{\delta}_\tau,\label{eq:tildetau}\\
        \tilde\theta_{\text{Tx},k}&=\arcsin(\sin(\theta_{\text{Tx},k})+\sin(\bar{\delta}_{\theta_{\text{Tx}}})).\label{eq:tildethetatx} \vspace{-5pt}
    \end{align}
\end{subequations}
\rev{The proposed scheme is illustrated in Figure \ref{fig:precoderFPI}. We underscore that our analysis is very generous to Eve, as she has side-information that is unlikely to be known: the transmitted symbols, $x^{(g,n)}$ and beamforming vectors, $\bm{f}^{(g,n)}$.}
\begin{figure}[t]
\centering
\includegraphics[scale=0.475]{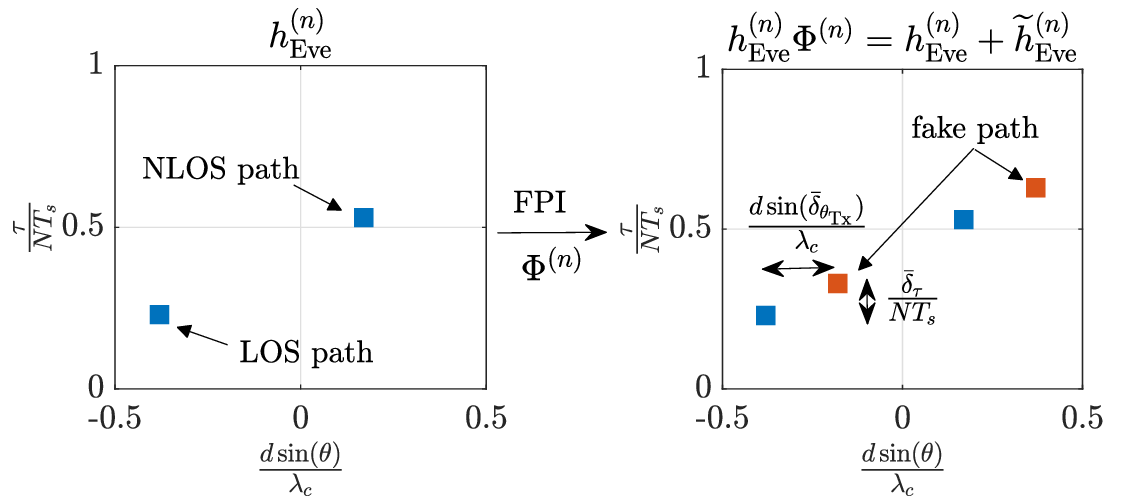}\vspace{-10pt}
\caption{\rev{An illustration of the proposed precoding design for FPI.}}\vspace{-14pt}%
\label{fig:precoderFPI}
\end{figure}

\rev{Although the injected fake paths are determined by both the true paths and the design parameters $\bar{\delta}_{\tau},\bar{\delta}_{\theta_\text{Tx}}$, the precoding matrix $\bm\Phi^{(n)}$ proposed in Equation \eqref{eq:precoder}
is not a function of the channel parameters, and fake paths can be added to the channel without the CSI.} To be more specific, if Alice uses the proposed transmit \revis{precoder}, the differences defined in Equation \eqref{eq:diff} are given by \vspace{-5pt}
\begin{subequations}\label{eq:diffbeamformer}
\begin{align}
\delta_{\gamma_0}&=\delta_{\gamma_1}=\cdots=\delta_{\gamma_K}=0,\label{eq:diffgammabeamformer}\\
\delta_{\tau_0}&=\delta_{\tau_1}=\cdots=\delta_{\tau_K}=\bar{\delta}_{\tau},\\
\delta_{\theta_{\mathrm{Tx},{0}}}&=\delta_{\theta_{\mathrm{Tx},{1}}}=\cdots=\delta_{\theta_{\mathrm{Tx},{K}}}=\bar{\delta}_{\theta_{\mathrm{Tx},}}, \vspace{-3pt}
\end{align}
\end{subequations}
Hence, given that the values of $\bar{\delta}_\tau$ and $\bar{\delta}_{\theta_{\text{Tx}}}$ are small enough, the minimal separation for TOAs and that for AODs are $\left|\frac{\bar{\delta}_\tau}{NT_s}\right|$ and $\left|\frac{d\sin(\bar{\delta}_{\theta_\text{Tx}})}{\lambda_c}\right|$, respectively, according to the definition of $\Delta_{\min}(\cdot)$ provided in Equation \eqref{eq:defminsep}, which can be efficiently decreased to degrade Eve's channel structure. \rev{We note that, as shown in Equation \eqref{eq:diffgammabeamformer}, using the proposed precoding strategy, the condition $\delta_{\gamma_k}=\delta_{\gamma_k,\min}=0$ holds for Proposition \ref{prop:errorbound} and Corollary \ref{coro:errorbound}.}
\vspace{-5pt}
\subsection{Prior Knowledge of the \revis{Precoder} Structure}\label{subsec:discussbeamformer}
%Supposed that Eve does not have any prior information regarding the \revis{precoder} structure, she would treat ${\bm f}^{(g,n)}$ as Alice's \revis{precoder} and employ ${\bm s}^{(g,n)}$ to infer Alice's position. For such a case, the condition to ensure the geometrical feasibility of the fake paths is provided in the following corollary based on Proposition \ref{prop:iden}.
Supposed that Eve does not have any prior information regarding the \revis{precoder} structure, she would employ ${\bm s}^{(g,n)}$ to infer Alice's position. For such a case, the condition to ensure the geometrical feasibility of the fake paths is provided in the following corollary based on Proposition \ref{prop:iden}.

\begin{corollary}
If Alice uses the designed \revis{precoder} in Equation \eqref{eq:precoder} with $\bar{\delta}_\tau>0$ to inject the \rev{fake paths}, Eve cannot distinguish the fake paths, even when she can perfectly estimate the TOAs and AODs of the fake paths.\label{coro:iden}
\end{corollary}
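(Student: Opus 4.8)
The plan is to reduce the statement directly to Proposition~\ref{prop:iden}, since Corollary~\ref{coro:iden} addresses exactly the scenario in which Eve lacks knowledge of the precoder structure and therefore treats the unprecoded pilots $\bm s^{(g,n)}$ as if they were the transmitted signals. Proposition~\ref{prop:iden} already asserts that Eve cannot distinguish the injected fake paths provided $c\tilde{\tau}_{k}\geq\|\bm z-\bm p\|_2$ for all $k$, so the entire task is to verify this TOA condition for the fake TOAs produced by the precoder of Equation~\eqref{eq:precoder}.

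First I would record a geometric lower bound on the \emph{true} TOAs of Eve's channel. Using Equation~\eqref{eq:geometry} with $\bm v_0\triangleq\bm z$, the LOS path satisfies $c\tau_0=\|\bm p-\bm z\|_2$, while for each NLOS path $k=1,\dots,K$ the triangle inequality gives $c\tau_k=\|\bm z-\bm v_k\|_2+\|\bm p-\bm v_k\|_2\geq\|\bm p-\bm z\|_2$. Hence $c\tau_k\geq\|\bm z-\bm p\|_2$ for every $k=0,1,\dots,K$. Next I would invoke the precoder-induced relation $\tilde{\tau}_k=\tau_k+\bar{\delta}_\tau$ from Equation~\eqref{eq:tildetau} (recall that $\tilde{K}=K$ for this design, per Section~\ref{sec:method}): combining this with the previous bound and the hypothesis $\bar{\delta}_\tau>0$ yields $c\tilde{\tau}_k=c\tau_k+c\bar{\delta}_\tau\geq\|\bm z-\bm p\|_2+c\bar{\delta}_\tau>\|\bm z-\bm p\|_2$, so the hypothesis of Proposition~\ref{prop:iden} is satisfied for every fake path. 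Applying Proposition~\ref{prop:iden} then establishes that Eve cannot distinguish the fake paths from the true ones.

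Finally I would address the clause ``even when she can perfectly estimate the TOAs and AODs of the fake paths.'' The conclusion of Proposition~\ref{prop:iden} is precisely that each fake path is \emph{geometrically feasible} in the sense of Definition~\ref{def:feasible}, i.e., there exists a scatterer location $\tilde{\bm v}_k$ mapped by Equation~\eqref{eq:geometry} to the pair $\{\tilde{\tau}_k,\tilde{\theta}_{\mathrm{Tx},k}\}$. Consequently, even if Eve recovers $\{\tilde{\tau}_k,\tilde{\theta}_{\mathrm{Tx},k}\}$ exactly, she has no consistency check that would flag them as artificial, since a genuine scatterer at $\tilde{\bm v}_k$ would produce identical channel parameters; this is exactly the qualifier we wish to conclude.

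Since the argument is a one-line reduction once the TOA condition is checked, I do not expect a genuine obstacle here; the only mild point is confirming the feasibility bound $c\tau_k\geq\|\bm z-\bm p\|_2$, which follows immediately from the triangle inequality applied to the delay expression in Equation~\eqref{eq:geometry}. (If desired, one could additionally remark that the AOD shift $\tilde{\theta}_{\mathrm{Tx},k}=\arcsin(\sin(\theta_{\mathrm{Tx},k})+\sin(\bar{\delta}_{\theta_{\mathrm{Tx}}}))$ in Equation~\eqref{eq:tildethetatx} is well defined for sufficiently small $\bar{\delta}_{\theta_{\mathrm{Tx}}}$, but this plays no role in identifiability, which by Proposition~\ref{prop:iden} hinges only on the TOA inequality.)
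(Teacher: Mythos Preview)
Your proposal is correct and follows essentially the same approach as the paper: verify $c\tilde{\tau}_k>c\tau_k\geq\|\bm z-\bm p\|_2$ via Equation~\eqref{eq:tildetau} together with the triangle inequality applied to Equation~\eqref{eq:geometry}, and then invoke Proposition~\ref{prop:iden}. Your additional paragraph unpacking the ``even when she can perfectly estimate'' clause in terms of Definition~\ref{def:feasible} is a nice elaboration but not strictly needed, since that is already the content of Proposition~\ref{prop:iden}.
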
 
\begin{proof}
    Given $\bar{\delta}_\tau>0$, we have  \vspace{-5pt}
    \begin{equation}
    c\tilde{\tau}_k\overset{\mathrm{(m)}}{>}c\tau_k\overset{\mathrm{(n)}}{\geq}\|\bm z-\bm p\|_2, \vspace{-5pt}
    \end{equation}
    where $\mathrm{(m)}$ follows from Equation \eqref{eq:tildetau} while $\mathrm{(n)}$ holds due to the geometry and the triangular inequality. Then, according to Proposition \ref{prop:iden}, the proof is concluded.
\end{proof}

Consequently, as analyzed in Section \ref{sec:iden}, using a proper choice of $\bar{\bm\delta}$, the proposed \re{FPI} can mislead Eve into believing that there are $2K+1$ NLOS paths while the existing paths heavily overlaps. Hence, it is unnecessary to refresh the shared information $\bar{\bm \delta}$ in the \revis{precoding} design to prevent it from being deciphered in practice if Eve does not actively attempt to snoop Alice's \revis{precoder} structure. Furthermore, as $\bar{\delta}_\tau$ and $\bar{\delta}_{\theta_{\text{Tx}}}$ \ch{become smaller}, the estimation error for all the location-relevant channel parameters tends to significantly increase based on the analysis of the associated FIM in Section \ref{subsec:lberror}; smallest values of $\bar{\delta}_\tau$ and $\bar{\delta}_{\theta_{\text{Tx}}}$ are desired according to the derived lower bound on estimation error using Equation \eqref{eq:slowerboundXi}.

On the other hand, if Alice’s \revis{precoder} structure is unfortunately leaked, Eve would learn the shared information. To be more precise, in contrast to Bob who exploits ${y}^{(g,n)}_{\text{Bob}}$ and $\tilde{\bm  s}^{(g,n)}_{\text{Bob}}$ to estimate $\left\{\left\{ \tau_k\right\},\left\{\theta_{\mathrm{Tx},{k}}\right\}, \mathfrak{R}\{{\gamma}_k\}, \mathfrak{I}\{{\gamma}_k\}\right\}$, Eve can infer $\boldsymbol{\chi}\triangleq\left\{\left\{ \tau_k\right\},\left\{\theta_{\mathrm{Tx},{k}}\right\}, \mathfrak{R}\{{\gamma}_k\}, \mathfrak{I}\{{\gamma}_k\},\bar{\delta}_\tau,\bar{\delta}_{\theta_{\text{TX}}}\right\}$ using ${y}^{(g,n)}_{\text{Eve}}$ and ${\bm  s}^{(g,n)}_{\text{Eve}}$. The corresponding FIM for the channel estimation and localization can be derived, similar to Equations \eqref{eq:FIM} and \eqref{eq:FIMloc}. %Though small values of $\bar{\delta}_\tau$ and $\bar{\delta}_{\theta_{\text{Tx}}}$ are still desired We note that 
We denote by $\boldsymbol{J}^{(\boldsymbol{\chi})}\in\mathbb{R}^{(4K+6)\times(4K+6)}$ the FIM for Eve's channel estimation with the prior knowledge of Alice's \revis{precoder} structure, whose asymptotic property is studied as follows. 
\begin{proposition}\label{prop:singularFIMbeamformer} As    $\bar\delta_{\tau_k}, \bar\delta_{\theta_{\mathrm{Tx},{k}}}\rightarrow 0$ and $G,N\rightarrow \infty$,
    $\operatorname{Rank}\{\boldsymbol{J}^{(\boldsymbol{\chi})}\}\rightarrow \Omega^{(\boldsymbol{\chi})}$ a.s., where $\Omega^{(\boldsymbol{\chi})}$ is an integer with $\Omega^{(\boldsymbol{\chi})}<4K+6$.
\end{proposition}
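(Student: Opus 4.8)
The plan is to show that, in the stated regime, the (deterministic) Fisher information matrix $\boldsymbol J^{(\boldsymbol\chi)}$ is rank deficient because two of its $4K+6$ score directions degenerate into fixed linear combinations of the remaining ones. First I would write down the noiseless observation Eve fits: by Equation \eqref{eq:fakechannel} it is $\mu^{(g,n)}(\boldsymbol\chi)\triangleq\bm h^{(n)}_{\text{Eve}}\bm\Phi^{(n)}\bm s^{(g,n)}=\big(\bm h^{(n)}_{\text{Eve}}+\tilde{\bm h}^{(n)}_{\text{Eve}}\big)\bm s^{(g,n)}$, where by Equation \eqref{eq:cpbeamformer} the fake-path parameters satisfy $\tilde\gamma_k=\gamma_k$, $\tilde\tau_k=\tau_k+\bar\delta_\tau$ and $\sin\tilde\theta_{\mathrm{Tx},k}=\sin\theta_{\mathrm{Tx},k}+\sin\bar\delta_{\theta_{\mathrm{Tx}}}$. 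As in Equation \eqref{eq:FIM}, $\boldsymbol J^{(\boldsymbol\chi)}$ is, up to the factor $2/\sigma^2$, the real Gram matrix of the stacked (realified) score vectors $\{\partial\mu^{(g,n)}/\partial\boldsymbol\chi[r]\}_{g,n}$, so $\operatorname{Rank}\{\boldsymbol J^{(\boldsymbol\chi)}\}$ equals the real dimension of their span; it therefore suffices to exhibit, in the limit, a real linear dependency for each of the two ``precoder'' directions $\partial\mu^{(g,n)}/\partial\bar\delta_\tau$ and $\partial\mu^{(g,n)}/\partial\bar\delta_{\theta_{\mathrm{Tx}}}$.

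Next I would differentiate. Since $\bar\delta_\tau,\bar\delta_{\theta_{\mathrm{Tx}}}$ enter only through $\tilde{\bm h}^{(n)}_{\text{Eve}}$, with $\partial\tilde\tau_k/\partial\bar\delta_\tau=1$ and $\partial\tilde\theta_{\mathrm{Tx},k}/\partial\bar\delta_{\theta_{\mathrm{Tx}}}=\cos\bar\delta_{\theta_{\mathrm{Tx}}}/\cos\tilde\theta_{\mathrm{Tx},k}$ for every $k$, the chain rule gives $\partial\mu^{(g,n)}/\partial\bar\delta_\tau=\sum_{k=0}^{K}\big(\partial\tilde{\bm h}^{(n)}_{\text{Eve}}/\partial\tilde\tau_k\big)\bm s^{(g,n)}$ and $\partial\mu^{(g,n)}/\partial\bar\delta_{\theta_{\mathrm{Tx}}}=\sum_{k=0}^{K}\tfrac{\cos\bar\delta_{\theta_{\mathrm{Tx}}}}{\cos\tilde\theta_{\mathrm{Tx},k}}\big(\partial\tilde{\bm h}^{(n)}_{\text{Eve}}/\partial\tilde\theta_{\mathrm{Tx},k}\big)\bm s^{(g,n)}$. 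On the other hand $\tau_k$ and $\theta_{\mathrm{Tx},k}$ enter both the true and the fake channel, so $\partial\mu^{(g,n)}/\partial\tau_k=\big(\partial\bm h^{(n)}_{\text{Eve}}/\partial\tau_k+\partial\tilde{\bm h}^{(n)}_{\text{Eve}}/\partial\tilde\tau_k\big)\bm s^{(g,n)}$ and $\partial\mu^{(g,n)}/\partial\theta_{\mathrm{Tx},k}=\big(\partial\bm h^{(n)}_{\text{Eve}}/\partial\theta_{\mathrm{Tx},k}+\tfrac{\cos\theta_{\mathrm{Tx},k}}{\cos\tilde\theta_{\mathrm{Tx},k}}\,\partial\tilde{\bm h}^{(n)}_{\text{Eve}}/\partial\tilde\theta_{\mathrm{Tx},k}\big)\bm s^{(g,n)}$, the derivatives with respect to $\mathfrak R\{\gamma_k\},\mathfrak I\{\gamma_k\}$ being analogous and not needed here. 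I would then let $\bar\delta_\tau,\bar\delta_{\theta_{\mathrm{Tx}}}\to0$: by Equation \eqref{eq:cpbeamformer}, $\tilde\tau_k\to\tau_k$, $\tilde\theta_{\mathrm{Tx},k}\to\theta_{\mathrm{Tx},k}$ and $\tilde\gamma_k=\gamma_k$, so $\tilde{\bm h}^{(n)}_{\text{Eve}}\to\bm h^{(n)}_{\text{Eve}}$ and the corresponding partial-derivative operators coincide in the limit (using $\cos\theta_{\mathrm{Tx},k}\neq0$, which holds since $d\sin\theta_{\mathrm{Tx},k}/\lambda_c\in(-\tfrac12,\tfrac12]$). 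Substituting gives $\partial\mu^{(g,n)}/\partial\tau_k\to 2(\partial\bm h^{(n)}_{\text{Eve}}/\partial\tau_k)\bm s^{(g,n)}$ and $\partial\mu^{(g,n)}/\partial\bar\delta_\tau\to\sum_{k=0}^{K}(\partial\bm h^{(n)}_{\text{Eve}}/\partial\tau_k)\bm s^{(g,n)}=\tfrac12\sum_{k=0}^{K}\partial\mu^{(g,n)}/\partial\tau_k$, and likewise $\partial\mu^{(g,n)}/\partial\bar\delta_{\theta_{\mathrm{Tx}}}\to\tfrac12\sum_{k=0}^{K}\tfrac{1}{\cos\theta_{\mathrm{Tx},k}}\,\partial\mu^{(g,n)}/\partial\theta_{\mathrm{Tx},k}$. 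Hence two of the $4K+6$ score directions lie in the span of the others, so that span has dimension at most $4K+4$, and therefore $\operatorname{Rank}\{\boldsymbol J^{(\boldsymbol\chi)}\}\le 4K+4<4K+6$ in the limit. (I expect equality $\Omega^{(\boldsymbol\chi)}=4K+4$ once the remaining $4(K+1)$ vectors $\{(\partial\bm h^{(n)}_{\text{Eve}}/\partial\tau_k)\bm s,(\partial\bm h^{(n)}_{\text{Eve}}/\partial\theta_{\mathrm{Tx},k})\bm s,(\partial\bm h^{(n)}_{\text{Eve}}/\partial\mathfrak R\{\gamma_k\})\bm s,(\partial\bm h^{(n)}_{\text{Eve}}/\partial\mathfrak I\{\gamma_k\})\bm s\}_{k}$ are shown linearly independent, which is precisely nonsingularity of the original fake-path-free channel FIM in the large-$N,N_t$ regime; but only the strict inequality is needed.)

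The role of $G,N\to\infty$ is to pass from the random, pilot-dependent $\boldsymbol J^{(\boldsymbol\chi)}$ to a deterministic object on which the rank is well defined: exactly as in Lemma \ref{lemma:converge}, $\tfrac1G\boldsymbol J^{(\boldsymbol\chi)}$ converges almost surely to a deterministic asymptotic FIM $\breve{\boldsymbol J}^{(\boldsymbol\chi)}$ (the empirical averages $\bm s^{(g,n)}(\bm s^{(g,n)})^{\mathrm H}\to\tfrac1{N_t}\bm I_{N_t}$ by the strong law, and in the large-$N$ regime the distinct true paths decouple as in the ``orthogonal true paths'' remark, so the off-diagonal path blocks vanish). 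I would run the limiting-rank computation above on $\breve{\boldsymbol J}^{(\boldsymbol\chi)}$, whose rank is then a fixed integer $\Omega^{(\boldsymbol\chi)}$, and transfer the statement back to $\boldsymbol J^{(\boldsymbol\chi)}$ through the almost-sure convergence.

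The main obstacle I anticipate is not the linear-dependence computation, which is a routine chain-rule exercise, but making rigorous the order of the two limits ($\bar{\bm\delta}\to\boldsymbol 0$ versus $G,N\to\infty$) and the claim that the rank of $\breve{\boldsymbol J}^{(\boldsymbol\chi)}$ actually \emph{stabilizes} to a single integer: rank is only lower-semicontinuous, so the degenerating gradients give the inequality ``for free'', but one must rule out any further, $\bar{\bm\delta}$-dependent degeneracy, which is where the asymptotic orthogonality/identifiability of the $K+1$ true paths and the non-vanishing of $\cos\theta_{\mathrm{Tx},k}$ and $|\gamma_k|$ are genuinely used. A secondary bookkeeping point is that the almost-sure convergence of $\tfrac1G\boldsymbol J^{(\boldsymbol\chi)}$ must be argued with enough uniformity in $\bar{\bm\delta}$ to justify exchanging it with the $\bar{\bm\delta}\to\boldsymbol 0$ limit.
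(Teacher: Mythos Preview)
Your approach is essentially the same as the paper's: both show that as $\bar\delta_\tau,\bar\delta_{\theta_{\mathrm{Tx}}}\to 0$ the two ``precoder'' score directions $\partial\mu/\partial\bar\delta_\tau$ and $\partial\mu/\partial\bar\delta_{\theta_{\mathrm{Tx}}}$ collapse into fixed linear combinations of the $\{\partial\mu/\partial\tau_k\}$ and $\{\partial\mu/\partial\theta_{\mathrm{Tx},k}\}$, forcing the Gram matrix $\boldsymbol J^{(\boldsymbol\chi)}$ to lose at least two ranks, and both invoke the strong law (as in Lemma~\ref{lemma:converge}) to pass to a deterministic limit.

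Two minor differences are worth recording. First, you parametrize via the fake-channel form $\mu=(\bm h+\tilde{\bm h})\bm s$ and use the chain rule through $\tilde\tau_k,\tilde\theta_{\mathrm{Tx},k}$, whereas the paper keeps the precoded-pilot form $\iota=\bm h\,\tilde{\bm s}$ and differentiates $\bm\Phi^{(n)}$ directly; the two representations yield the same limiting scores, and your chain-rule bookkeeping correctly produces the coefficients $1/\cos\theta_{\mathrm{Tx},k}$ in the $\bar\delta_{\theta_{\mathrm{Tx}}}$ combination (the paper's Equation~\eqref{eq:partialbeamformer} states the limit with unit coefficients, which appears to drop this factor, though it is immaterial for the rank conclusion). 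Second, your Gram-matrix argument deduces rank deficiency directly from the score-level dependency, so you do not actually need the cross-path orthogonality step (the paper's Equation~(37) and the $N_t\to\infty$ part of the limit) to get $\Omega^{(\boldsymbol\chi)}<4K+6$; the paper routes through it to write the row relations in the simplified form of Equations~(38)--(39), but as you observe this is not required for the strict inequality. Your caveats about the order of the $\bar{\bm\delta}\to\boldsymbol 0$ and $G\to\infty$ limits and about rank stabilization are well taken and go beyond what the paper makes explicit.
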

\begin{proof}
    See Appendix \ref{sec:proofsingularFIMbeamformer}.
\end{proof}

According to Proposition \ref{prop:singularFIMbeamformer}, $\boldsymbol{J}^{(\boldsymbol{\chi})}$ also tends to a singular matrix when    $\bar\delta_{\tau_k}, \bar\delta_{\theta_{\mathrm{Tx},{k}}}\rightarrow 0$ and $G,N\rightarrow \infty$. Thus, the FIM for Eve's localization in this case is also asymptotically rank-deficient. Based on the definition of CRLB in Equation \eqref{eq:exactCRLB}, large estimation error still can be achieved with small values of $\bar\delta_{\tau_k}, \bar\delta_{\theta_{\mathrm{Tx},{k}}}$ employed in the design of Alice's \revis{precoder}, even when the \revis{precoder} structure is snooped by Eve. However, in contrast to the case where Eve does not know the \revis{precoder} structure, the shared information $\bar{\bm \delta}$ has to be \ch{refreshed at a certain rate such that Eve cannot decipher it} though the optimal design of the refresh rate is beyond the scope of this paper. We will numerically show the impact of the \revis{precoder} structure leakage in Section \ref{sec:sim}, yet there is still a strong degradation of Eve's estimation accuracy with a proper choice of the design parameters.

%\vspace{-6pt}
\section{Simulation Results}\label{sec:sim}

In this section, we evaluate the performance of our proposed scheme with the CRLB derived in Sections \ref{sec:crlb} and \ref{sec:method}, which is not restricted to any specific estimators. First, the theoretical analyses presented in Section \ref{subsec:lberror} are numerically validated. Then, our location-privacy enhanced scheme is compared with the case where location-privacy preservation is not considered to show the degraded estimation accuracy of individual location-relevant channel parameters as well as the location. Finally, comparisons to the unstructured Gaussian noise and \rev{a CSI-dependent beamforming design \cite{Ayyalasomayajula}} are conducted to validate the efficacy of the proposed \re{FPI}, \rev{where the associated CRLB and the \textit{misspecified} Cram\'{e}r-Rao bound (MCRB) \cite{Fortunatimismatchsurvey,RichmondMCRB} are investigated, respectively.}

%\vspace{-8pt}
\subsection{Signal Parameters}%\vspace{-1pt}
In all of the numerical results, unless otherwise stated, the system parameters $B$, $\varphi_c$, $c$,  $N_t$, $N$, $G$, $K$, and $d$ are set to $15$ MHz, $60$ GHz, $300$ m/us, $16$, $16$, $16$, $2$, and $\frac{\lambda_c}{2}$, respectively. The free-space path loss model \cite{Goldsmith} is used to {determine} channel coefficients in the simulation, while the pilot signals $\boldsymbol{s}^{(g,n)}$ are random, complex values uniformly generated on the unit circle, scaled by a factor of $\frac{1}{\sqrt{N_t}}$. The scatterers of the two NLOS paths are located at $[8.89\text{ m}, -6.05 \text{ m}]^{\mathrm{T}}$ and $[7.45 \text{ m}, 8.54 \text{ m}]^{\mathrm{T}}$, respectively, while Alice is at $[3 \text{ m},0 \text{ m}]^{\mathrm{T}}$. To make a fair comparison, Bob and Eve are placed at the same location, {\it i.e.,} $[10 \text{ m},5 \text{ m}]^{\mathrm{T}}$, and the same received signal is used for the simulation. To enhance the location privacy, $\rev{\tilde{K}+1=K+1}$ fake paths are injected via the design of the transmit \revis{precoder} according to Section \ref{sec:method}.   %Unless otherwise stated, the parameters $\bar\delta_{\tau}$ and $\bar\delta_{\theta_{\text{TX}}}$ required for the transmit beamformer are set to  $\delta_{\tau_k,\min}$ and $\delta_{\theta_{\mathrm{Tx},{k}},\min}$, respectively.  
In the presence of independent, zero mean, complex Gaussian noise and the proposed FPI, the received SNR is defined as $10\log_{10}\frac{\sum^{G}_{g=1}\sum^{N-1}_{n=0}\left|{\bm h}^{(n)}_{\text{Bob}}  \tilde{\boldsymbol{s}}^{(g,n)}\right|^{2}}{NG\sigma^2}$\footnote{It is also equal to $10\log_{10}\frac{\sum^{G}_{g=1}\sum^{N-1}_{n=0}\left|\left({\bm h}^{(n)}_{\text{Eve}}+\tilde{\bm h}^{(n)}\right) {\boldsymbol{s}}^{(g,n)}\right|^{2}}{NG\sigma^2}$ for a fair comparison.}. The minimal separation constraints desired by the atomic norm minimization based method \cite{Li} is denoted as $\Upsilon_{\tau}\triangleq\frac{NT_s}{\left\lfloor\frac{N-1}{4}\right\rfloor}$ and $\Upsilon_{\theta}\triangleq\arcsin\left(\frac{\lambda_c}{d\left\lfloor\frac{N_t-1}{4}\right\rfloor}\right)$, 
for the TOAs and AODs, respectively.

\begin{figure}[t]
\centering
\includegraphics[scale=0.48]{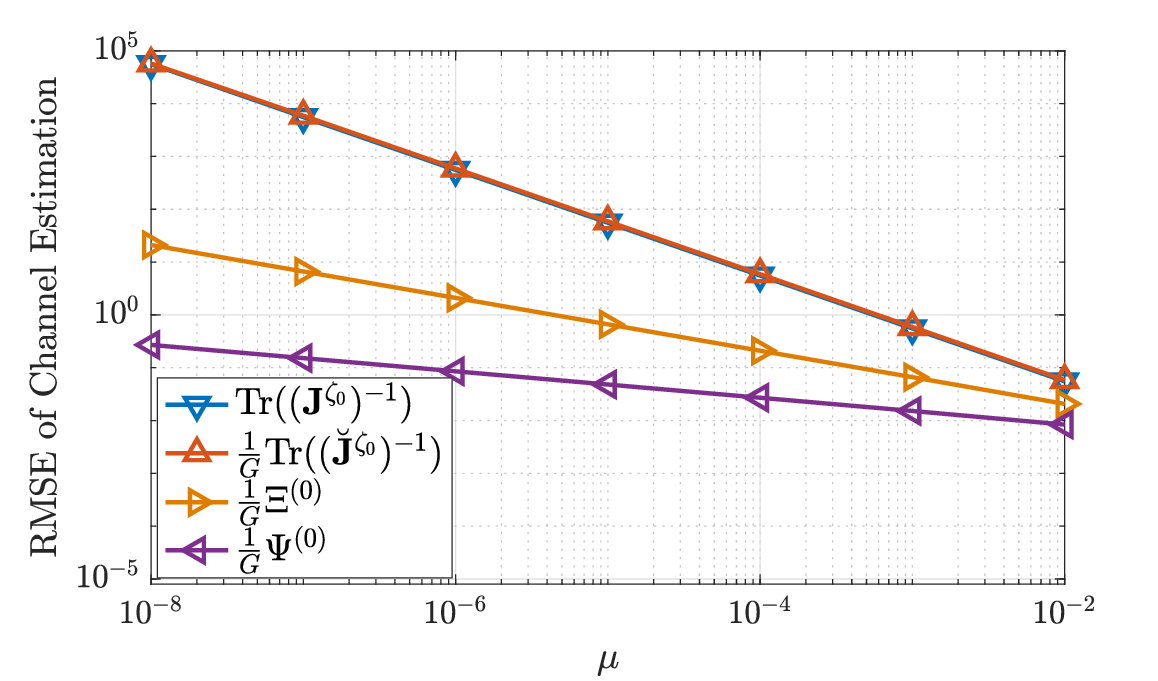}\vspace{-10pt}
\caption{{Error bounds for Eve's channel estimation.}}\vspace{-14pt}%
\label{fig:errorbound}
\end{figure}

%\vspace{-8pt}
\subsection{Validation of Theoretical Analyses in Section \ref{subsec:lberror}}\label{subsec:validation}
In terms of the root-mean-square error (RMSE) for the estimation of the LOS path and the corresponding fake path\footnote{The estimation error for the other paths can be analogously studied as well.}, Figure \ref{fig:errorbound} shows the lower bounds derived in Proposition \ref{prop:errorbound} and Corollary \ref{coro:errorbound}, where $\operatorname{Tr}\left(\left(\boldsymbol{J}^{(\boldsymbol{\zeta}_0)}\right)^{-1}\right)$ and $\rev{\frac{1}{G}}\operatorname{Tr}\left(\left(\breve{\bm J}^{(\bm \zeta_0)}\right)^{-1}\right)$ are provided as comparisons. The received SNR is set to $0$dB. To manifest the effect of $\bar\delta_{\tau}$ and $\bar\delta_{\theta_{\text{TX}}}$ on the estimation error, we set $\bar\delta_{\theta_{\text{TX}}}$ and $\bar\delta_{\tau}$ to $\mu\Upsilon_{\theta}$ and $\frac{\sin\left(\mu\Upsilon_{\theta}\right)}{(N-1)\Lambda}$, respectively, with $\mu$ being a constant. As seen in Figure \ref{fig:errorbound}, $\frac{1}{G}\Xi^{(0)}$ is a lower bound for $\operatorname{Tr}\left(\left(\boldsymbol{J}^{(\boldsymbol{\zeta}_0)}\right)^{-1}\right)$, which is further bounded by $\frac{1}{G}\Psi^{(0)}$; with the decreasing value of $\mu$, the values for $\bar\delta_{\tau}$ and $\bar\delta_{\theta_{\text{TX}}}$ decrease accordingly and the reduced distance between the truth path and fake path results in significant increases of the estimation error, though these lower bounds are not sharp, consistent with our analysis in Section \ref{sec:crlb}. In addition, from Figure \ref{fig:errorbound}, we can also observe that $\frac{1}{G}\operatorname{Tr}\left(\left(\breve{\bm J}^{(\bm \zeta_0)}\right)^{-1}\right)\approx\operatorname{Tr}\left(\left(\boldsymbol{J}^{(\boldsymbol{\zeta}_0)}\right)^{-1}\right)$, indicating the small approximation error with $\breve{\bm J}^{(\bm \zeta_0)}$ even for a realistic setting of $G$. 

\begin{comment}
\begin{figure}[t]
\centering
\includegraphics[scale=0.47]{FPI.eps}\vspace{-10pt}
\caption{{Error bounds for Eve's channel estimation.}}\vspace{-14pt}%
\label{fig:errorbound}
\end{figure}
\end{comment}

Perturbed by the proposed \re{fake paths}, the CRLB for Eve’s localization is presented in Figure \ref{fig:rmseloc_delta} with different choices of $\bar\delta_\tau$ and $\bar\delta_{\theta_{\text{TX}}}$. According to Figure \ref{fig:rmseloc_delta}, coinciding with the analysis of the derived lower bounds, simultaneously decreasing the values of $\bar\delta_\tau$ and $\bar\delta_{\theta_{\text{TX}}}$ is desired to effectively degrade Eve's localization accuracy. We note that, with the choices of $\bar\delta_\tau$ and $\bar\delta_{\theta_{\text{TX}}}$ in Figure \ref{fig:rmseloc_delta}, the assumptions A4 and A5 in Corollary \ref{coro:errorbound} are not satisfied, suggesting that these sufficient conditions for the derivation of $\Psi^{(k)}$ are not necessary for the location-privacy enhancement.

\begin{figure}[t]
\centering
\includegraphics[scale=0.48]{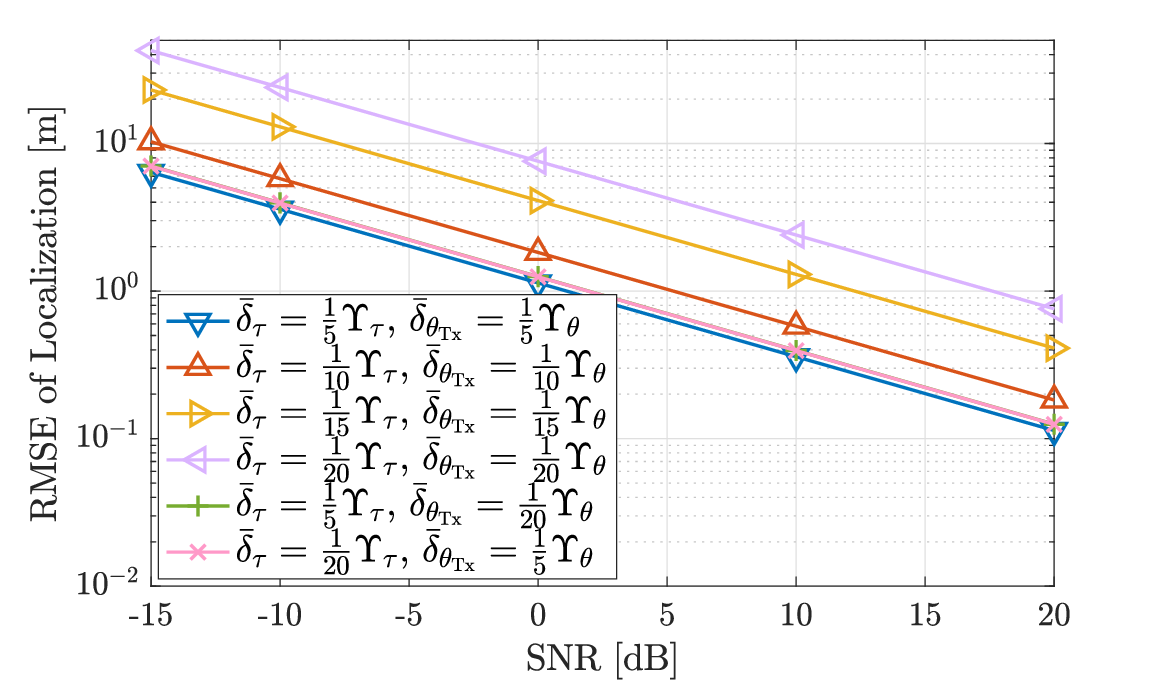}\vspace{-8pt}
\caption{{The influence of the choices of $\bar\delta_\tau$ and $\bar\delta_{\theta}$ on the $\sqrt{\text{CRLB}}$ for Eve's localization.}}\vspace{-12pt}%
\label{fig:rmseloc_delta}
\end{figure}

%\vspace{-8pt}
\subsection{Estimation Accuracy Comparison}\label{subsec:accuracycomparison}

%To show the degradation of Eve's eavesdropping ability with the proposed method, in this section, we evaluate the Cram\'{e}r-Rao lower bound (CRLB) of the estimation of Alice's position for Bob and Eve, respectively. With respect to LPL, we consider the worst case where the structure of Alice's beamformer in \eqref{fakebeamformer} is exposed to Eve. Due to lack of space, the details regarding the CRLB are omitted. We refer {the reader} to \cite{Fascista} for {related} derivations. 

%\re{{Using the signals received over the SAN-distorted channel as defined in \eqref{rsignalbEve}, it is more difficult for Eve to accurately estimate Alice's position without the access to shared information}, even in the present of {the} LOS path, which will be further verified in Section \ref{sec:sim}. {We define the LPL as}\footnote{{$\text{LPL}\leq 0$ means that Eve cannot infer Alice's position more accurately than Bob; $\text{LPL}> 0$ means that Eve can achieve higher localization accuracy.}},
%\begin{equation}
    %\text{LPL}\triangleq\frac{\text{RMSE}_\text{Bob}-\text{RMSE}_\text{Eve}}{\text{RMSE}_\text{Bob}}\label{LPL},
%\end{equation}
%where $\text{RMSE}_\text{Bob}$ and $\text{RMSE}_\text{Eve}$ represent the root-mean-square error (RMSE) of localization achieved by Bob and Eve.}

\begin{figure}[t]
\centering \vspace{-2pt}
\includegraphics[scale=0.48]{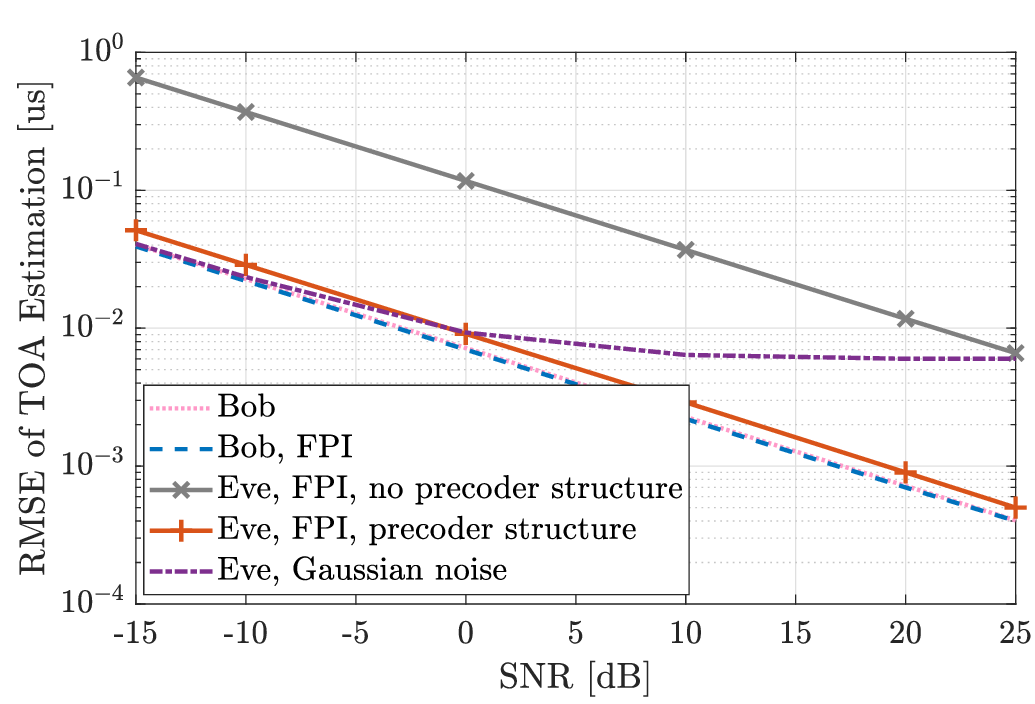}\vspace{-10pt}
\caption{{The $\sqrt{\text{CRLB}}$ for TOA estimation.}}\vspace{-12pt}%
\label{fig:crlbtoa}
\end{figure}
\begin{figure}[t]
\centering
\includegraphics[scale=0.48]{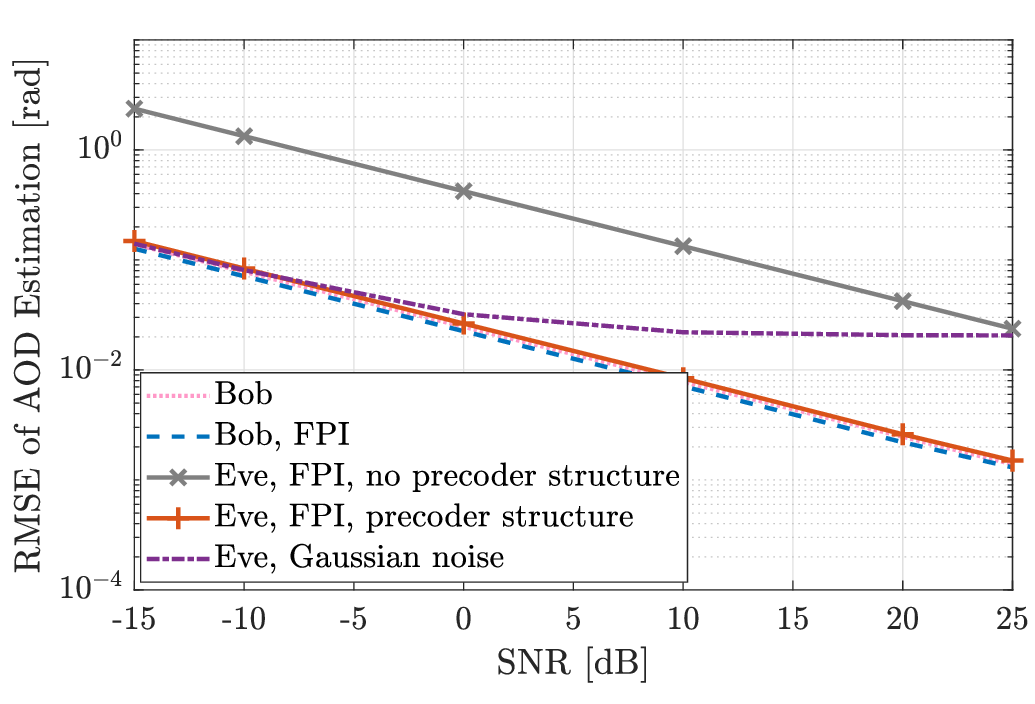}\vspace{-13pt}
\caption{{The $\sqrt{\text{CRLB}}$ for AOD estimation.}}\vspace{-8pt}%
\label{fig:crlbaod}
\end{figure}

\begin{figure}[t]
\centering
\includegraphics[scale=0.48]{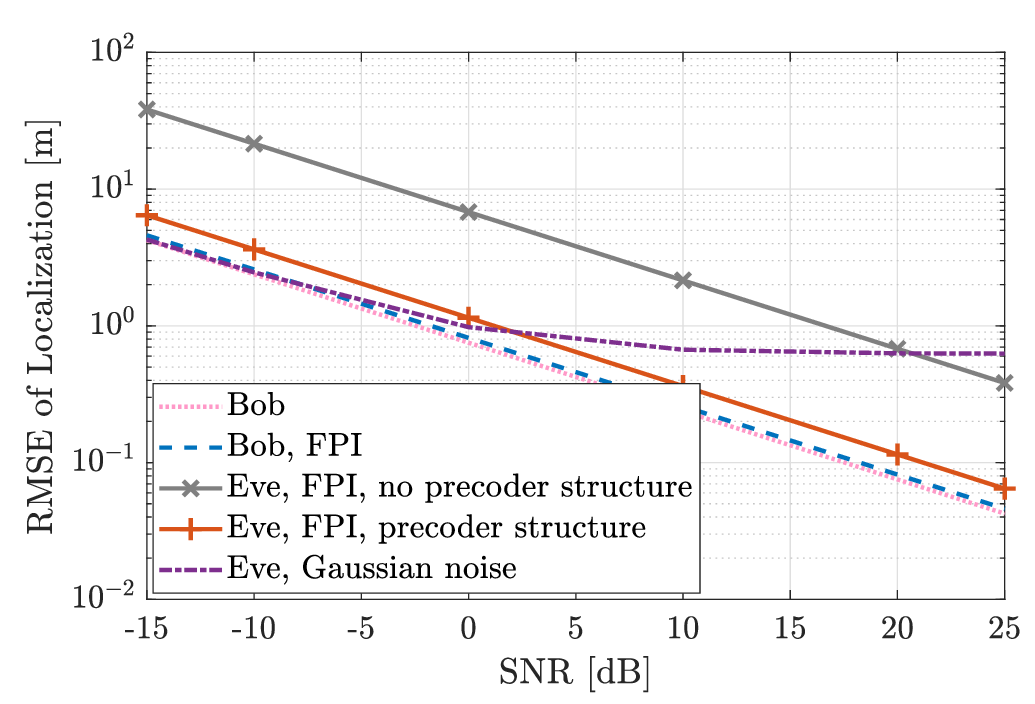}\vspace{-9pt}
\caption{{The $\sqrt{\text{CRLB}}$ for localization.}}\vspace{-16pt}%
\label{fig:crlbloc}
\end{figure}

%\begin{figure}[t]
%\centering
%\includegraphics[scale=0.48]{newlyupdatedRMSE.eps}
%\caption{{The RMSE of channel estimation and  localization.}}%
%\label{numericalresults}
%\end{figure}
%\begin{figure}[t]
%\centering
%\includegraphics[scale=0.55]{updatedLPL.eps}
%\caption{{Comparisons of LPL.}}%
%\label{numericalresultslpl}
%\end{figure}

According to Sections \ref{sec:crlb} and \ref{subsec:validation}, we assume $\delta_{\tau_k,\min}=\frac{1}{20}\Upsilon_{\tau}$ and $\delta_{\theta_{\mathrm{Tx},{k}},\min}=\frac{1}{20}\Upsilon_{\theta}$ for any $k$ and we set $\bar\delta_\tau$ and $\bar\delta_{\theta_{\text{TX}}}$ to $\delta_{\tau_k,\min}$ and $\delta_{\theta_{\mathrm{Tx},{k}},\min}$, respectively, to enhance the location privacy. The RMSE of TOA estimation and AOD estimation is shown in Figures \ref{fig:crlbtoa} and \ref{fig:crlbaod}, respectively, where the CRLB for Bob's estimation is compared with that for Eve's estimation. As observed in Figures \ref{fig:crlbtoa} and \ref{fig:crlbaod}, without the leakage of the channel structure, our proposed scheme contributes to more than $25$dB degradation with respect to the TOA and AOD estimation, by virtue of the fact that \re{the FPI} effectively distorts the structure of Eve's channel. In contrast,  Bob can compensate for the presence of the fake paths given the secure side information.
%as compared with the estimation accuracy for Bob who is also perturbed by the same SAN but has the securely shared information to remove the SAN. In contrast, thanks to the prior knowledge of shared information, the negative effect is negligible for Bob, with the comparison to the case where there is no any artificial noise. 

Due to the strong degradation of the quality of channel estimates, {a larger} CRLB for Eve's localization is achieved as seen in Figure \ref{fig:crlbloc}. With respect to the localization accuracy, there is {a $20$dB advantage} for Bob versus Eve using our proposed scheme. As analyzed in Section \ref{subsec:discussbeamformer}, if the structure of Alice's \revis{precoder} is unfortunately leaked to Eve, Eve can actively estimate the shared information $\bar{\bm\delta}$ to mitigate the perturbation of the \re{fake paths} while inferring Alice's position so the efficacy of our scheme is degraded. However, considering the uncertainties in the shared information, there is still around $4$dB degradation of localization accuracy for Eve versus Bob according to Figure \ref{fig:crlbloc}, indicating the robustness of our scheme with respect to the \revis{precoder} structure leakage. We note that the model order can be adjusted via changing the value of $\tilde{K}$ at a certain rate such that Eve has insufficient samples to learn the true \revis{precoder} structure and has to tackle the \re{virtually introduced fake paths}. {In addition, as shown in Figure \ref{fig:crlbloc2K}, if we inject an additional set of fake paths, Eve’s localization accuracy can be further degraded at the cost of higher transmit power. On the other hand, more side information needs to be shared with Bob to maintain his performance. Hence, there is an interesting trade-off to be investigated in the future.}

\begin{figure}[t]
\centering
\includegraphics[scale=0.48]{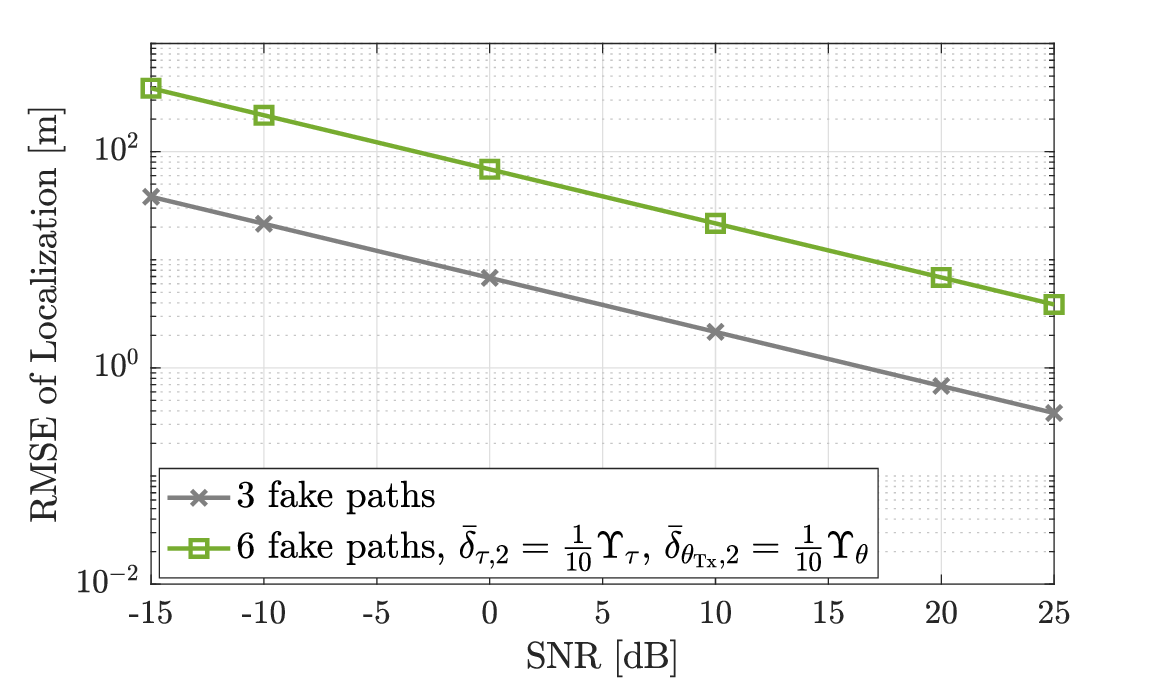}\vspace{-10pt}
\caption{{The $\sqrt{\text{CRLB}}$ for localization, 3 fake paths versus 6 fake paths, where  $\bar\delta_\tau$ and $\bar\delta_{\theta_\mathrm{Tx}}$ are still set to $\frac{1}{20}\Upsilon_{\tau}$ and $\frac{1}{20}\Upsilon_{\theta}$ for the original 3 fake paths. With respect to the additionally injected 3 fake paths and the true paths, the separations for their TOAs and AODs are determined by the design parameters $\bar\delta_{\tau,2}$ and $\bar\delta_{\theta_\mathrm{Tx},2}$ that are set to $\frac{1}{10}\Upsilon_{\tau}$ and $\frac{1}{10}\Upsilon_{\theta}$}, respectively.}\vspace{-16pt}%
\label{fig:crlbloc2K}
\end{figure}

%{In addition, the LPL defined in \eqref{LPL} is reduced to around $-1.22$ according to Figure \ref{numericalresultslpl}, in contrast to the case without location-privacy preservation ($\text{LPL}=0$)}, which {suggests} location-privacy enhancement. To show the influence of the choices of the design parameters for Alice's beamformer, the CRLBs {for} localization are plotted in Figure \ref{numericalresults} (d), with increased values of $|\delta_{\tau}|$ and $|\delta_{\theta_{\text{TX}}}|$. As compared with the results in Figure \ref{numericalresults} (c), the gap between Bob and Eve's performance is 
%reduced due to the {increase of the minimal separation based on the analysis in Section \ref{subsec:mslpl}}. The optimal choices of $\delta_\tau$ and $\delta_{\theta_{\text{TX}}}$ will be investigated in future work.
%\begin{comment}{\um{can you show Bob's performance w/o the fake spikes at all?}}\end{comment}
%\um{say a few words about Goel and why we cannot compare to it in the introduction}

To validate the efficacy of \re{the FPI}, CRLBs for channel estimation and localization with the injection of additional Gaussian noise \cite{Goel,Tomasin2,Xu} are also provided in Figures \ref{fig:crlbtoa}, \ref{fig:crlbaod} and \ref{fig:crlbloc} as comparisons. For relatively fair comparisons, the variance of the artificially added Gaussian noise $\tilde{w}^{(g,n)}\sim \mathcal{CN}({0},\varsigma^2)$ is set to a constant value for all the received SNRs, {\it i.e.,} $\varsigma^2\triangleq\frac{\sum^{G}_{g=1}\sum^{N-1}_{n=0}\left|\tilde{\bm 
h}^{(n)} {\boldsymbol{s}}^{(g,n)}\right|^{2}_{}}{NG}$, while the received SNR is still defined as previously stated. As observed in Figures \ref{fig:crlbtoa}, \ref{fig:crlbaod} and \ref{fig:crlbloc}, the injection of the extra Gaussian noise is ineffective to preserve location privacy at low SNRs since its constant variance is relatively small as compared with that for the Gaussian noise naturally introduced during the transmission over the wireless channel. In contrast, through the degradation of the channel structure, the proposed \re{FPI} strongly enhances location privacy. As the received SNR increases, the injection of the additional Gaussian noise can further degrade Eve's performance due to the constant variance. However, the comparison is not fully fair as the amount of side information needed by Bob to remove the artificially injected Gaussian noise without CSI is high, {\it i.e.}, the realization of $\tilde{w}^{(g,n)}$ for all $g=1,2,\cdots,G$ and $n=0,1,\cdots,N-1$; our \revis{precoding} strategy requires very little side information, {\it i.e.}, $\bar\delta_\tau$ and $\bar\delta_{\theta_{\text{TX}}}$. The exact amount of shared information transmitted over the secure channel depends on the distribution of the shared information required for the \re{FPI}, refresh rate, the quantization and coding strategy. The associated analysis is beyond the scope of this paper. %However, to approximately characterize the amount of shared information for both cases, we assume that $\bar\delta_\tau$ and $\bar\delta_{\theta_{\text{TX}}}$ are uniformly distributed in $\left[\beta\Upsilon_\tau,2\beta\Upsilon_\tau\right]$ and $\left[\beta\Upsilon_\theta,2\beta\Upsilon_\theta\right]$, respectively, where $\beta$ is a parameter determining the boundaries of realizations of $\bar\delta_\tau$ and $\bar\delta_{\theta_{\text{TX}}}$. According to \cite{Cover}, the {\it side length} of the {\it typical set} can be used to suggest how much information needed to represent random variables, approximately equal to $2^{\operatorname{DE}(\bar\delta_\tau,\bar\delta_{\theta_{\text{TX}}})}$ and $2^{\operatorname{DE}\left(\zeta^{(1,0)},\zeta^{(2,0)},\cdots,\zeta^{(G,N-1)}\right)}$ for the case with SAN and that with artificially injected Gaussian noise, respectively. With different choices of $G$ and $N$, the side length associated with the SAN is evaluated in Figure \ref{fig:sidelength} with a comparison to that for the Gaussian noise, from which we can observe that the needed shared information with Gaussian noise increases with respect to $N$ and $G$, in contrast to our scheme only leveraging {two parameters} $\bar\delta_{\tau}$ and $\bar\delta_{\theta_{\text{TX}}}$.

%\begin{figure}[t]
%\centering
%\includegraphics[scale=0.48]{sslts_choiceofdelta_random.eps}\vspace{-10pt}
%\caption{{Comparison of the side length of typical set.}}\vspace{-15pt}%
%\label{fig:sidelength}
%\end{figure}

 \begin{figure}[t]
\centering
\includegraphics[scale=0.48]{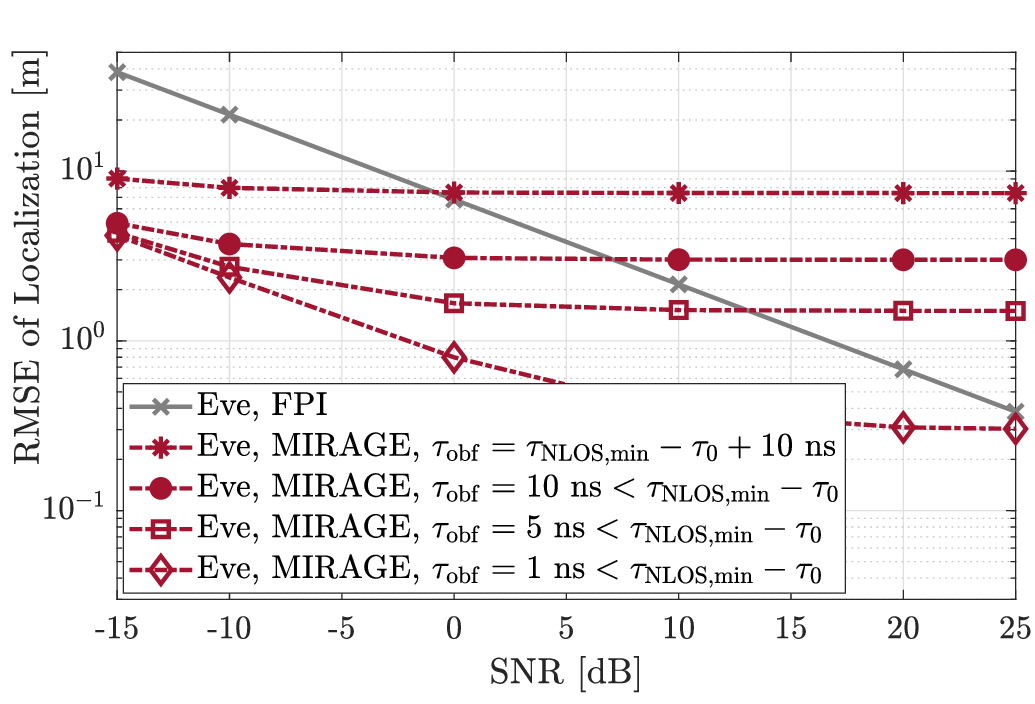}\vspace{-9pt}
\caption{\rev{Comparisons of localization accuracy: CSI-free FPI versus CSI-dependent MIRAGE \cite{Ayyalasomayajula}.}}\vspace{-16pt}%
\label{fig:MIRAGE}
\end{figure}

\rev{Furthermore, we adapt MIRAGE, a CSI-dependent beamforming design \cite{Ayyalasomayajula} for comparison to our CSI-free FPI scheme. To achieve this comparison, we compute a lower bound on Eve's localization accuracy. Since the path with the smallest TOA is typically treated as the LOS path, MIRAGE adds a delay to this path via a transmit beamformer \cite{Ayyalasomayajula} to Eve, thus obfuscating Alice's location. The MCRB \cite{Fortunatimismatchsurvey,RichmondMCRB} is evaluated to characterize the degradation of Eve's localization accuracy; we use the MCRB as Eve's perception of the channel is incorrect. We refer the reader to \cite{DAIS} for related derivations. To be concrete, we denote by $\tau_{\text{NLOS},\min}\triangleq\min_{k=1,\cdots,K} \tau_k$  the smallest TOA of all the NLOS paths,  and $\tau_{\text{obf}}$ as the delay added by the beamforming design \cite{Ayyalasomayajula}, respectively. Then, given the knowledge of the angle information, by designing the beamformer as $\bar{\bm f}_0e^{-j\frac{2\pi n\tau_{\text{obf}}}{NT_s}}+\sum_{k=1}^{K}\bar{\bm f}_k$, where $\bar{\bm f}_k$ is a vector in the null space of $[\bm\alpha(\theta_{\text{Tx},0}),\cdots,\bm\alpha(\theta_{\text{Tx},k-1}),\bm\alpha(\theta_{\text{Tx},k+1}),\cdots,\alpha(\theta_{\text{Tx},K})]^{\mathrm{H}}$, the TOA of the LOS path  perceived by Eve can be expressed as $\tau_0+\tau_{\text{obf}}$. The goal of MIRAGE is to make $\tau_0+\tau_{\text{obf}}>\tau_{\text{NLOS},\min}$ such that one of the NLOS path has the smallest TOA \cite{Ayyalasomayajula}.  Clearly the choice of $\tau_{\text{obf}}$ is dependent on the CSI between Alice and Eve.} 

\rev{
As seen in Figure \ref{fig:MIRAGE}, when $\tau_{\text{obf}}$ is set to $\tau_{\text{NLOS},\min}-\tau_0+10$ ns,  Alice's location can be effectively spoofed.  In addition, due to the geometric mismatch introduced for Eve, there is an estimation error floor with MIRAGE at high SNRs, which also indicates enhanced location privacy.  However, we reiterate that this degraded localization performance for Eve is achieved at the expense of needing the CSI between Alice and Eve.  It is not clearly how Alice would learn the delays and angles of this channel between herself and a passive eavesdropper.  We presume that there would be no coordination between Alice and Eve.  Furthermore,  MIRAGE is not robust to imperfect CSI.  To show MIRAGE's sensitivity to the accuracy of the CSI, we assume {\em perfect}\footnote{\rev{While it is impractical for Alice to have perfect angle information, computing this MCRB is non-trivial and beyond the scope of the current work.  Thus, the results show Eve's genie-aided performance and are a further lower bound on her MSE.}} knowledge of angle information, but $\tau_{\text{obf}}$ is erroneously designed due to inaccurate TOA estimates \textit{i.e.,} $\tau_0+\tau_{\text{obf}}<\tau_{\text{NLOS},\min}$. From Figure \ref{fig:MIRAGE}, when $\tau_{\text{obf}}$ is set to $10$ ns, $5$ ns, and $1$ ns, respectively, we see a clear degradation of the efficacy of MIRAGE. If we incorporated errors in the angle information, the degradation would be more significant for Eve.  Note that degradation means that Eve's localization capabilities are improved. In contrast, our scheme does not reply the CSI and can achieve a comparable accuracy degradation for Eve.}

%\vspace{-8pt}
\section{Conclusions}\label{sec:con}
A location-privacy enhancement strategy was investigated with the injection of \re{fake paths}. A novel CSI-free location-privacy enhancement framework was proposed, where the structure of the channel was exploited for the design of \re{the fake paths}. The injected \re{fake paths} were proved to be indistinguishable while two closed-form, lower bounds on the estimation error with \re{the FPI} were provided to validate that the proposed method can strongly degrade the eavesdropping ability of illegitimate devices. To effectively preserve location privacy based on the proposed framework, a transmit \revis{precoder} was designed to inject the \re{fake paths} for the reduction of the LPL to illegitimate devices, while legitimate devices can maintain localization accuracy using the securely shared information. With respect to the CRLB for localization, there was $20$dB degradation in contrast to legitimate devices with the shared information and the robustness to the leakage of \revis{precoder} structure was also highlighted. Furthermore, the efficacy of \re{the FPI} was numerically verified with the comparisons to unstructured Gaussian noise \rev{and a CSI-dependent beamforming strategy}.

%\vspace{-5pt}
\section*{\rev{Acknowledgement}}
\rev{The authors gratefully acknowledge valuable discussions with Prof. Henk Wymeersch. They also thank the anonymous reviewers for the helpful remarks.}

\appendices
\begingroup
\allowdisplaybreaks
%\vspace{-8pt}
\section{Proof of Proposition \ref{prop:iden}}\label{sec:proofiden}
From the analysis for the feasibility of paths, to show Eve cannot make a distinction between fake paths and true paths, we seek to prove all the fake paths are geometrically feasible. To find the scatterers that can produce the injected fake paths, we consider $\tilde{K}+1$ potential positions for these scatterers and denote by $b_{\tilde{k}}$ the distance between Alice and the $\tilde{k}$-th potential position, with $\tilde{k}=0,1,\cdots,\tilde{K}$. Then, the $\tilde{k}$-th potential position, denoted as $\tilde{\bm v}_{\tilde{k}}=[\tilde{v}_{{\tilde{k}},x},\tilde{v}_{{\tilde{k}},y}]^{\mathrm{T}}\in\mathbb{R}^2$, can be mapped to $\{\tilde{\tau}_{\tilde{k}}, \tilde{\theta}_{\text{Tx},{\tilde{k}}}\}$ according to Equation \eqref{eq:geometry}, {\it i.e.,} the $\tilde{k}$-th fake path is geometrically feasible, suppose that 
\begin{itemize}
    \item[C1)] the inequality %\vspace{-3pt}
    \begin{equation}\label{eq:condis}
        0\overset{\mathrm{(a)}}{\leq} b_{\tilde{k}}\overset{\mathrm{(b)}}{\leq} c\tilde{\tau}_{\tilde{k}} %\vspace{-3pt}
    \end{equation} 
    holds;
    \item[C2)] given $\tilde{\theta}_{\text{Tx},{\tilde{k}}}$, the equality %\vspace{-3pt}
    \begin{equation}\label{eq:condelay}
        \tilde{\tau}_{\tilde{k}} = \frac{b_{\tilde{k}} + \left\| \left[\tilde{v}_{{\tilde{k}},x}-z_x,\tilde{v}_{{\tilde{k}},y}-z_y\right]  \right\|_2}{c}, %\vspace{-3pt}
    \end{equation}
    is satisfied, where $\left\| \left[\tilde{v}_{{\tilde{k}},x}-z_x,\tilde{v}_{{\tilde{k}},y}-z_y\right]  \right\|_2$ represents the distance between Eve and the $\tilde{k}$-th potential position that can be expressed as %\vspace{-3pt}
    \begin{subequations}\label{eq:tildevk}
    \begin{align}
        \tilde{v}_{{\tilde{k}},x}= & \ b_{\tilde{k}}\cos(\tilde{\theta}_{\text{Tx},{\tilde{k}}})+p_x,
        \label{tildevkx}\\
        \tilde{v}_{{\tilde{k}},y}= & \ b_{\tilde{k}}\sin(\tilde{\theta}_{\text{Tx},{\tilde{k}}})+p_y.%\vspace{-3pt}
        \label{tildevky}
    \end{align}
    \end{subequations}
\end{itemize}
According to Equation \eqref{eq:condelay}, we set the distance $b_{\tilde{k}}$ to %\vspace{-3pt}
\begin{equation}\label{eq:bk}
    b_{\tilde{k}} = \frac{\left(c\tilde{\tau}_{\tilde{k}}\right)^2-\left(z_x-p_x\right)^2-\left(z_y-p_y\right)^2}{2\left(c\tilde{\tau}_{\tilde{k}}-\left(z_x-p_x\right)\cos(\tilde{\theta}_{\text{Tx},{\tilde{k}}})-\left(z_y-p_y\right)\sin(\tilde{\theta}_{\text{Tx},{\tilde{k}}})\right)}. %\vspace{-3pt}
\end{equation}
It can be verified that the condition C2) is satisfied using the distance $b_{\tilde{k}}$ in Equation \eqref{eq:bk} if the condition C1) is met. Hence, the final step of this proof is to show Equation \eqref{eq:condis} holds if $b_{\tilde{k}}$ is set according to Equation \eqref{eq:bk}, under the assumption of $c\tilde{\tau}_{\tilde{k}}\geq\|\bm z- \bm p\|_2$.

For notational convenience, we introduce two variables $\dot{z}_x$ and $\dot{z}_y$ that are defined as $\dot{z}_x\triangleq z_x-p_x$ and $\dot{z}_y\triangleq z_y-p_y$. To prove $\mathrm{(a)}$, considering the assumption $c\tilde{\tau}_{\tilde{k}}\geq\|\bm z- \bm p\|_2>0$, we equivalently need to justify %\vspace{-3pt}
\begin{equation}
    c\tilde{\tau}_{\tilde{k}}-\dot{z}_x\cos(\tilde{\theta}_{\text{Tx},{\tilde{k}}})-\dot{z}_y\sin(\tilde{\theta}_{\text{Tx},{\tilde{k}}})\geq0,\label{eq:deinproof} %\vspace{-3pt}
\end{equation}
which holds since we have  
%$c\tilde{\tau}_{\tilde{k}}\geq\|\bm z- \bm p\|_2> 0$ and  
%\vspace{-3pt}
\begin{equation}
\begin{aligned}
&\left(\dot{z}_x\cos(\tilde{\theta}_{\text{Tx},{\tilde{k}}})+\dot{z}_y\sin(\tilde{\theta}_{\text{Tx},{\tilde{k}}})\right)^2= \\
& \dot{z}_x^2\cos^2(\tilde{\theta}_{\text{Tx},{\tilde{k}}})+\dot{z}_y^2\sin^2(\tilde{\theta}_{\text{Tx},{\tilde{k}}})+2\dot{z}_x\dot{z}_y\sin(\tilde{\theta}_{\text{Tx},{\tilde{k}}})\cos(\tilde{\theta}_{\text{Tx},{\tilde{k}}})\\
&\overset{\mathrm{(c)}}{=} \dot{z}_x^2+\dot{z}_y^2-\left(\dot{z}_x\sin(\tilde{\theta}_{\text{Tx},{\tilde{k}}})-\dot{z}_y\cos(\tilde{\theta}_{\text{Tx},{\tilde{k}}})\right)^2\\
&\leq \dot{z}_x^2+\dot{z}_y^2 \overset{\mathrm{(d)}}{\leq} \left(c\tilde{\tau}_{\tilde{k}}\right)^2.\label{eq:ineqinproof} %\vspace{-3pt}
\end{aligned}
\end{equation}
The equality $\mathrm{(c)}$ can be verified with the property of the trigonometric functions while inequality $\mathrm{(d)}$ holds according to the assumption $c\tilde{\tau}_{\tilde{k}}\geq\|\bm z- \bm p\|_2$. In addition, we can re-express $c\tilde{\tau}_{\tilde{k}}-b_{\tilde{k}}$ as %\vspace{-3pt}
\begin{equation}
\frac{\left(c\tilde{\tau}_{\tilde{k}}\right)^2+\dot{z}_x^2+\dot{z}_y^2-2c\tilde{\tau}_{\tilde{k}}\left(\dot{z}_x\cos(\tilde{\theta}_{\text{Tx},{\tilde{k}}})+\dot{z}_y\sin(\tilde{\theta}_{\text{Tx},{\tilde{k}}})\right)}{2\left(c\tilde{\tau}_{\tilde{k}}-\dot{z}_x\cos(\tilde{\theta}_{\text{Tx},{\tilde{k}}})-\dot{z}_y\sin(\tilde{\theta}_{\text{Tx},{\tilde{k}}})\right)}, %\vspace{-3pt}
\end{equation}
where the denominator has been proved to be non-negative according to Equation \eqref{eq:deinproof}. Hence, to validate $\mathrm{(b)}$, we need to prove that %\vspace{-3pt}
\begin{equation}
\left(c\tilde{\tau}_{\tilde{k}}\right)^2+\dot{z}_x^2+\dot{z}_y^2-2c\tilde{\tau}_{\tilde{k}}\left(\dot{z}_x\cos(\tilde{\theta}_{\text{Tx},{\tilde{k}}})+\dot{z}_y\sin(\tilde{\theta}_{\text{Tx},{\tilde{k}}})\right)\geq0. \label{eq:noinproof} %\vspace{-3pt}
\end{equation}
Since we have %\vspace{-3pt}
\begin{equation}
\begin{aligned}
&\left(2c\tilde{\tau}_{\tilde{k}}\left(\dot{z}_x\cos(\tilde{\theta}_{\text{Tx},{\tilde{k}}})+\dot{z}_y\sin(\tilde{\theta}_{\text{Tx},{\tilde{k}}})\right)\right)^2\\
&\overset{\mathrm{(e)}}{\leq} 4\left(c\tilde{\tau}_{\tilde{k}}\right)^2 (\dot{z}_x^2+\dot{z}_y^2)\leq\left(\left(c\tilde{\tau}_{\tilde{k}}\right)^2+\dot{z}_x^2+\dot{z}_y^2\right)^2, %vspace{-3pt}
\end{aligned}
\end{equation}
where $\mathrm{(e)}$ follows from Equation \eqref{eq:ineqinproof}, Equation \eqref{eq:noinproof} holds, concluding the proof.

%\vspace{-5pt}
\section{Proof of Lemma \ref{lemma:converge}}\label{sec:proofconverge}
%Given that the number of symbols and transmit antennas is sufficiently large while Equation \eqref{eq:approxwithSAN} holds, when $\xi_k\in\{\bm{\bar\tau}[k],\bm{\bar\theta}_{\text{Tx}}[k],\mathfrak{R}\{{\bm{\bar\gamma}[k]}\},\mathfrak{I}\{{\bm{\bar\gamma}[k]}\}\}$ and $\xi_r\in\{\bm{\bar\tau}[r],\bm{\bar\theta}_{\text{Tx}}[r],\mathfrak{R}\{{\bm{\bar\gamma}[r]}\},\mathfrak{I}\{\bm{\bar\gamma[r]}\}\}$ with $k\neq r$ and $|k-r|\neq K$, the associated elements in $\bm J^{(\bm \eta)}$ are still approximated equal to $0$ (see Appendix \ref{sec:deapprox}). 
%For notational convenience, we introduce $\Lambda$, $O_i$, and $M_i^{(k)}$ with $i=1,2,\cdots,6$ that are defined as $\Lambda\triangleq\frac{\lambda_c}{NT_sd}$, $O_1\triangleq \frac{N(N-1)(2N-1)}{6}$, $O_2\triangleq \frac{N(N-1)}{2}
$ %$O_3\triangleq \frac{N_t-1}{2}
$% $O_4\triangleq \frac{(N_t-1)(2N_t-1)}{6}$, $O_5=N$, $O_6=1$, $M_1^{(k)} \triangleq \sum_{n=0}^{N-1}n^2e^{-j\frac{2\pi n \delta_{\tau_{k}}}{NT_s}}$, $M_2^{(k)} \triangleq \sum_{n=0}^{N-1}ne^{-j\frac{2\pi n \delta_{\tau_{k}}}{NT_s}}$, $M_3^{(k)} \triangleq \frac{1}{N_t}\sum_{n_t=0}^{N_t-1}n_te^{j\frac{2\pi n_t d \sin\left(\delta_{\theta_{\text{Tx},k}}\right)}{\lambda_c}}$, $M_4^{(k)} \triangleq \frac{1}{N_t}\sum_{n_t=0}^{N_t-1}n_t^2e^{j\frac{2\pi n_t d \sin\left(\delta_{\theta_{\text{Tx},k}}\right)}{\lambda_c}}$, $M_5^{(k)} \triangleq \sum_{n=0}^{N-1}e^{-j\frac{2\pi n \delta_{\tau_{k}}}{NT_s}}$, and $M_6^{(k)} \triangleq \frac{1}{N_t}\sum_{n_t=0}^{N_t-1}e^{j\frac{2\pi n_t d \sin\left(\delta_{\theta_{\text{Tx},k}}\right)}{\lambda_c}}$. 
Proving Lemma \ref{lemma:converge} is equivalent to showing that each element of $\frac{1}{G}{\bm J}^{(\bm \zeta_k)}$ converges a.s. to the corresponding element of $\breve{\bm J}^{(\bm \zeta_k)}$ as $G\rightarrow\infty$. Therefore, according to their definitions with $\frac{\partial u^{(g,n)}}{\partial \bm{\bar\tau}[k]}$ and $\frac{\partial u^{(g,n)}}{\partial \bm{\bar\theta}_{\text{Tx}}[k]}$ presented in Equation \eqref{eq:partial}, we restrict the derivations to %\vspace{-3.5pt}
\begin{equation}
\begin{aligned}
    &\frac{2}{G\sigma^2}\sum_{n=0}^{N-1}\sum_{g=1}^{G} \mathfrak{R}\left\{\left(\frac{\partial  u^{(g,n)}}{\partial \bm{\bar\tau}[k]}\right)^{\mathrm{H}}\frac{\partial  u^{(g,n)}}{\partial \bm{\bar\tau}[k]}\right\}\\
    &\quad\quad\quad\quad\quad\quad\overset{\text{a.s.}}{\longrightarrow}\frac{8\pi^2 O_1O_6}{(\sigma{NT_s})^2}|\bm{\bar\gamma}[k]|^2, \text{ as } G\rightarrow\infty, %\vspace{-5pt}
\end{aligned}
\end{equation}
in Equation \eqref{eq:convproof} as the others can be verified analogously, where $\mathrm{(f)}$ results from the law of large number. 
\begin{figure*}[b] 
\vspace{-5pt}
\hrulefill
\vspace{-3pt}
    \begin{align} \label{eq:convproof}
        &\frac{2}{G\sigma^2}\sum_{n=0}^{N-1}\sum_{g=1}^{G} \mathfrak{R}\left\{\left(\frac{\partial  u^{(g,n)}}{\partial \bm{\bar\tau}[k]}\right)^{\mathrm{H}}\frac{\partial  u^{(g,n)}}{\partial \bm{\bar\tau}[k]}\right\}\nonumber\\
        &=\frac{2}{G\sigma^2}\sum_{n=0}^{N-1}\sum_{g=1}^{G} \mathfrak{R}\left\{\left(j\frac{2\pi \sqrt{N_t} n}{NT_s}\bar\gamma^{\mathrm{H}}_k e^{j\frac{2\pi n\bar\tau_k}{NT_s}}\left(\bm s^{(g,n)}\right)^{\mathrm{H}}\bm\alpha(\bar\theta_{\text{Tx},k}) \right)\left(-j\frac{2\pi \sqrt{N_t} n}{NT_s}\bar\gamma_k e^{-j\frac{2\pi n\bar\tau_k}{NT_s}}\bm\alpha(\bar\theta_{\text{Tx},k})^{\mathrm{H}}\bm s^{(g,n)}\right)\right\}\nonumber\\
        &=\frac{2}{\sigma^2}\sum_{n=0}^{N-1}\mathfrak{R}\left\{\left(\frac{2\pi \sqrt{N_t} n}{NT_s}\right)^2|\bar\gamma_k|^2\bm\alpha(\bar\theta_{\text{Tx},k})^{\mathrm{H}}\left(\frac{1}{G}\sum_{g=1}^{G}\left(\bm s^{(g,n)}\right)\left(\bm s^{(g,n)}\right)^{\mathrm{H}}\right)\bm\alpha(\bar\theta_{\text{Tx},k})\right\}\nonumber\\
        &\xrightarrow[\text{a.s.}]{\mathrm{(f)}} \frac{2}{\sigma^2}\sum_{n=0}^{N-1}\mathfrak{R}\left\{\left(\frac{2\pi  n}{NT_s}\right)^2|\bar\gamma_k|^2\bm\alpha(\bar\theta_{\text{Tx},k})^{\mathrm{H}}\bm\alpha(\bar\theta_{\text{Tx},k})\right\} =\frac{2}{\sigma^2}\sum_{n=0}^{N-1}\left(\frac{2\pi  n}{NT_s}\right)^2|\bm{\bar\gamma}[k]|^2
        = \frac{8\pi^2  O_1O_6}{(\sigma{NT_s})^2}|\bm{\bar\gamma}[k]|^2.
        %\overset{\mathrm{(g)}}{\approx} 0
        %\left\{  
        %     \begin{aligned}
        %     &=\sum_{n=0}^{N-1}\left(\frac{2\pi n}{NT_s}\right)^2|h_r|^{2}, & \text{if } r=k;\\
        %     &\approx 0, &  \text{if } r\neq k.\\    
        %     \end{aligned} 
        %\right.  
    \end{align}

\end{figure*}

\section{Proof of Proposition \ref{prop:errorbound}}\label{sec:prooferrorbound}
%\begin{equation}\label{eq:errorbound}
%    \begin{aligned}
%    &\mathbb{E}\left\{\left(\hat{\bm \zeta}^{(k)}-{\bm \zeta}^{(k)}\right)^{\mathrm{T}}\left(\hat{\bm \zeta}^{(k)}-{\bm \zeta}^{(k)}\right)\right\}\\
%    &\quad\quad  \geq  \operatorname{Tr}\left({\left(\bm J^{(\bm\zeta_k)}\right)^{-1}}\right)\approx \operatorname{Tr}\left({\left(\breve{\bm J}^{(\bm\zeta_k)}\right)^{-1}}\right)\geq V^{(k)},
%    \end{aligned}
%\end{equation}
For the $k$-th true path and the $k$-th fake path, considering the TOAs and AODs of the other paths as well as all the channel coefficients as the nuisance parameters, we can bound the mean squared error of $\rev{\hat{\bm \zeta}_{k}}$ as $\mathbb{E}\left\{\left(\rev{\hat{\bm \zeta}_{k}}-\rev{{\bm \zeta}_{k}}\right)^{\mathrm{T}}\left(\rev{\hat{\bm \zeta}_{k}}-\rev{{\bm \zeta}_{k}}\right)\right\} \geq  \operatorname{Tr}\left({\left(\bm J^{(\bm\zeta_k)}\right)^{-1}}\right)$ according to \cite{Scharf}. Following from Lemma \ref{lemma:converge}, if $G\rightarrow\infty$, we have $G\operatorname{Tr}\left({\left(\bm J^{(\bm\zeta_k)}\right)^{-1}}\right)\overset{\text{a.s.}}{\longrightarrow}\operatorname{Tr}\left({\left(\breve{\bm J}^{(\bm\zeta_k)}\right)^{-1}}\right)$. Hence, for any real $\psi>0$, there always exists a positive integer $\mathcal{G}$ such that when $G\geq\mathcal{G}$, we have $G\operatorname{Tr}\left({\left(\bm J^{(\bm\zeta_k)}\right)^{-1}}\right)>\operatorname{Tr}\left({\left(\breve{\bm J}^{(\bm\zeta_k)}\right)^{-1}}\right)-\psi$ with probability of 1. Then, the final step is prove $\operatorname{Tr}\left({\left(\breve{\bm J}^{(\bm\zeta_k)}\right)^{-1}}\right)\geq\Xi^{(k)}$.

Denoting by $\rho^{(k)}_i$ the $i$-th eigenvalue of $\breve{\bm J}^{(\bm\zeta_k)}$ with $i=1,2,3,4$,
we have the following lower bound on $\operatorname{Tr}\left({\left(\breve{\bm J}^{(\bm\zeta_k)}\right)^{-1}}\right)$, %\vspace{-5pt}
\begin{equation}
\operatorname{Tr}\left({\left(\breve{\bm J}^{(\bm\zeta_k)}\right)^{-1}}\right) = \sum_{i=1}^4\frac{1}{\rho^{(k)}_i}\overset{\mathrm{(g)}}{\geq} \Xi^{(k)}\triangleq4\sqrt[4]{\frac{1}{\prod_{i=1}^4\rho^{(k)}_i}}, %\vspace{-3.5pt}
\end{equation}
where $\mathrm{(g)}$ follows from the inequality of arithmetic and geometric means.
Then, leveraging Equation \eqref{eq:L}, the equality that $\prod_{i=1}^4\rho^{(k)}_i=\operatorname{det}\left(\breve{\bm J}^{(\bm\zeta_k)}\right)$, and the assumption of $\delta_{\gamma_k}=\delta_{\gamma_k,\min}=0$ yields the desired statement.

\vspace{-9pt}
\section{Proof of Corollary \ref{coro:errorbound}}\label{sec:proofserrorbound}

\begin{comment}
    
\begin{figure*}[b]
\hrulefill
\begin{equation}\label{eq:traceJzeta}
    \begin{aligned}
        &\Xi^{(k)}\\ &=\frac{\lambda_c\sigma^2}{2\pi^2d|\gamma_k|^2|\cos(\theta_{\text{Tx},k})\cos(\tilde{\theta}_{\text{Tx},k})|}\sqrt[4]{\underbrace{\frac{(NT_s)^4}{N_t^4\left(\left(O_1O_6+\mathfrak{R}\left\{M_1M_6\right\}\right)\left(O_4O_5+\mathfrak{R}\left\{M_4M_5\right\}\right)-\left(O_2O_3+\mathfrak{R}\left\{M_2M_3\right\}\right)^2\right)}}_{\text{ }\Xi^{k}_1}}
        \\
        &\times \sqrt[4]{\underbrace{\frac{1}{\left(O_1O_6-\mathfrak{R}\left\{M_1M_6\right\}\right)\left(O_4O_5-\mathfrak{R}\left\{M_4M_5\right\}\right)-\left(O_2O_3-\mathfrak{R}\left\{M_2M_3\right\}\right)^2}}_{\text{ }\Xi^{k}_2}}.
    \end{aligned}
    \end{equation}
\end{figure*}
\end{comment}

Under the assumption A1, the lower bound with $\Xi^{(k)}$ has been derived in Proposition \ref{prop:errorbound}. Given the assumptions A2 and A3, Equation \eqref{eq:boundonMiMj} holds so we have the following inequalities, \vspace{-10pt}
\begin{subequations}\label{eq:boundonMiMjOiOj}
    \begin{align}
        2O_1O_6-\epsilon<\mathfrak{R}\left\{M_1^{(k)}M_6^{(k)}\right\}+O_1O_6,\\
        2O_4O_5-\epsilon<\mathfrak{R}\left\{M_4^{(k)}M_5^{(k)}\right\}+O_4O_5\\
        \mathfrak{R}\left\{M_2^{(k)}M_3^{(k)}\right\}+O_2O_3<2O_2O_3+\epsilon. \vspace{-3pt}
    \end{align}
\end{subequations}
\begin{figure*}[b]
\vspace{-5pt}
\hrulefill
\vspace{-3pt}
\begin{equation}\label{eq:Xi1pos}
\begin{aligned}
    &\left(\mathfrak{R}\left\{M_1^{(k)}M_6^{(k)}\right\}+O_1O_6\right) \left(\mathfrak{R}\left\{M_4^{(k)}M_5^{(k)}\right\}+O_4O_5\right)-\left(\mathfrak{R}\left\{M_2^{(k)}M_3^{(k)}\right\}+O_2O_3\right)^2\\
    >&\left(2O_1O_6-\epsilon\right) \left(2O_4O_5-\epsilon\right)-\left(2O_2O_3+\epsilon\right)^2 = 4\left(O_1O_6O_4O_5-\left(O_2O_3\right)^2\right)-2\epsilon(O_1O_6+2O_2O_3+O_4O_5)=0.
    \end{aligned}
\end{equation}
\end{figure*}With the definition of $\epsilon$, $\Xi^{(k)}_1$ in Equation \eqref{eq:lowerboundXi} is a positive value according to Equation \eqref{eq:Xi1pos} so %\vspace{-3pt}
\begin{equation}
\begin{aligned}\label{eq:lowerboundXi1}
    \Xi^{(k)}_1\geq&\frac{(NT_s)^4}{\left(O_1O_6+\mathfrak{R}\left\{M_1M_6\right\}\right)\left(O_4O_5+\mathfrak{R}\left\{M_4M_5\right\}\right)}\\
    \overset{\mathrm{(h)}}{\geq}&\frac{(NT_s)^4}{4O_1O_4O_5O_6} %\vspace{-3pt}
    \end{aligned}
\end{equation}
holds, where $\mathrm{(h)}$ follows from Equation \eqref{eq:boundM}. In addition, it can be verified that $\breve{\bm J}^{(\bm\zeta_k)}$ is a positive semidefinite matrix so $\operatorname{det}\left(\breve{\bm J}^{(\bm\zeta_k)}\right)\geq 0$ and thus $\Xi^{(k)}_2\geq0$ according to Equation \eqref{eq:Xi1pos}. Then, we have %\vspace{-2pt}
\begin{equation}\label{eq:lowerboundXi2}
    \Xi^{(k)}_2\geq\frac{1}{\left(O_1O_6-\mathfrak{R}\left\{M_1M_6\right\}\right)\left(O_4O_5-\mathfrak{R}\left\{M_4M_5\right\}\right)},
\end{equation}
where the denominator of the right-hand side can be bounded according to \vspace{-3pt}
\begin{equation}\label{eq:upperboundO1O6}
    \begin{aligned}
    &O_1O_6-\mathfrak{R}\left\{M_1M_6\right\}\\
    =&\mathfrak{R}\left\{O_1O_6-M_1M_6\right\}\\
    \leq&|O_1O_6-M_1M_6|\\
    =&\left|\frac{1}{N_t}\sum_{n=0}^{N-1}\sum_{n_t=0}^{N_t-1}n^2\left(1-e^{j2\pi\left(\frac{ n_t d \sin\left(\delta_{\theta_{\text{Tx},k}}\right)}{\lambda_c}-\frac{n\delta_{\tau_k}}{NT_s}\right)}\right)\right|\\
    \overset{\mathrm{(i)}}{\leq}&\frac{1}{N_t}\sum_{n=0}^{N-1}\sum_{n_t=0}^{N_t-1}n^2\left|\left(1-e^{j2\pi\left(\frac{ n_t d \sin\left(\delta_{\theta_{\text{Tx},k}}\right)}{\lambda_c}-\frac{n\delta_{\tau_k}}{NT_s}\right)}\right)\right|\\
    \overset{\mathrm{(j)}}{\leq}&\frac{3\pi}{N_t}\sum_{n=0}^{N-1}\sum_{n_t=0}^{N_t-1}n^2\left|\frac{ n_t d \sin\left(\delta_{\theta_{\text{Tx},k}}\right)}{\lambda_c}-\frac{n\delta_{\tau_k}}{NT_s}\right|\\
    \overset{\mathrm{(k)}}{=}&\frac{3\pi}{N_t}\sum_{n=0}^{N-1}\left(\frac{n^3\delta_{\tau_k}}{NT_s}+\sum_{n_t=1}^{N_t-1}n^2\left(\frac{ n_t d \sin\left(\delta_{\theta_{\text{Tx},k}}\right)}{\lambda_c}-\frac{n\delta_{\tau_k}}{NT_s}\right)\right)\\
    =&\frac{\pi N(N-1)(2N-1)(N_t-1)d\sin\left(\delta_{\theta_{\text{Tx},k}}\right)}{4\lambda_c}\\
    & -\frac{3\pi N(N-1)^2(N_t-2)\delta_{\tau_k}}{4N_tT_s}, \vspace{-5pt}
    \end{aligned}
\end{equation}
and \vspace{-3pt}
\begin{equation}\label{eq:upperboundO4O5}
    \begin{aligned}
    O_4O_5-\mathfrak{R}\left\{M_4M_5\right\}\leq&\frac{3\pi NN_t(N_t-1)^2d\sin\left(\delta_{\theta_{\text{Tx},k}}\right)}{4\lambda_c} \\&-\frac{\pi (N_t-1)(2N_t-1)(N-1)\delta_{\tau_k}}{4T_s}. \vspace{-3pt}
    \end{aligned}
\end{equation}
Herein, it can be verified with some algebra that ${\mathrm{(i)}}$, ${\mathrm{(j)}}$, and ${\mathrm{(k)}}$ result from the triangle inequality, the assumption A4, and the assumption A5, respectively, while Equation \eqref{eq:upperboundO4O5} can be derived analogously. By leveraging Equations \eqref{eq:lowerboundXi2}, \eqref{eq:upperboundO1O6} and \eqref{eq:upperboundO4O5}, the quantity $\Xi^{(k)}_2$ can be further bounded as provided in Equation \eqref{eq:flowerboundXi2}, where  ${\mathrm{(l)}}$ follows from the fact that $N,N_t\geq 2$ holds for MISO OFDM systems. Then, substituting Equations \eqref{eq:lowerboundXi1} and \eqref{eq:flowerboundXi2} into Equation \eqref{eq:lowerboundXi} simplifies $\Xi^{(k)}$ into $\aleph^{(k)}$ shown in Equation \eqref{eq:lowerboundaleph}. Since the lower bound $\aleph^{(k)}$ holds for any $\delta_{\tau_k}$ and $\delta_{\theta_{\text{Tx},k}}$ that satisfy assumptions A1-A5, we can set $ \delta_{\tau_k}=\frac{\sin\left(\delta_{\theta_{\mathrm{Tx},{k}}}\right)}{(N-1)\Lambda}$ for $\aleph^{(k)}$ so that Equation \eqref{eq:slowerboundXi} can be verified with some algebra.
\begin{figure*}[b]
\vspace{-5pt}
\hrulefill
\vspace{-3pt}
\begin{equation}\label{eq:flowerboundXi2}
\begin{aligned}
    \Xi_2^{(k)}\geq&\frac{4 N_tT_s\Lambda}{\pi(N-1)\left((2N-1)(N_t-1)N_t\sin\left(\delta_{\theta_{\text{Tx},k}}\right)-3\Lambda N(N-1)(N_t-2)\delta_{\tau_k}\right)}\\
    &\times\frac{4 T_s\Lambda}{\pi(N_t-1)\left(3N_t(N_t-1)\sin\left(\delta_{\theta_{\text{Tx},k}}\right)-\Lambda(2N_t-1)(N-1)\delta_{\tau_k}\right)}\\
    \overset{\mathrm{(l)}}{\geq}&\frac{(4 T_s\Lambda)^2}{\pi^2N_tN^2\left((2N_t^2\sin\left(\delta_{\theta_{\text{Tx},k}}\right)-3\Lambda(N-1)(N_t-2)\delta_{\tau_k}\right)\left(3N_t\sin\left(\delta_{\theta_{\text{Tx},k}}\right)-\Lambda(N-1)\delta_{\tau_k}\right)}\vspace{-5pt}
    \end{aligned}
\end{equation}
\vspace{-3pt}
\hrulefill
%\vspace{-2pt}
\begin{equation}\label{eq:lowerboundaleph}
    \begin{aligned}
    \aleph^{(k)}=&\frac{\lambda_c\sigma^2}{\sqrt{2}\pi^\frac{5}{2}d|\gamma_k|^2|\cos(\theta_{\text{Tx},k})\cos(\tilde{\theta}_{\text{Tx},k})|}\\
        &\times\sqrt[4]{\frac{(N\Lambda)^2T_s^6}{O_1O_4O_5O_6N_t\left(2N_t^2\sin\left(\delta_{\theta_{\text{Tx},k}}\right)-3\Lambda(N-1)(N_t-2)\delta_{\tau_k}\right)\left(3N_t\sin\left(\delta_{\theta_{\text{Tx},k}}\right)-\Lambda(N-1)\delta_{\tau_k}\right)}}
    \end{aligned}
    \end{equation}
\end{figure*} 

\vspace{-8pt}
\section{Proof of Proposition \ref{prop:singularFIMbeamformer}}\label{sec:proofsingularFIMbeamformer}
Since we assume that Eve knows Alice's \revis{precoder} structure, the associated noise-free observation is defined as   \vspace{-3pt}
\begin{equation}
     \iota^{(g,n)} \triangleq \sqrt{N_t} \sum_{k=0}^{K}{\gamma}_ke^{-j\frac{2\pi n{\tau}_k}{NT_s}}\bm\alpha({\theta}_{\text{Tx},k})^{\mathrm{H}} \tilde{\bm s}^{(g,n)}. \vspace{-3pt}
\end{equation}
Let $ \varpi^{(g,n)}\triangleq\sqrt{N_t} \sum_{k=0}^{K}{\gamma}_ke^{-j\frac{2\pi n{\tau}_k}{NT_s}}\bm\alpha({\theta}_{\text{Tx},k})^{\mathrm{H}} {\bm s}^{(g,n)}$. It can be verified that $ \iota^{(g,n)}\rightarrow 2\varpi^{(g,n)}$ as $\bar\delta_{\tau}, \bar\delta_{\theta_{\mathrm{Tx}}}\rightarrow 0$.
To derive the FIM for Eve’s channel estimation with the prior knowledge of the \revis{precoder} structure, we can compute the derivative $\frac{\partial  \iota^{(g,n)}}{\partial \xi_k}$ %$\frac{\partial \bm \iota^{(g,n)}}{\partial {\theta}_{\text{Tx},k}}$, $\frac{\partial \bm \iota^{(g,n)}}{\partial \mathfrak{R}\{{\gamma}_k\}}$, and $\frac{\partial \bm \iota^{(g,n)}}{\partial \mathfrak{I}\{{\gamma}_k\}}$
by replacing ${\bm s}^{(g,n)}$ in Equation \eqref{eq:partial} with $\tilde{\bm s}^{(g,n)}$, where $\xi_k\in\{\tau_k,\theta_{\mathrm{Tx},k},\mathfrak{R}\{{\gamma_k}\},\mathfrak{I}\{{\gamma_k}\}\}$, while derive $\frac{\partial  \iota^{(g,n)}}{\partial \bar\delta_{\tau}}$ and $\frac{\partial  \iota^{(g,n)}}{\partial \bar\delta_{{\theta}_{\text{Tx}}}}$ as follows, \vspace{-5pt}
\begin{subequations}\label{eq:partialbeamformer}
\begin{align}
    \frac{\partial \iota^{(g,n)}}{\partial \bar\delta_{\tau}} &= -j\frac{2\pi {N_t} n}{NT_s}\sum_{k=0}^{K}{\gamma}_k e^{-j\frac{2\pi n \left({\tau}_k+\delta_{\tau}\right)}{NT_s}}{\bm\alpha({\theta}_{\text{Tx},k})^{\mathrm{H}}}\nonumber\\
    &\quad\times \operatorname{diag}\left(\bm \alpha\left(\bar{\delta}_{\theta_\text{Tx}}\right)^{\mathrm{H}} \right)\bm s^{(g,n)},\\
    \frac{\partial \iota^{(g,n)}}{\partial \bar\delta_{{\theta}_{\text{Tx}}}} &=  j\frac{2\pi {N_t} d}{\lambda_c}\sum_{k=0}^{K}{\gamma}_k e^{-j\frac{2\pi n \left({\tau}_k+\delta_{\tau}\right)}{NT_s}}\cos(\bar\delta_{{\theta}_{\text{Tx}}})\nonumber\\
    &\quad\times {\bm\alpha({\theta}_{\text{Tx},k})^{\mathrm{H}}}\operatorname{diag}([0,1,\cdots,N_t-1])\nonumber\\
    &\quad\times\operatorname{diag}\left(\bm \alpha\left(\bar{\delta}_{\theta_\text{Tx}}\right)^{\mathrm{H}} \right)\bm s^{(g,n)}. \vspace{-5pt}
\end{align} 
\end{subequations}
Then, it is straightforward to show that, as 
$\bar\delta_{\tau}, \bar\delta_{\theta_{\mathrm{Tx}}}\rightarrow 0$, \vspace{-4pt}
\begin{subequations}\label{eq:partialbeamformer}
\begin{align}
    \frac{\partial  \iota^{(g,n)}}{\partial \xi_k}&\rightarrow  2\frac{\partial \varpi^{(g,n)}}{\partial \xi_k},\\
    \frac{\partial  \iota^{(g,n)}}{\partial \bar\delta_{\tau}}&\rightarrow \sum_{k=0}^{K} \frac{\partial \varpi^{(g,n)}}{\partial {\tau}_k}, \\
    \frac{\partial  \iota^{(g,n)}}{\partial \bar{\delta}_{\theta_\text{Tx}}}&\rightarrow \sum_{k=0}^{K} \frac{\partial \varpi^{(g,n)}}{\partial {\theta}_{\text{Tx},k}}. \vspace{-5pt}
\end{align} 
\end{subequations}
On the other hand, similar to Equation \eqref{eq:convproof}, it can be verified that for any $\xi_k$ and $\xi_{k^{\prime}}\in\{\tau_{k^{\prime}},\theta_{\mathrm{Tx},k^{\prime}},\mathfrak{R}\{{\gamma_{k^{\prime}}}\},\mathfrak{I}\{{\gamma_{k^{\prime}}}\}\}$, \vspace{-4pt}
\begin{equation}
\begin{aligned}
    \frac{2}{\sigma^2}\sum_{n=0}^{N-1}\sum_{g=1}^{G}\mathfrak{R}\left\{\left(\frac{\partial\varpi^{(g,n)}}{\partial \xi_k}\right)^{\mathrm{H}}\frac{\partial \varpi^{(g,n)}}{\partial \xi_{k^{\prime}}}\right\}\overset{\text{a.s.}}{\longrightarrow} 0, \ k\neq {k^{\prime}} , \vspace{-3pt}
\end{aligned}
\end{equation}
holds when $G,N_t\rightarrow\infty$. Hence, given $\bar\delta_{\tau}, \bar\delta_{\theta_{\mathrm{Tx}}}\rightarrow 0$ and $G,N_t\rightarrow\infty$, for any $\xi_k$, we have  \vspace{-3pt}
\begin{subequations}
\begin{align}
    2\bm J^{(\bm \chi)}_{\bar\delta_{\tau},\xi_k}=2\bm J^{(\bm \chi)}_{\xi_k,\bar\delta_{\tau}}&\rightarrow \bm J^{(\bm \chi)}_{{\tau}_k,\xi_k}, \\
    2\bm J^{(\bm \chi)}_{\bar{\delta}_{\theta_\text{Tx}},\xi_k}=2\bm J^{(\bm \chi)}_{\xi_k,\bar{\delta}_{\theta_\text{Tx}}}&\rightarrow \bm J^{(\bm \chi)}_{{\theta}_{\text{Tx},k},\xi_k},   \\
    4\bm J^{(\bm \chi)}_{\bar\delta_{\tau},\bar\delta_{\tau}}&\rightarrow \sum_{k=0}^{K} \bm J^{(\bm \chi)}_{{\tau}_k,{\tau}_k},\\
    4\bm J^{(\bm \chi)}_{\bar{\delta}_{\theta_\text{Tx}},\bar{\delta}_{\theta_\text{Tx}}}&\rightarrow \sum_{k=0}^{K} \bm J^{(\bm \chi)}_{{\theta}_{\text{Tx},k},{\theta}_{\text{Tx},k}},\\
    4\bm J^{(\bm \chi)}_{\bar\delta_{\tau},\bar{\delta}_{\theta_\text{Tx}}}=4\bm J^{(\bm \chi)}_{\bar{\delta}_{\theta_\text{Tx}},\bar\delta_{\tau}}&\rightarrow \sum_{k=0}^{K} \bm J^{(\bm \chi)}_{{\tau}_k,{\theta}_{\text{Tx},k}}=\sum_{k=0}^{K} \bm J^{(\bm \chi)}_{{\theta}_{\text{Tx},k},{\tau}_k}. \vspace{-3pt}
    %{\sqrt{N_t}}\frac{\partial \bm \iota^{(g,n)}}{\partial \bar\delta_{\tau}}&\rightarrow \sum_{k=0}^{K} \frac{\partial \bm u^{(g,n)}}{\partial {\tau}_k}, \frac{1}{1+\sqrt{N_t}}\frac{\partial \bm \iota^{(g,n)}}{\partial \bar\delta_{\tau}}\rightarrow  \frac{\partial \bm u^{(g,n)}}{\partial {\tau}_k},\\
    %\frac{1}{\sqrt{N_t}}\frac{\partial \bm \iota^{(g,n)}}{\partial \bar{\delta}_{\theta_\text{Tx}}}&\rightarrow \sum_{k=0}^{K} \frac{\partial \bm u^{(g,n)}}{\partial {\theta}_{\text{Tx},k}}, \frac{1}{1+\sqrt{N_t}}\frac{\partial \bm \iota^{(g,n)}}{\partial \bar{\delta}_{\theta_\text{Tx}}}\rightarrow  \frac{\partial \bm u^{(g,n)}}{\partial {\theta}_{\text{Tx},k}}.
\end{align} 
\end{subequations}
Hence, two rows of $\bm J^{(\bm \chi)}$ are linearly dependent on the other rows as $\bar\delta_{\tau}, \bar\delta_{\theta_{\mathrm{Tx}}}\rightarrow 0$ and $G,N_t\rightarrow\infty$, {\it i.e.,} \vspace{-3pt}
\begin{subequations}
\begin{align}
        &2\left[\bm J^{(\bm \chi)}_{\bar\delta_{\tau},\tau_0}, \bm J^{(\bm \chi)}_{\bar\delta_{\tau},\theta_{\text{Tx},0}},\cdots,\bm J^{(\bm \chi)}_{\bar\delta_{\tau},\bar{\delta}_{\theta_\text{Tx}}}\right]\nonumber\\
        \rightarrow&\sum_{k=0}^{K} \left[\bm J^{(\bm \chi)}_{\tau_k,\tau_0}, \bm J^{(\bm \chi)}_{\tau_k,\theta_{\text{Tx},0}},\cdots,\bm J^{(\bm \chi)}_{\tau_k,\bar{\delta}_{\theta_\text{Tx}}}\right],\\
        &2\left[\bm J^{(\bm \chi)}_{\bar{\delta}_{\theta_\text{Tx}},\tau_0}, \bm J^{(\bm \chi)}_{\bar{\delta}_{\theta_\text{Tx}},\theta_{\text{Tx},0}},\cdots,\bm J^{(\bm \chi)}_{\bar{\delta}_{\theta_\text{Tx}},\bar{\delta}_{\theta_\text{Tx}}}\right]\nonumber\\
        \rightarrow&\sum_{k=0}^{K} \left[\bm J^{(\bm \chi)}_{\theta_{\text{Tx},k},\tau_0}, \bm J^{(\bm \chi)}_{\theta_{\text{Tx},k},\theta_{\text{Tx},0}},\cdots,\bm J^{(\bm \chi)}_{\theta_{\text{Tx},k},\bar{\delta}_{\theta_\text{Tx}}}\right], \vspace{-5pt}
\end{align} 
\end{subequations}
which leads to the desired statement.%\vspace{-5pt}

%\vfill
%\pagebreak
%\bibliographystyle{IEEEtran}
%\bibliographystyle{thubib}
%\bibliography{ref}

\renewcommand*{\bibfont}{\footnotesize}
\printbibliography

\end{document}